\documentclass[11pt]{article}

\usepackage{amsmath,amssymb,amsthm}
\usepackage[margin=1.1in]{geometry}
\usepackage{xcolor}
\usepackage[colorlinks=true,linkcolor=blue!60!black,citecolor=blue!60!black,urlcolor=blue!60!black]{hyperref}

\newcommand{\C}{\mathbb{C}}
\newcommand{\Z}{\mathbb{Z}}
\newcommand{\PP}{\mathbb{P}}
\newcommand{\Pd}{\check{\mathbb{P}}}
\newcommand{\Gr}{\mathrm{Gr}}
\newcommand{\Con}{\mathrm{Con}}
\newcommand{\CC}{\mathrm{CC}}
\newcommand{\Eu}{\mathrm{Eu}}
\newcommand{\Sing}{\mathrm{Sing}}
\newcommand{\dc}{\mathrm{dc}}
\newcommand{\rk}{\mathrm{rank}}
\newcommand{\hd}{\check{h}}
\newcommand{\Bbez}{B_{\mathrm{bez}}}
\newcommand{\flag}[1]{\par\smallskip\noindent\fbox{\parbox{0.96\linewidth}{\textbf{Flag.} #1}}\smallskip\par}

\theoremstyle{plain}
\newtheorem{theorem}{Theorem}[section]
\newtheorem{proposition}[theorem]{Proposition}
\newtheorem{lemma}[theorem]{Lemma}
\newtheorem{corollary}[theorem]{Corollary}
\newtheorem*{thmA}{Theorem A}
\newtheorem*{thmB}{Theorem B}
\newtheorem*{thmC}{Theorem C}
\newtheorem*{thmD}{Theorem D}
\theoremstyle{definition}
\newtheorem{definition}[theorem]{Definition}

\theoremstyle{remark}
\newtheorem{remark}[theorem]{Remark}

\newtheorem{hypothesis}[theorem]{Hypothesis}

\title{The Exact Reach of Conormal Invariants in Determinantal Complexity:\\ a Quadratic No-Go Theorem}
\author{Karthik Sheshadri\\Independent AI Researcher\thanks{San Jose, CA. \texttt{[karthiksheshadri217@gmail.com]}.}}
\date{June 2026 \\}

\begin{document}
\maketitle

\begin{abstract}
We determine the exact power of the polar (conormal) method for determinantal complexity lower bounds, the method behind the companion bound $\dc\bigl(\sum_{i=1}^N x_i^N\bigr)\ge \bigl(\tfrac{1}{4e}-o(1)\bigr)N^2$ \cite{She26a}. The result is a matched pair: the reading cost of the method is computed exactly, and the invariants it can read are bounded exactly, and together these cap every bound the method can produce at $O(dN)$ --- quadratic, $\Theta(N^2)=\Theta(d^2)$, on the diagonal $d=N$.

On the reading side, the corank-one kernel incidence attached to a size-$m$ determinantal representation in $N$ variables is identified with the conormal variety of the generic determinant. Its class is computed in closed form (an excess-one degeneracy-locus computation), yielding an exact polar intersection number $T(N,m)$ with closed binomial and rational generating forms, anchored against classical projective duality: $T(3,m)=m(m-1)$, $T(4,m)=m(m-1)^2$, and $T(5,5)=220=320-2\cdot 50$, the last a Teissier-consistency identity for the $50$-nodal generic determinantal quintic threefold. The multihomogeneous B\'ezout estimate used in the companion papers is shown to be loose by exactly the factor $C_N=\tfrac{N-1}{2N-3}\to\tfrac12$ at the count level and tight at the per-root level: $T(N,m)^{1/(N-1)}=(4e+o(1))\,m/N$. A Bertini argument exhibits the generic determinantal hypersurface as a witness attaining $T(N,m)$, so the reading cost is pinned from both sides.

On the invariant side, we prove a singularity-blind ceiling: for an arbitrary degree-$d$ hypersurface $X\subset\PP^{N-1}$ --- arbitrary singularities, arbitrary reducibility --- every component of the characteristic cycle $\CC(\mathbf{1}_X)$ satisfies $m_S\,\delta_i(\Con\bar S)\le 8(d-1)^{N-1}+O(N)$. The proof has three independent inputs: a global Milnor-fibre Euler-characteristic envelope obtained by Morse theory and refined B\'ezout; the identification of the transform from generic-slice Euler characteristics to conormal multidegrees as a signed second-difference operator of $\ell^1$-mass exactly $4$; and Kashiwara positivity of characteristic cycles of perverse sheaves, which forbids cancellation between components. The same ceiling, with one explicitly flagged conditional constant, covers the vanishing-cycle and Milnor-class variants.

Coupling the two sides yields the no-go theorem: every lower bound derivable by reading any characteristic-cycle-type invariant of $V(f)$ through any kernel-corank incidence with intersection-theoretic extraction is $O(dN)$ (corank one unconditionally; corank $\ge2$ under an explicit specialization hypothesis). Combined with the companion lower bound, the engine's exact reach is $\Theta(N^2)$, with the constant pinned between $1/(4e)$ and $e^{O(1)}$. The two exits left genuinely open are scheme-theoretic conormal data (Hilbert functions and regularity of the dual variety) and non-conormal reading mechanisms.
\end{abstract}

\tableofcontents

\section{Introduction}\label{sec:intro}

\subsection{Determinantal complexity and the quadratic frontier}

The \emph{determinantal complexity} $\dc(f)$ of a polynomial $f\in\C[x_1,\dots,x_N]$ is the least $m$ such that
\[
f(x)\;=\;\det M(x),\qquad M(x)=A_0+\sum_{i=1}^{N}x_iA_i,\quad A_i\in\C^{m\times m}.
\]
Valiant's program \cite{Val79} makes the growth of $\dc(\mathrm{per}_n)$ the central question of algebraic complexity (in this paragraph only, $n$ is the size of the permanent matrix): $\dc(\mathrm{per}_n)=n^{\omega(\log n)}$ separates $\mathrm{VP}_{ws}$ from $\mathrm{VNP}$. The known lower bounds tell a striking story of saturation at the quadratic scale. Mignon and Ressayre \cite{MR04} proved $\dc(\mathrm{per}_n)\ge n^2/2$ via the rank of the Hessian; Cai, Chen and Li \cite{CCL10} extended this to all characteristics $\ne 2$; Landsberg, Manivel and Ressayre \cite{LMR13} recovered quadratic-type bounds from the geometry of dual varieties, and Alper, Bogart and Velasco \cite{ABV17} obtained linear bounds from the codimension of the singular locus, in each case with an analysis suggesting --- but not proving --- that the technique in question cannot go further. On the upper side, Grenet's construction \cite{Gre11} gives $\dc(\mathrm{per}_n)\le 2^n-1$. The gap between $n^2/2$ and $2^n$ has not moved in two decades.

The companion paper \cite{She26a} introduces a \emph{polar method}: a determinantal representation of size $m$ of $f$ induces a corank-one kernel incidence in $\PP^{N-1}\times\PP^{m-1}\times\PP^{m-1}$, the top polar degree $\delta_{\mathrm{top}}$ of the hypersurface $X=V(f)$ is read off as a count of solutions of that incidence, and a multihomogeneous B\'ezout estimate converts the count into the lower bound
\begin{equation}\label{eq:positive}
\dc\Bigl(\sum_{i=1}^{N}x_i^{N}\Bigr)\;\ge\;\Bigl(\frac{1}{4e}-o(1)\Bigr)N^{2},
\end{equation}
together with a stronger border variant. (The companion papers write this bound with $n$ for the number of variables; in this paper $n:=N-1$ is reserved throughout for the dimension of the ambient projective space, so we restate the bound in $N$ and refer to $d=N$ as the \emph{power-sum diagonal}.) The bound \eqref{eq:positive} again sits at the quadratic scale. The purpose of the present paper is to explain why, with proofs rather than heuristics: we show that the quadratic scale is not an artifact of the specific invariant ($\delta_{\mathrm{top}}$), the specific family (the Fermat), or the specific estimate (B\'ezout), but is the \emph{exact reach} of the entire class of methods that read characteristic-cycle data of $V(f)$ through kernel incidences of the representation --- in the precise sense fixed by Definitions~\ref{def:invariant} and~\ref{def:reading}: invariants of characteristic-cycle type, injected into solution counts of the kernel incidence, with the counts bounded by intersection theory. Remark~\ref{rem:scope} records exactly what falls outside this sense.

Such ``limits of the method'' theorems have precedent and value. The dual-variety analysis of \cite{LMR13} and the singular-locus analysis of \cite{ABV17} each identify a wall for one invariant. The theorem proved here subsumes both (Section~\ref{sec:barriers}): the dimension of the dual variety and the codimension of the singular locus are coarse shadows of the full characteristic cycle, and the wall is the same wall, now established for the finest invariant of the conormal type and for every reading mechanism of the kernel-incidence type in the sense of Definition~\ref{def:reading}, with matching constants from both sides.

\subsection{The two halves of the argument}

The no-go theorem couples two quantitative statements that are of independent interest.

\textbf{The reading cost is exact (Fork B, Section~\ref{sec:reading}).} Let $M(x)=\sum_{i=1}^N x_iA_i$ be a generic $N$-dimensional linear space of $m\times m$ matrices, and let
\[
Z_1=\{(x,v,u)\in\PP^{N-1}\times\PP^{m-1}\times\PP^{m-1}\;:\;M(x)v=0,\ u^{\mathsf T}M(x)=0\}
\]
be the corank-one kernel incidence. The first structural observation (Lemma~\ref{lem:condet}) is that $Z_1$ \emph{is} the conormal variety of the determinant hypersurface: the assignment $(x,v,u)\mapsto(x,[u^{\mathsf T}A_\bullet v])$ identifies $Z_1$ birationally with $\Con(V(\det M))$, because the differential of the determinant at a corank-one matrix is the rank-one form $\mathrm{tr}(\mathrm{adj}(M)\,dM)\propto u^{\mathsf T}(dM)v$. The reading therefore counts honest conormal geometry. We compute its class exactly. The system $u^{\mathsf T}M(x)=0$ on $\{M(x)v=0\}$ carries one syzygy ($u^{\mathsf T}(Mv)\equiv 0$), so $Z_1$ is a degeneracy locus of excess one; the correct class is the top Chern class of an explicit kernel bundle, and the resulting polar count
\[
T(N,m)\;=\;\bigl[h^{N-1}a^{m-1}b^{m-1}\bigr]\ \frac{(h+a)^m(h+b)^m(a+b)^{N-2}}{h+a+b}
\]
admits a double-binomial closed form (Theorem~\ref{thm:T}). Three classical anchors confirm it on the nose: $T(3,m)=m(m-1)$ and $T(4,m)=m(m-1)^2$ are the dual degrees of the (smooth) generic determinantal plane curve and surface, and $T(5,5)=220$ is the dual degree of the generic determinantal quintic threefold, equal to the smooth value $5\cdot 4^3=320$ minus the Teissier drop $\mu+\mu^{(1)}=2$ at each of its $50$ ordinary nodes (a Thom--Porteous count) (Proposition~\ref{prop:anchors}). The B\'ezout estimate of the companion papers is then compared exactly: the count-level ratio is $T/\Bbez\to C_N=\frac{N-1}{2N-3}\to\frac12$ (exactly $\tfrac23$ for every $m$ at $N=3$), while the per-root cost --- the only quantity the lower-bound pipeline ever uses --- is unchanged: $T(N,m)^{1/(N-1)}=(4e+o(1))\,m/N$ (Theorem~\ref{thm:asymp}). Finally, a Bertini argument shows the generic determinantal hypersurface \emph{attains} $T(N,m)$ as an honest transverse count (Proposition~\ref{prop:witness}), so the reading cost is pinned from both sides: no refinement of the intersection theory can lower it by more than a constant that the $(N{-}1)$-st root annihilates.

\textbf{The invariants are capped (Fork A, Section~\ref{sec:ceiling}).} One might hope to escape the quadratic wall by reading a \emph{finer} invariant than $\delta_{\mathrm{top}}$: the full multidegree profile of the characteristic cycle $\CC(\mathbf{1}_X)$, stratum by stratum, distinguishes (for example) the three-nodal from the two-cuspidal plane curve of the same degree and the same $\delta_{\mathrm{top}}$, and one might engineer singular families whose stratum data grows super-quadratically after root extraction. We prove this is impossible: \emph{fineness was never the binding constraint; magnitude is.} For an arbitrary degree-$d$ hypersurface $X\subset\PP^{N-1}$,
\[
m_S\,\delta_i(\Con\bar S)\;\le\;8(d-1)^{N-1}+8N
\qquad\text{for every stratum $S$ and every multidegree slot $i$}
\]
(Theorem~\ref{thm:ceiling}), with no hypothesis whatsoever on the singularities. The proof chain is short and each link is independently checkable. (1) \emph{Envelope:} the global Milnor fibre $F=\{f=1\}$ is smooth affine for any homogeneous $f$ and is a $d$-sheeted cover of $\PP^{N-1}\setminus X$; Morse theory with a generic linear pencil and Fulton's refined B\'ezout theorem give $|\chi(X\cap L_k)|\le 2(d-1)^k+2k+2$ for generic linear sections of every dimension $k$, uniformly over all singularity structures (Proposition~\ref{prop:envelope}). (2) \emph{Transform:} the multidegrees of $\CC(\varphi)$ are recovered from the generic-slice Euler characteristics by the signed second difference
\[
\Delta_i(\varphi)=(-1)^i\bigl(\chi_{n-i}-2\chi_{n-i-1}+\chi_{n-i-2}\bigr),\qquad n=N-1,
\]
an operator of $\ell^1$-row-mass exactly $4$, independent of $N$ (Theorem~\ref{thm:transform}); the exponential coefficient mass one might fear from Chern-class conversions is an artifact of routing through the CSM basis and disappears when the slice data talks to the multidegrees directly. (3) \emph{Positivity:} a reduced hypersurface is a local complete intersection, so $\mathbb{Q}_X[\dim X]$ is perverse and Kashiwara positivity makes $(-1)^{\dim X}\CC(\mathbf{1}_X)$ effective --- no large component can hide inside small totals by cancellation. The same ceiling shape holds for the nearby- and vanishing-cycle variants $\CC(\varphi_f)$ and for Milnor classes, where exactly one constant, in the deeply non-isolated non-reduced case, rests on a L\^e-number envelope of Massey flagged in Remark~\ref{rem:massey}; this affects no exponent anywhere.

\subsection{Main theorems}

Write $n=N-1$ for the dimension of the ambient projective space, $d=\deg f$, and let $X=V(f)\subset\PP^{n}$.

\begin{thmA}[Exact reading cost; Theorems~\ref{thm:class}, \ref{thm:T}, Propositions~\ref{prop:anchors}, \ref{prop:witness}]
For generic $M$, the incidence $Z_1$ is the conormal variety of $V(\det M)$, its class is
$[Z_1]=(h+a)^m\sum_{k=0}^{m-1}(-1)^ka^k(h+b)^{m-1-k}$,
and the polar count $T(N,m)=\int_{Z_1}(a+b)^{N-2}$ has the closed forms of Theorem~\ref{thm:T}. The generic determinantal hypersurface attains $T(N,m)$ as a transverse, reduced count.
\end{thmA}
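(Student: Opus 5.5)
Theorem A bundles four claims, and I would prove them in the order conormal identification, class, closed forms, witness. Throughout, $h,a,b$ are the hyperplane classes on the three factors of $\PP^{N-1}\times\PP^{m-1}\times\PP^{m-1}$.

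\textbf{Conormal identification.} Let $D=V(\det M)\subset\PP^{N-1}$. Bertini applied to the pullback of the determinantal variety shows that, for generic $M$, the linear space meets the corank-$\ge2$ locus in the expected codimension $4$. Hence $D$ is reduced, and at a general point $M(x)$ has corank exactly one with one-dimensional kernel $v$ and cokernel $u$. Jacobi's formula gives $d\det=\mathrm{tr}(\mathrm{adj}(M)\,dM)$ with $\mathrm{adj}(M)\propto vu^{\mathsf T}$. So the differential of $\det$ at $x$ is proportional to $\xi=(u^{\mathsf T}A_iv)_i$, which is nonzero at general points by genericity. The map $(x,v,u)\mapsto(x,[\xi])$ is therefore an isomorphism over the corank-one smooth locus onto an open dense subset of $\Con(D)$. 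It remains to check that $Z_1$ has no extra components over the corank-$\ge2$ locus. Those fibres are products of projective spaces of dimension $r-1$ each, lying over a locus of codimension $r^2$, so they are too small to form a component of the expected dimension $N-2$.

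\textbf{The class of $Z_1$.} Set $W=\{M(x)v=0\}$. It is the zero scheme of a section of $\mathcal O(h+a)^{\oplus m}$, is of the expected codimension for generic $M$, and has class $(h+a)^m$. On $W$, the map $u\mapsto u^{\mathsf T}M(x)$ is a section of $\mathcal O(h+b)\otimes Q$, where $Q=\C^m/\langle v\rangle$ is the quotient bundle, of rank $m-1$. It lands in $Q$ rather than in $\C^m$ precisely because of the syzygy $u^{\mathsf T}Mv=0$. Its zero locus is $Z_1$, of the now-correct codimension $m-1$. By the splitting $\C^m=\mathcal O(-a)+Q$,
\[
c(Q\otimes\mathcal O(h+b))=\frac{(1+h+b)^m}{1+h+b-a}.
\]
Extracting $c_{m-1}$, equivalently $[(h+b)^m/(h+b-a)]$ in degree $m-1$, gives $\sum_k (-1)^k a^k(h+b)^{m-1-k}$ up to reducing modulo $a^m$. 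The sign convention has to be reconciled with the paper's formula, which I would do by symmetry in $u,v$. Multiplying by $(h+a)^m$ gives $[Z_1]$, and then
\[
T(N,m)=\int (a+b)^{N-2}[Z_1].
\]
This agrees with the generating form in the introduction because the geometric series telescopes to $1/(h+a+b)$ modulo $a^m$ and $b^m$.

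\textbf{Closed forms.} Expanding $(h+a)^m(h+b)^m(a+b)^{N-2}(h+a+b)^{-1}$ and extracting the coefficient of $h^{N-1}a^{m-1}b^{m-1}$ yields a double binomial sum. Writing the factor $1/(h+a+b)$ via Cauchy coefficients yields the rational generating form. I treat these manipulations as routine. The sanity checks are $N=3$, where a direct computation should give $m(m-1)$, and the known closed forms of Theorem~\ref{thm:T}.

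\textbf{Witness.} For generic $A_i$, I would show $Z_1$ is smooth of the expected dimension. The universal incidence over the space of all tuples $(A_i)$ is a vector bundle over $\PP\times\PP$, hence smooth, and generic smoothness of its projection to the parameter space gives smoothness of $Z_1$. Then $N-2$ generic $(1,1)$-divisors pulled back via $a+b$ meet $Z_1$ transversely, by Kleiman--Bertini, since the linear system $|\mathcal O(a+b)|$ is basepoint-free. This produces $T(N,m)$ reduced points, which by the identification above are honest polar points of $D$.

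\textbf{Main obstacle.} I expect the hardest step to be the excess-one degeneracy argument: proving rigorously that the section of $Q\otimes\mathcal O(h+b)$ is regular on $W$, including along the corank-$\ge2$ strata, so that the Chern class computes $[Z_1]$ with no residual contribution. I would attack this with a local dimension count on the universal family, the same parameter-space argument used for the witness.
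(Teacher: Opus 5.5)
\paragraph{The gap: the bundle carrying $u^{\mathsf T}M(x)$.} The class computation, which is the core of Theorem~A, uses the wrong bundle. The row vector $u^{\mathsf T}M(x)$ is a covector, and the syzygy $u^{\mathsf T}M(x)v=0$ says that it annihilates $v$. So the section takes values in the \emph{subbundle}
$\mathcal O(h+b)\otimes\langle v\rangle^{\perp}$, where $\langle v\rangle^{\perp}=\ker\bigl((\C^m)^\vee\otimes\mathcal O\to\mathcal O(a)\bigr)\cong Q^\vee$. It does not take values in $\mathcal O(h+b)\otimes Q$ with $Q=\C^m/\langle v\rangle$. If you force it into $\C^m/\C v$ by transposing, the zero locus becomes $\{M^{\mathsf T}u\in\C v\}$. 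Besides $Z_1$, this contains a spurious component of the same dimension over the isotropic locus $v^{\mathsf T}v=0$.

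The correct Chern polynomial is $c(K)=(1+h+b)^m/(1+h+a+b)$. Its degree-$(m-1)$ part is exactly $\sum_k(-1)^ka^k(h+b)^{m-1-k}$, since both multiply by $h+a+b$ to give $(h+b)^m-(-a)^m$. Your bundle gives $(1+h+b)^m/(1+h+b-a)$, whose degree-$(m-1)$ part is $\sum_k a^k(h+b)^{m-1-k}$ with all plus signs. Your claim that it yields the alternating sum is an expansion error.

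This is not a sign convention. For the conic ($N=3$, $m=2$), your class gives $\int(h+a)^2(h+b+a)(a+b)=3+1=4$, whereas the alternating class gives $3-1=2$, the class of a conic. Symmetry in $u,v$ cannot repair this: exchanging the roles of $u$ and $v$ only swaps $a$ and $b$ and leaves the signs positive. Your closing remark, that the series telescopes to $1/(h+a+b)$, holds only for the annihilator bundle, so the proposal is internally inconsistent at this step.

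\paragraph{The rest of the plan.} With $K=\mathcal O(h+b)\otimes\langle v\rangle^{\perp}$ in place of $\mathcal O(h+b)\otimes Q$, the rest follows the paper: Jacobi's formula with $\mathrm{adj}(M)\propto vu^{\mathsf T}$, then $[Y]=(h+a)^m$ times a top Chern class, then Bertini for the witness. Three further remarks.

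\emph{Smoothness.} Your universal-incidence argument is correct and in fact stronger than Lemma~\ref{lem:generic}. The conditions $Mv=0$, $u^{\mathsf T}M=0$ impose $2m-1$ independent linear conditions on $\sum_ix_iA_i$. So generic smoothness makes $Z_1$ (and likewise $Y$) smooth of dimension $N-2$ everywhere, including over $\Sigma_2$. This settles the regularity you flag as the main obstacle. On the smooth variety $Y\times\PP^{m-1}_u$, a section of a rank-$(m-1)$ bundle whose zero scheme is smooth of codimension $m-1$ is regular, and that zero scheme has class $c_{m-1}$.

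\emph{Witness.} The contraction conditions of the reading are the pullbacks $u^{\mathsf T}(\sum_i\ell_iA_i)v$ of hyperplanes of $\Pd^{N-1}$. These form an $N$-dimensional subsystem of $|\mathcal O(a+b)|$, not general members of the full system. That subsystem is base-point-free on the corank-one locus, because there $(u^{\mathsf T}A_iv)_i$ is a nonzero multiple of the gradient. To conclude that the $T(N,m)$ points are honest polar points with $\rk M(x)=m-1$, you still need the $N-2$ general members to miss the preimage of $\Sigma_2$. That preimage has dimension at most $N-3$ by your own fibre count.

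\emph{Closed forms.} The double-binomial forms of Theorem~\ref{thm:T} require the explicit coefficient extraction and the residue manipulation carried out there. Calling them routine understates that work.
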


\begin{thmB}[Per-root exactness; Theorem~\ref{thm:asymp}]
At fixed $N$, $\ T(N,m)/\Bbez(N,m)\to \frac{N-1}{2N-3}$ as $m\to\infty$; for $N\to\infty$, $m/N\to\infty$,
\[
T(N,m)^{1/(N-1)}\;=\;(4e+o(1))\,\frac{m}{N}\;=\;(1+o(1))\,\Bbez(N,m)^{1/(N-1)}.
\]
B\'ezout was loose by an explicit constant and tight in the exponent.
\end{thmB}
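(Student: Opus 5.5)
The plan is to treat both assertions of Theorem~B as coefficient-extraction problems. I work from the closed form of Theorem~\ref{thm:T} together with the class formula $[Z_1]=(h+a)^m\sum_{k=0}^{m-1}(-1)^ka^k(h+b)^{m-1-k}$ of Theorem~\ref{thm:class}.

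I take $\Bbez(N,m)$ to be the companion papers' uncorrected multihomogeneous count, which drops the one redundant equation forced by the syzygy:
\[
\Bbez(N,m)=\bigl[h^{n}a^{m-1}b^{m-1}\bigr](h+a)^m(h+b)^{m-1}(a+b)^{N-2},\qquad n=N-1 .
\]
If the companion normalisation differs, only the bookkeeping below changes. The point is that $\Bbez$ and $T$ differ by replacing $(h+b)^{m-1}$ by the alternating sum $\sum_k(-1)^ka^k(h+b)^{m-1-k}$, which is exactly $(h+b)^m/(h+a+b)$ truncated to the relevant range. Both are therefore finite sums of products of binomial coefficients indexed by how the $h^{n}$ budget is distributed among the three factors, and I write both as explicit double sums.

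\emph{Fixed $N$, $m\to\infty$.} In the double sums only exponents $i,j\le n$ of $h$ drawn from $(h+a)^m$ and $(h+b)^m$ can occur. Each such binomial $\binom{m}{i}$ is $m^i/i!\,(1+O(1/m))$, so both $T$ and $\Bbez$ are polynomials in $m$ of degree $n$. Their leading coefficients are finite sums of the shape
\[
\sum_{i+j+\ell=n}\frac{(\pm1)^{\ell}}{i!\,j!}\binom{N-2}{\cdot},
\]
where $\ell$ records the power taken from the geometric expansion of $1/(h+a+b)$. I would evaluate these by rewriting $m^i/i!$ as the $m\to\infty$ limit of a Beta-integral: after the rescaling $h\mapsto h/m$, extraction in $a,b$ becomes a contour integral and the geometric factor $1/(h+a+b)$ becomes $\int_0^\infty e^{-s(h+a+b)}\,ds$. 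The ratio of leading coefficients then reduces to a ratio of two Beta values. I expect this ratio to be $\frac{n}{2n-1}=\frac{N-1}{2N-3}$, and I will check it against $T(3,m)=m(m-1)$, which forces $\Bbez(3,m)\sim\tfrac32 m^2$, and against $T(4,m)=m(m-1)^2$ from Proposition~\ref{prop:anchors}.

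\emph{Diagonal regime $N\to\infty$, $m/N\to\infty$.} Here I do not try to compute $T$ sharply. Instead I sandwich it, $c(N)\,\Bbez\le T\le \Bbez$ with $c(N)\ge N^{-O(1)}$; $(N-1)$-st roots kill any such factor. The upper bound should follow from positivity: the per-root reading is a transverse count by Proposition~\ref{prop:witness}, and refined B\'ezout bounds a transverse count by the uncorrected Bézout number. For the lower bound I isolate the dominant terms of the alternating sum. A saddle-point analysis of $[h^na^{m-1}b^{m-1}]$ at the critical point $(h:a:b)$ shows that the geometric weights $(a/(h+b))^k$ there are bounded away from $1$, so the alternating tail costs at most a factor polynomial in $N$.

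It then remains to compute $\Bbez(N,m)^{1/(N-1)}$ itself. This I would do by the same saddle point, or simply import it from the companion analysis behind \eqref{eq:positive}. The dominant term is $\binom{m}{n}^2$ times a central-binomial factor from $(a+b)^{N-2}$. That gives $\bigl(e^2m^2/n^2\cdot 4\bigr)^{n/2}$ up to subexponential factors, whose $n$-th root is $(4e+o(1))\,m/N$.

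\emph{Main obstacle.} The hardest step is making the lower bound uniform, that is, controlling cancellation in the alternating sum when $N$ and $m$ grow together. My first attempt is to avoid the sign altogether. I rewrite
\[
T=\bigl[h^{n}a^{m-1}b^{m-1}\bigr]\int_0^\infty (h+a)^m(h+b)^m(a+b)^{N-2}e^{-s(h+a+b)}\,ds,
\]
which has positive coefficients after Laplace inversion, and then apply a positive-kernel saddle-point estimate. If that fails, I fall back to monotonicity of $T(N,m)$ in $m$ together with the exact fixed-$N$ ratio.
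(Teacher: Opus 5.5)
\textbf{The genuine gap is the lower bound $T^{1/(N-1)}\ge(4e-o(1))\,m/N$, uniformly at any joint rate. That is the entire content of ``tight per root'', and none of your three routes delivers it.} The fixed-$N$ step and the upper bound are sound in outline, though both are left uncomputed.

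\textbf{Lower bound.} The saddle-point heuristic on the alternating sum is not yet an argument: controlling that cancellation rigorously and uniformly is exactly the obstacle you name.

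The Laplace rewrite does not remove any signs. The kernel $e^{-s(h+a+b)}$ has alternating coefficients, and term-by-term integration $\int_0^\infty s^k\,ds$ diverges. Moreover, the extraction in \eqref{eq:T-rational} is defined by expanding $1/(h+a+b)$ at $h=\infty$. On the corresponding Cauchy torus $h+a+b$ winds around $0$, so $\mathrm{Re}(h+a+b)$ changes sign and the integral representation is not valid there.

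The fallback fails too. The fixed-$N$ limit carries an $O_N(1/m)$ error with no control in $N$, and monotonicity in $m$ does not supply that uniformity.

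\textbf{The missing idea.} Use the identity $(h+a)(h+b)=h(h+a+b)+ab$. It gives $(h+a)^m(h+b)^m=\sum_i\binom mi h^i(h+a+b)^i(ab)^{m-i}$, so division by $h+a+b$ is exact for every $i\ge1$. The $i=0$ term $(ab)^m/(h+a+b)$ contributes only negative powers of $h$. What remains is the sign-free form of Lemma~\ref{lem:T-pos}:
\[
T(N,m)=\sum_i\binom mi\binom{i-1}{N-1-i}\binom{2i-2}{i-1}.
\]
Keeping only the term $i=N-1$ gives
\[
T\ge\binom m{N-1}\binom{2N-4}{N-2}\ge\frac{(m-N+2)^{N-1}}{(N-1)!}\cdot\frac{4^{N-2}}{2N-3}.
\]
Stirling then yields $(4e-o(1))\,m/N$, with one error factor depending only on $m/N$ and one only on $N$.

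You also do not need the sandwich $T\ge c(N)\Bbez$ with $c(N)\ge N^{-O(1)}$; only $c^{1/(N-1)}\to1$ matters. Indeed, the single-term bound loses a count-level factor $\approx e^{-N^2/(2m)}$ when $m\ll N^2$, but the root reduces it to $e^{-N/(2m)}\to1$.

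\textbf{Fixed-$N$ ratio and upper bound.} Both $T$ and $\Bbez$ are degree-$(N-1)$ polynomials in $m$, and $T\le\Bbez$ is Proposition~\ref{prop:bez}. The limiting ratio, however, is only ``expected'' in your write-up, and no Beta integrals are needed. By Vandermonde, $(N-1)!$ times the leading coefficient of $\Bbez$ is $\sum_j\binom{N-2}{j}\binom{N-1}{j+1}=\binom{2N-3}{N-1}$. For $T$ it is $\binom{2N-4}{N-2}$ (Theorem~\ref{thm:T}), and the quotient is $\frac{N-1}{2N-3}$.

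The same sum bounds $\Bbez$ for \emph{all} $m$ with no saddle point. Its terms are nonnegative, so $\binom mr\le m^r/r!$ applies termwise, giving $\Bbez\le\binom{2N-3}{N-1}m^{N-1}/(N-1)!$ and hence $\Bbez^{1/(N-1)}\le 4e\,m/(N-1)$.

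Your dominant-term computation is wrong as written. $\binom mn^2$ has degree $2n$ in $m$, and $(4e^2m^2/n^2)^{n/2}$ has $n$-th root $2e\,m/n$, not $4e\,m/n$. The dominant term is $\binom{N-2}{n/2}\binom m{n/2}^2$, and the central binomial factor $\approx2^n$ is what turns $2e$ into $4e$.
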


\begin{thmC}[Invariant ceiling; Theorem~\ref{thm:ceiling}]
Let $f$ be any nonzero homogeneous polynomial of degree $d\ge 2$ in $N$ variables and $X=V(f)_{\mathrm{red}}$. Write $\CC(\mathbf{1}_X)=(-1)^{\dim X}\sum_S m_S[\Con\bar S]$ with $m_S\ge 0$. Then for every $S$ and every multidegree slot $i$,
\[
m_S\,\delta_i(\Con \bar S)\;\le\;8(d-1)^{N-1}+8N,
\]
hence $\bigl(m_S\,\delta_i\bigr)^{1/(N-1)}\le (d-1)\bigl(1+o(1)\bigr)$ uniformly in $d$. The analogous ceiling holds for $\CC(\varphi_f)$, Chern--Mather, CSM and Milnor classes, with the single conditional constant of Remark~\ref{rem:massey}.
\end{thmC}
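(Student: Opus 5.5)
The proof assembles the three inputs announced in the introduction, in the order envelope, then transform, then positivity. Write $n=N-1$ and let $\chi_k=\chi(X\cap L_{k})$ denote the Euler characteristic of a generic linear section of $X$ of dimension $k$, where $L_k$ is a generic linear subspace of dimension $k+1$. We use the conventions $\chi_{-1}=\chi_{-2}=0$ for empty sections.

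\textbf{Step 1 (envelope).} Invoke Proposition~\ref{prop:envelope}, which gives $|\chi_k|\le 2(d-1)^k+2k+2$ for every $k\le n-1$. The proposition is stated for arbitrary singularities and reducibility, so it applies verbatim to $X=V(f)_{\mathrm{red}}$. The top slice $k=n-1$ is $X$ itself and is covered by the same bound. Passing to the reduced structure does not change the constructible function $\mathbf{1}_X$.

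\textbf{Step 2 (transform).} Apply Theorem~\ref{thm:transform} to $\varphi=\mathbf{1}_X$. This expresses the $i$-th multidegree of $\CC(\mathbf{1}_X)$, after the overall sign $(-1)^{\dim X}$, as
\[
\Delta_i=(-1)^i\bigl(\chi_{n-i}-2\chi_{n-i-1}+\chi_{n-i-2}\bigr).
\]
The indexing must be checked against the convention in which $\delta_i(\Con\bar S)$ is read. I expect an off-by-one shift between the slice dimension and the section dimension of $X$; this shift is harmless because the envelope is monotone in $k$. The triangle inequality with the row mass $1+2+1=4$ then gives
\[
|\Delta_i|\le 4\max_k|\chi_k|\le 4\bigl(2(d-1)^{n}+2n+2\bigr)=8(d-1)^{N-1}+8N,
\]
using $\max_{k\le n}(d-1)^k=(d-1)^n$ for $d\ge2$.

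\textbf{Step 3 (positivity, no cancellation).} Since $X$ is a reduced hypersurface, it is a local complete intersection of pure dimension $n-1$. Hence $\mathbb{Q}_X[n-1]$ is perverse, and $\CC(\mathbb{Q}_X[n-1])=(-1)^{n-1}\CC(\mathbf{1}_X)$ is effective by Kashiwara's positivity theorem, so every $m_S\ge 0$. Each $\Con\bar S$ is an irreducible conic Lagrangian in $\PP^n\times\Pd^n$, so all of its multidegrees satisfy $\delta_i(\Con\bar S)\ge 0$. Therefore the $i$-th multidegree of the cycle satisfies
\[
\sum_S m_S\,\delta_i(\Con\bar S)=|\Delta_i|,
\]
and every summand is bounded by the total, which gives the stated ceiling.

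The root form follows from $(8(d-1)^{N-1}+8N)^{1/(N-1)}\le (d-1)\,8^{1/(N-1)}(1+O(N/(d-1)^{N-1}))^{1/(N-1)}$. This is $(d-1)(1+o(1))$ uniformly in $d\ge3$. For $d=2$ the bound is $(8+8N)^{1/(N-1)}\to1$, which should be stated as $O(1)$ if strict uniformity is wanted.

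\textbf{Variants.} For $\CC(\varphi_f)$ and Milnor classes the same three steps run with the appropriate constructible function, using Kashiwara positivity for $\varphi_f\mathbb{Q}[\cdot]$. For Chern--Mather and CSM classes the argument goes through the multidegree-to-class conversion, which is positive or triangular. The envelope for vanishing cycles in the non-reduced, non-isolated case is the flagged input of Remark~\ref{rem:massey}.

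\textbf{Main obstacle.} The hardest point is Step 2's bookkeeping. I need to confirm that Theorem~\ref{thm:transform} reads the multidegrees of $\CC(\mathbf{1}_X)$ with exactly the sign $(-1)^{\dim X}$ used here, so that the positivity in Step 3 applies to the very quantities $\Delta_i$. I also need to check that the boundary slots, where $n-i-2<0$ or $i$ is extreme, are not affected by the Euler characteristic of $\PP$-sections, which could contribute stray terms of size $O(N)$. Those stray terms are absorbed by the $8N$ slack; I would verify this first.
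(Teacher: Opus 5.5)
For the main inequality your argument is the paper's own. Effectivity of $(-1)^{n-1}\CC(\mathbf{1}_X)$ and nonnegativity of multidegrees give $m_S\,\delta_i(\Con\bar S)\le\sum_S m_S\,\delta_i(\Con\bar S)=|\Delta_i(\mathbf{1}_X)|$. The transform of Theorem~\ref{thm:transform} has row mass $4$, and Proposition~\ref{prop:envelope} at the top index gives $4E(d,n)=8(d-1)^{N-1}+8N$.

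You should, however, fix the indexing in Step~1 rather than calling the shift harmless. Both Proposition~\ref{prop:envelope} and the transform formula use $\dim L_k=k$, so $X\cap L_k$ has dimension $k-1$ and $\chi_n=\chi(X)$. In your convention ($\dim L_k=k+1$) the bound you quote is false at the top slice. It would give $|\chi(X)|\le 2(d-1)^{n-1}+2n$, whereas a smooth plane curve of degree $d$ has $|\chi|=d(d-3)>2(d-1)+4$ for $d\ge6$. The transform formula you quote verbatim is also only valid in the paper's convention. Your final line is correct only because Step~2 silently reverts to that convention and uses $E(d,n)$. Monotonicity of $E(d,\cdot)$ gives $\max_{k\le n}E(d,k)=E(d,n)$, but it cannot rescue an index that is genuinely one too low.

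In the correct convention your boundary worries disappear: $\chi_{-2}=\chi_{-1}=\chi_0=0$, and the $\chi(\PP^k)$ contributions are already the $k+1$ inside $E(d,k)$. No hedge is needed at $d=2$ either. There $(8+8N)^{1/(N-1)}=1+o(1)=(d-1)(1+o(1))$, and since $\bigl(8+8N(d-1)^{-(N-1)}\bigr)^{1/(N-1)}$ is largest at $d=2$, the $o(1)$ is uniform over all $d\ge2$.

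The genuine gap is in the variants, which do not follow from ``the same three steps''. For $\nu_f$, and hence for $\varphi_f=\nu_f-\mathbf{1}_X$ and Milnor classes, the envelope step needs a new argument. Take reduced $f$ with isolated singularities. Generic $L_k$ with $k<n$ miss $\Sing X$, so $\chi_k(\nu_f)=\chi(X\cap L_k)$ is covered. But the top value $\chi_n(\nu_f)=\int_X\nu_f\,d\chi=\chi(X)+(-1)^{n-1}\sum_p\mu_p$ is not the Euler characteristic of a section of $X$, and Proposition~\ref{prop:envelope} says nothing about it.

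The paper's Lemma~\ref{lem:nu-isolated} supplies exactly this step. For a generic degree-$d$ form $g$ and small $t\ne0$, $\int_X\nu_f\,d\chi=\chi(V(f+tg))$, because the Milnor function of the pencil agrees with $\nu_f$, including along the base locus. Since $V(f+tg)$ is again a degree-$d$ hypersurface, the envelope applies to it. The classical bound $\sum_p\mu_p\le(d-1)^n$ on the total Milnor number would be an alternative. Positivity there comes from perversity of shifted nearby cycles, and $\CC(\varphi_f)$ is then bounded as a difference at twice the constant. Your route through perversity of vanishing cycles also works, but it needs the same top-slice bound.

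For Chern--Mather and CSM classes, the conversion from profile data is not positive: Piene's formula is an alternating binomial combination. Triangularity alone bounds nothing. What matters is that the conversion has $\ell^1$-mass $e^{O(N)}$ (Lemma~\ref{lem:invclass}), which the $(N-1)$-st root reduces to a factor $e^{O(1)}$. So for those classes the ceiling is analogous at the per-root level, not with the same constant.
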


\begin{thmD}[No-go; Theorem~\ref{thm:nogo}]
Every lower bound on $\dc(f)$ derivable by reading a characteristic-cycle-type invariant of $V(f)$ (Definition~\ref{def:invariant}) through a kernel-corank incidence with intersection-theoretic extraction (Definition~\ref{def:reading}) satisfies
\[
\text{derivable bound}\;\le\;e^{O(1)}\cdot d\,N .
\]
Corank one is unconditional; corank $\ge2$ is established under Hypothesis~\ref{hyp:S}. On the diagonal $d=N$ this is $O(N^2)=O(d^2)$; combined with \eqref{eq:positive}, the exact reach of the method is $\Theta(N^2)$, with constant in $[\tfrac1{4e},\,e^{O(1)}]$.
\end{thmD}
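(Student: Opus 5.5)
The plan is to couple Theorems A--C, using the definition of a derivable bound. By Definitions~\ref{def:invariant} and~\ref{def:reading}, such a bound has the following shape. A representation of size $m$ forces some characteristic-cycle-type quantity $I(f)$ to inject into the solution set of a kernel-corank incidence. Intersection theory then bounds that solution count above by a polar count $T_r(N,m)$. The method concludes $m\ge m_*$, where $m_*$ is the least $m$ with $T_r(N,m)\ge I(f)$. So I must bound $m_*$ from above. This needs two facts: an upper bound on every admissible $I(f)$, and a lower bound on how fast $T_r(N,m)$ grows in $m$.

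\textbf{Step 1 (invariant side).} Write $I(f)$ as a nonnegative combination of the entries $m_S\delta_i(\Con\bar S)$, or of the analogous $\CC(\varphi_f)$, Chern--Mather, CSM or Milnor entries. The number of entries, and the coefficient mass, I expect to be $e^{O(N)}$ by Definition~\ref{def:invariant}. Theorem~C then gives $I(f)\le e^{O(N)}\bigl(8(d-1)^{N-1}+8N\bigr)$. Taking $(N-1)$-st roots, $I(f)^{1/(N-1)}\le e^{O(1)}(d-1)$ uniformly. This is where Kashiwara positivity matters: it prevents a reading from extracting a single large component out of a small total.

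\textbf{Step 2 (reading side, corank one).} The count is bounded by $T(N,m)$. By Proposition~\ref{prop:witness} this bound is attained, so no intersection-theoretic extraction gives anything smaller, and I will use $T$ as the extraction bound. I then need a lower bound $T(N,m)^{1/(N-1)}\ge c\,m/N$ valid for all $m\ge N$, not only asymptotically. I would take it from the binomial closed form of Theorem~\ref{thm:T}: pick one dominant positive term and apply Stirling, which should give $c=e^{-O(1)}$. Theorem~B gives the sharp constant $4e$ in the regime $m/N\to\infty$. In the complementary regime $m=O(N)$ the conclusion is trivial, since any derivable bound is then already $O(N)\le O(dN)$.

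\textbf{Step 3 (combine).} If $m\ge C\,dN$ for a suitable $C=e^{O(1)}$, then $T(N,m)^{1/(N-1)}\ge c\,C\,d>e^{O(1)}(d-1)\ge I(f)^{1/(N-1)}$. Hence the injection gives no obstruction at that size, and so $m_*\le e^{O(1)}dN$. On the diagonal $d=N$, this combined with \eqref{eq:positive} yields $\Theta(N^2)$.

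\textbf{Corank $\ge 2$.} Hypothesis~\ref{hyp:S} says that the corank-$r$ incidence count specializes to a quantity dominated by $T(N,m)$ up to an $e^{O(N)}$ factor, or alternatively has per-root growth at least $c\,m/N$. Either form reruns Steps~2--3 unchanged.

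\textbf{Main obstacle.} I expect the hardest step to be the uniform, non-asymptotic lower bound on $T(N,m)^{1/(N-1)}$. The double-binomial form is alternating, which rules out naive term selection. My first attempt would be the rational generating form: read the coefficient by a saddle-point or positivity argument, for instance by showing that $T$ dominates $C_N$ times a positive-coefficient Bézout-type expression termwise. A secondary but real difficulty is making sure that Definition~\ref{def:invariant} genuinely limits the coefficient mass of admissible combinations to $e^{O(N)}$.
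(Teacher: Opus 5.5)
Your architecture is the paper's: a ceiling on the invariant from Theorem~C, with Kashiwara positivity forbidding cancellation; a floor on every valid count bound from the generic witness; and a comparison at $m^\ast=\lceil KdN\rceil$. But you run the comparison only in the top slot, and that is a genuine gap. Definition~\ref{def:reading} reads slot-$i$ data through the slot-$i$ system. Validity on all $M$ forces only $B_i(m)\ge T_i(N,m)=T(N-i,m)$ (Lemma~\ref{lem:floor1}), a polynomial of degree $j:=N-1-i$ in $m$; for $i=N-2$ it is just $m$. Against such a floor, your uniform ceiling $8(d-1)^{N-1}+8N$ yields only $\mu_{I,B}\le 8(d-1)^{N-1}+8N$, which is exponential.

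The missing idea is that the ceiling decays with the slot at the same rate as the floor. The transform of Theorem~\ref{thm:transform} involves only $\chi_{n-i},\chi_{n-i-1},\chi_{n-i-2}$, so every slot-$i$ datum is at most $4E(d,n-i)=8(d-1)^{j}+8j+8\le 24j(d-1)^j$ (Corollary~\ref{cor:slotwise}). The comparison is then made slot by slot at $m^\ast$: for $j=1$, $T(2,m^\ast)=m^\ast\ge 8(d-1)+16$; for $j\ge2$, $T(N-i,m^\ast)\ge(2m^\ast/(N-i))^{j}\ge(2Kd)^j\ge 24j(d-1)^j$.

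The same point breaks your handling of the weight. In a slot with $j=O(1)$ there is no $(N-1)$-st root available to absorb $w=e^{O(N)}$. The paper never absorbs it. A profile-matched reading has $B=\sum_i|\lambda_i|B_i$, so once $B_i(m^\ast)$ dominates the slot-$i$ ceiling for every $i$, the $|\lambda_i|$ appear on both sides and cancel. This is why Theorem~\ref{thm:nogo} is uniform in $w$, and why your worry about bounding the coefficient mass is moot.

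Three further points.

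(i) The floor you call the main obstacle is needed only at the single point $m=KdN$, where $m/N\ge K$ with $K$ at your disposal. It is needed there even when $d$ is bounded, where $m/N$ stays bounded, so your claim that the regime $m=O(N)$ is trivial is circular. The paper supplies the floor explicitly from a positive form (Lemma~\ref{lem:T-pos}). The identity $(h+a)(h+b)=h(h+a+b)+ab$ gives $T(N,m)=\sum_i\binom mi\binom{i-1}{N-1-i}\binom{2i-2}{i-1}$. The single term $i=N-1$ then yields $T(N,m)\ge(2m/N)^{N-1}$ for $m\ge2N$ (Proposition~\ref{prop:floor}), with no alternating sums anywhere.

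(ii) A lower bound on $\dc$ concerns affine-linear representations, which are homogeneous representations of $x_0^{m-d}f_0$ (Remark~\ref{rem:homog}). So the invariant read at size $m$ depends on $m$. This is harmless in the $\mathbf{1}_X$-scope, since the reduced locus has degree at most $d+1$ in $\PP^N$. For $\nu$-scope invariants, however, the multiplicity $m-d$ enters $\nu$. The degree-$d$ ceiling applies only after the tautological part $(m-d)(\mathbf{1}_H-\mathbf{1}_{X_H})$ is split off and shown to be dominated by $B$ (Lemma~\ref{lem:taut}, Remark~\ref{rem:taut}).

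(iii) For corank $\ge2$ you need a \emph{lower} bound on every valid corank-$c$ bound. Your first paraphrase of Hypothesis~\ref{hyp:S}, a count dominated by $T(N,m)$, points the wrong way. The hypothesis gives the corank-one class of size $m-c+1$ plus an effective cycle, hence the floor $T(N,m-c+1)$ of Proposition~\ref{prop:monotone}.
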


The slogan form of Theorem~D: \emph{the topology of a degree-$d$ hypersurface is exponentially bounded by the degree of its equation, uniformly in its singularities, and the $(N{-}1)$-st root that every kernel-incidence reading must take converts that exponential into $d$ per root.} A finer invariant carries more information but not more magnitude; the smooth Fermat is, within $e^{O(N)}$ and hence exactly at the per-root level, the extremal family for the entire engine.

\subsection{Relation to known barriers}\label{sec:intro-barriers}

Section~\ref{sec:barriers} carries out the comparisons in detail; the summary is as follows. The top polar degree $\delta_{\mathrm{top}}$, the invariant of the companion papers, is strictly coarser than the full profile $\{m_S\delta_i\}$ --- plane curves with three nodes versus two cusps witness this --- yet Theorem~C shows the refinement buys no magnitude. The dual-variety dimension of \cite{LMR13} is the index of the top nonvanishing main-term slot, a coarsening of the profile; their quadratic cap and ours are the same wall seen from two sides. The codimension of the singular locus of \cite{ABV17} is the codimension of the largest stratum carrying mass, again a coarse shadow. Shifted-partial and other rank-based circuit measures are genuinely incomparable with conormal data (neither determines the other), so they sit outside the scope of Theorem~D --- but they are known to cap at the near-quadratic scale for determinantal questions independently.

\subsection{What remains open}\label{sec:intro-open}

Two exits are left genuinely open by Theorem~D (they are formalized in Remark~\ref{rem:scope}), and the proof shows precisely why. First, \emph{scheme-theoretic conormal data}: the Hilbert function and Castelnuovo--Mumford regularity of the dual variety $X^\vee$, as a scheme, are strictly finer than any cycle-class datum, and no mechanism currently bounds them in terms of $m$ in either direction. Second, \emph{non-conormal reading mechanisms}: any way of extracting an inequality from a determinantal representation that does not factor through intersection numbers of conormal-type cycles is untouched. Within the conormal geometry itself, the proof closes the last interior door: the only candidate slack was the gap between the B\'ezout estimate and the true class, and Theorem~B shows that gap to be a factor of $2$ at the count level and $1+o(1)$ per root.

\subsection*{Organization}
Section~\ref{sec:prelim} fixes conventions on conormal varieties, multidegrees, constructible functions and characteristic cycles. Section~\ref{sec:incidence} sets up determinantal representations and identifies the kernel incidence with the conormal of the determinant. Section~\ref{sec:reading} proves Theorems~A and~B. Section~\ref{sec:ceiling} proves Theorem~C. Section~\ref{sec:nogo} formalizes the engine and proves Theorem~D. Section~\ref{sec:barriers} compares with known barriers. Appendix~\ref{app:examples} collects the worked validations: the conic with its explicit Gr\"obner count and the exhibited B\'ezout junk point, the nodal cubic, the three-plane arrangement, the Whitney umbrella, two non-reduced examples, and the $T(5,5)$ Teissier identity.

\subsection*{A note on verification}
The closed forms of Section~\ref{sec:reading} were machine-checked by exact integer arithmetic against brute-force expansion on a grid $3\le N\le 12$, $N\le m\le 6N$, and anchored against independent geometric computations as described in Appendix~\ref{app:examples}; the transform of Section~\ref{sec:ceiling} was derived by exact linear algebra for $n\le 8$ and proved in closed form for all $n$. Statements whose written proof rests on a citation rather than a self-contained argument are flagged inline. This note is retained in the submitted version deliberately: the development of the paper is disclosed in the back matter, and the verification protocol is part of that disclosure.

\section{Conventions and background}\label{sec:prelim}

\subsection{Conormal varieties and multidegrees}\label{sec:conormal}

We work over $\C$. Fix $N\ge 3$ and write $n=N-1$, $\PP^n=\PP(\C^N)$, with dual space $\Pd^n$ parametrizing hyperplanes. Let $h\in A^1(\PP^n)$ and $\hd\in A^1(\Pd^n)$ denote the hyperplane classes; on a product we keep the same letters for their pullbacks.

For an irreducible closed subvariety $Z\subseteq\PP^n$, the \emph{conormal variety} is
\[
\Con Z\;=\;\overline{\{(z,H)\in \PP^n\times\Pd^n\;:\;z\in Z_{\mathrm{sm}},\ T_zZ\subseteq H\}},
\]
an irreducible variety of dimension $n-1$ for every $Z$ (it is Lagrangian for the natural contact structure). The \emph{dual variety} $Z^\vee\subseteq\Pd^n$ is the image of the second projection. In characteristic zero, biduality holds: the involution $(z,H)\mapsto(H,z)$ carries $\Con Z$ to $\Con Z^\vee$ \cite{GKZ94,Tev05}.

For an $(n-1)$-dimensional cycle $C$ on $\PP^n\times\Pd^n$, its \emph{multidegrees} are
\[
\delta_i(C)\;=\;\int_C h^i\,\hd^{\,n-1-i},\qquad i=0,\dots,n-1 .
\]
When $C=\Con X$ for a hypersurface $X$, these are the classical \emph{polar degrees} (ranks) of $X$ \cite{Pie78}: $\delta_{n-1}(\Con X)=\deg X$, and $\delta_0(\Con X)=\deg X^\vee$ whenever $X^\vee$ is a hypersurface. For $X$ smooth of degree $d$ one has $\delta_i(\Con X)=d(d-1)^{n-1-i}$. We write $\delta_{\mathrm{top}}:=\delta_0$. Multidegrees of effective cycles are nonnegative, since $h$ and $\hd$ are globally generated.

\subsection{Constructible functions and characteristic cycles}\label{sec:cc}

A constructible function $\varphi$ on $\PP^n$ is a finite sum $\sum_W c_W\mathbf{1}_W$, $c_W\in\Z$, over closed subvarieties. For a constructible $\varphi$ and a constructible subset $A$ we write $\int_A\varphi\,d\chi$ for the Euler-characteristic integral. For $0\le k\le n$ define the \emph{generic slice values}
\[
\chi_k(\varphi)\;:=\;\int_{L_k}\varphi\,d\chi,\qquad L_k\subset\PP^n\ \text{a generic linear subspace of dimension }k,
\]
which is well defined by genericity (Kleiman transversality); we set $\chi_{-1}:=0$ and $\chi_{-2}:=0$. For $\varphi=\mathbf{1}_X$, $\chi_k(\varphi)=\chi(X\cap L_k)$.

MacPherson's local Euler obstruction $\Eu_W$ \cite{Mac74} is the constructible function supported on $W$ with $\Eu_W\equiv 1$ on $W_{\mathrm{sm}}$; the functions $\{\Eu_W\}$ form a basis of the group of constructible functions. The \emph{characteristic cycle} is the unique linear map $\varphi\mapsto\CC(\varphi)$ from constructible functions to Lagrangian cycles in $T^*\PP^n$ normalized by
\[
\CC(\Eu_W)\;=\;(-1)^{\dim W}\,[\,\overline{T^*_W\PP^n}\,] ,
\]
where $\overline{T^*_W\PP^n}$ is the closure of the conormal bundle of $W_{\mathrm{sm}}$; under projectivization, $[\overline{T^*_W\PP^n}]$ corresponds to $[\Con W]$ for $W\subsetneq\PP^n$, and to the zero section for $W=\PP^n$. This is the normalization under which $\CC$ of a constructible complex equals $\CC$ of its stalkwise Euler characteristic and Kashiwara--Dubson index theory applies \cite{Kas73,BDK81,KS90}. Writing
\[
\CC(\varphi)\;=\;n_0(\varphi)\,[\text{zero section}]\;+\;\sum_{W\subsetneq\PP^n} n_W(\varphi)\,[\Con W],
\]
we package the proper part into the \emph{multidegree vector}
\[
\Delta_i(\varphi)\;:=\;\sum_{W\subsetneq\PP^n} n_W(\varphi)\,\delta_i(\Con W),\qquad i=0,\dots,n-1 .
\]
Two standard facts are used repeatedly. First, the \emph{index theorem}: Euler characteristics of constructible complexes are computed by intersection of characteristic cycles \cite{Kas73,BDK81,Dub84,KS90}; in the projective-slice form we need, for generic $L_k$ the value $\chi_k(\varphi)$ is a universal $\Z$-linear functional of $\bigl(n_0(\varphi),\Delta_0(\varphi),\dots,\Delta_{n-1}(\varphi)\bigr)$, with coefficients depending only on $(n,k,i)$ --- this is the topological Radon / projective duality transform of Ernstr\"om and Matsui--Takeuchi \cite{Ern94,MT07}. Second, \emph{Kashiwara positivity}: if $F^\bullet$ is a perverse sheaf, $\CC(F^\bullet)$ is an effective Lagrangian cycle \cite{Kas73,KS90,Gin86}. We use it through the following standard consequence: a reduced hypersurface $X\subset\PP^n$ of pure dimension $n-1$ is a local complete intersection, hence $\mathbb{Q}_X[n-1]$ is perverse \cite{DimST}, its stalkwise Euler characteristic is $(-1)^{n-1}\mathbf{1}_X$, and therefore
\begin{equation}\label{eq:positivity}
(-1)^{n-1}\,\CC(\mathbf{1}_X)\;=\;\sum_{S} m_S\,[\Con \bar S],\qquad m_S\ge 0,
\end{equation}
a nonnegative combination over the closures of the strata of a Whitney stratification of $X$, with $m_X=1$ on the open stratum. No zero-section term appears since $\mathbf{1}_X$ vanishes at a generic point.

\subsection{Nearby cycles and Milnor data}\label{sec:nearby}

For a nonzero homogeneous $f$ of degree $d$, the hypersurface $X=V(f)_{\mathrm{red}}\subset\PP^n$ carries the constructible function
\[
\nu_f(x)\;:=\;\chi\bigl(F_{f,x}\bigr),\qquad F_{f,x}\ \text{the Milnor fibre at $x$ of a local equation of $f$},
\]
which is well defined (local equations at $x$ differ by units) and multiplicity-sensitive: for $f=g^e$ with $g(x)=0$, $g$ reduced and smooth at $x$, $\nu_f(x)=e$. It is the stalkwise Euler characteristic of the nearby-cycle complex, and $\psi_f(\mathbb{Q}[n])$ perverse implies the analogue of \eqref{eq:positivity} for $\nu_f$. We write $\varphi_f:=\nu_f-\mathbf{1}_X$ for the reduced (vanishing-cycle) variant, supported on the locus where $f$ is not reduced-smooth. Milnor classes and the Chern--Mather and Chern--Schwartz--MacPherson classes are $\Z$-combinations of the data $\{n_W\delta_i\}$ attached to $\CC(\mathbf{1}_X)$, $\CC(\Eu_X)$ and $\CC(\nu_f)$, with coefficient masses at most $2^{O(N)}$ \cite{Mac74,Alu18}; Section~\ref{sec:nogo} makes this the definition of the invariant class covered by the no-go theorem.

\section{Determinantal representations and the kernel incidence}\label{sec:incidence}

\subsection{Determinantal complexity}

For $f\in\C[x_1,\dots,x_N]$, $\dc(f)$ is the least $m$ with $f=\det(A_0+\sum_i x_iA_i)$, $A_i\in\C^{m\times m}$. For homogeneous $f$ of degree $d$, a \emph{homogeneous} representation $f=\det M(x)$ with $M(x)=\sum_{i=1}^N x_iA_i$ linear forces $m=d$, since $\det M$ is homogeneous of degree $m$: the homogeneous model carries no optimization over the size, and we attach no separate complexity measure to it. All of the size slack $m\ge d$ lives in the affine-linear model, and it is recovered inside the homogeneous model geometrically: by Remark~\ref{rem:homog} below, a size-$m$ affine-linear representation of $f_0$ is the same thing as a size-$m$ homogeneous linear representation of $x_0^{\,m-d}f_0$ in $N+1$ variables. Accordingly, the kernel-incidence geometry of this section and the next is run at arbitrary size $m$ on the homogeneous form $x_0^{\,m-d}f_0$ --- never on $f_0$ itself --- and the multiplicity $x_0^{\,m-d}$ is dealt with once and for all in Lemma~\ref{lem:taut}.

\begin{remark}[Affine-linear representations homogenize]\label{rem:homog}
If $f$ is homogeneous of degree $d$ and $f=\det(A_0+\sum_ix_iA_i)$ with $m\ge d$, then
\[
G(x_0,x)\;:=\;\det\Bigl(x_0A_0+\sum_i x_iA_i\Bigr)\;=\;\sum_{j}x_0^{\,m-j}f_j(x)\;=\;x_0^{\,m-d}f(x),
\]
since the degree-$j$ part $f_j$ of $f$ vanishes for $j\ne d$. Thus every size-$m$ affine-linear representation of $f$ is a size-$m$ homogeneous representation, in $N+1$ variables, of $x_0^{m-d}f$, whose reduced zero locus $V(x_0f)_{\mathrm{red}}\subset\PP^N$ has degree at most $d+1$. All results below are stated for the homogeneous model; by this remark they apply to affine-linear representations at the cost $N\mapsto N+1$, $d\mapsto d+1$, which is absorbed by every constant in the paper. The ceiling of Section~\ref{sec:ceiling} is insensitive to the multiplicity $x_0^{m-d}$ for the $\mathbf{1}_X$-type invariants, and Section~\ref{sec:nogo} explains why the multiplicity carries no usable information for the $\nu$-type invariants either.
\end{remark}

\subsection{The kernel incidence}

Fix $M(x)=\sum_{i=1}^Nx_iA_i$ with $A_i\in\C^{m\times m}$, and write $X_M=V(\det M)\subset\PP^{N-1}$, a hypersurface of degree $m$ when $\det M\not\equiv 0$. For $1\le c\le m-1$ the \emph{corank-$c$ kernel incidence} is
\[
Z_c\;=\;\bigl\{(x,V,U)\in\PP^{N-1}\times\Gr(c,\C^m)\times\Gr(c,\C^m)\;:\;M(x)V=0,\ U^{\mathsf T}M(x)=0\bigr\}.
\]
We concentrate on $c=1$ and write points of $Z_1$ as $(x,v,u)\in\PP^{N-1}\times\PP^{m-1}_v\times\PP^{m-1}_u$, with $h,a,b$ the three hyperplane classes. ``Generic $M$'' below means: for $(A_1,\dots,A_N)$ in a dense open subset of $(\C^{m\times m})^N$.

\begin{lemma}[Genericity package]\label{lem:generic}
Assume $N\ge 3$, $m\ge 2$, and let $M$ be generic. Then:
\begin{enumerate}
\item $X_M$ is reduced and irreducible, and $\Sing X_M=\Sigma_2:=\{x: \rk M(x)\le m-2\}$, which has pure codimension $4$ in $\PP^{N-1}$ if nonempty (so it is empty for $N\le 4$ and has dimension $N-5$ otherwise).
\item $Z_1$ has pure dimension $N-2$; over $X_M\setminus\Sigma_2$ the projection $Z_1\to X_M$ is an isomorphism onto its image (the pair $(v,u)$ being the kernel and cokernel lines), and the preimage of $\Sigma_2$ has dimension $\le N-3$.
\item $Z_1$ is reduced and irreducible, and is smooth away from the preimage of $\Sigma_2$.
\end{enumerate}
\end{lemma}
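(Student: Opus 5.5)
The plan is to reduce everything to the universal determinantal geometry on $\PP(\C^{m\times m})=\PP^{m^2-1}$ and then apply Bertini/Kleiman transversality to the generic linear embedding $x\mapsto M(x)$. A generic $N$-dimensional space of matrices is a generic linear subspace $\Lambda\cong\PP^{N-1}\subset\PP^{m^2-1}$, and $X_M=D\cap\Lambda$, where $D=V(\det)$ is the universal determinantal hypersurface. Its stratification $D_r=\{\rk\le m-r\}$ has $\operatorname{codim}D_r=r^2$. We also use the universal incidence
\[
\tilde Z=\{(A,v,u):Av=0,\ u^{\mathsf T}A=0\}\subset\PP^{m^2-1}\times\PP^{m-1}\times\PP^{m-1}.
\]
Over $(v,u)$ the conditions $Av=0$ and $u^{\mathsf T}A=0$ are $2m-1$ independent linear conditions on $A$, the single syzygy being $u^{\mathsf T}Av$. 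So $\tilde Z$ is a projective bundle over $\PP^{m-1}\times\PP^{m-1}$. It is therefore smooth and irreducible of dimension $m^2-2m+2(m-1)=m^2-2$, i.e.\ of codimension one in the matrix factor. It maps birationally onto $D$, isomorphically over $D\setminus D_2$, and the fibre over a point of $D_r\setminus D_{r+1}$ is $\PP^{r-1}\times\PP^{r-1}$.

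For (1), Bertini for the linear system cut out by $\Lambda$ applies because, for $\Lambda$ generic, $\Lambda$ meets each $D_r$ transversally in codimension $r^2$. Then $X_M\setminus\Sigma_2$ is smooth, since $D\setminus D_2$ is smooth and $\Lambda$ is transverse to it. Moreover $\Sigma_2=D_2\cap\Lambda$ has pure codimension $4$, hence is empty when $N-1<4$. To see that points of $\Sigma_2$ are genuinely singular, note that $\det$ vanishes to order $\ge2$ along $D_2$, so every point of $\Sigma_2$ lies in $\Sing X_M$; this gives $\Sing X_M=\Sigma_2$. Reducedness follows because $X_M$ is generically smooth. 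Irreducibility follows from Bertini's irreducibility theorem: $D$ is irreducible and $\dim\Lambda\cap D=N-2\ge1$ for $N\ge3$. Alternatively, a hypersurface whose singular locus has codimension $\ge2$ in it is normal, hence connected implies irreducible, and connectedness holds for any hypersurface of positive dimension.

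For (2) and (3), observe that $Z_1=\tilde Z\cap(\Lambda\times\PP^{m-1}\times\PP^{m-1})$ is the pullback of the smooth variety $\tilde Z$ under a generic linear condition on the first factor. Kleiman--Bertini (the group $PGL$ acts transitively on $\PP^{m^2-1}$, and moving $\Lambda$ translates the first factor) gives that $Z_1$ is smooth of the expected dimension $m^2-2-(m^2-N)=N-2$, equidimensional and reduced. In fact this makes $Z_1$ smooth everywhere, which is stronger than claimed. The isomorphism over $X_M\setminus\Sigma_2$ is inherited from $\tilde Z\to D$. The preimage of $\Sigma_2$ is the pullback of the preimage $E$ of $D_2$ in $\tilde Z$. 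Stratifying $E$ by $D_r\setminus D_{r+1}$, each stratum has dimension $(m^2-1-r^2)+2(r-1)\le m^2-4$, which is codimension $\ge2$ in $\tilde Z$. A generic $\Lambda$ cuts it in dimension $\le N-4\le N-3$.

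The main obstacle is irreducibility of $Z_1$. It is connected because it is birational to the irreducible $X_M$: the preimage of $\Sigma_2$ has codimension $\ge2$, so it cannot form a separate component, and pure dimensionality then forces every component to dominate $X_M$. Since $Z_1\to X_M$ is birational over the dense open set, there is exactly one such component. I expect that making this component-counting rigorous, namely ruling out extra components contained in the preimage of $\Sigma_2$, will be the step that needs care, and the dimension bound from (2) is exactly what handles it.
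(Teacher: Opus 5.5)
Your route is the paper's: pass to the universal determinantal hypersurface $D\subset\PP^{m^2-1}$ and the universal incidence $\widetilde Z$, a $\PP^{m^2-2m}$-bundle over $\PP^{m-1}\times\PP^{m-1}$, then apply Bertini--Kleiman to the generic linear section $\Lambda$. Part (1) is handled correctly. Your application of Kleiman's theorem to the morphism $\widetilde Z\to\PP^{m^2-1}$ is also right, and it in fact shows $Z_1$ is smooth everywhere, which is stronger than part (3).

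There is one false step: your dimension count for the exceptional locus.
\begin{itemize}
\item By your own fibre description, the preimage $E_r$ of $D_r\setminus D_{r+1}$ has dimension $(m^2-1-r^2)+2(r-1)=m^2-2-(r-1)^2$. For $r=2$ this is $m^2-3$, so $E$ is a \emph{divisor} in $\widetilde Z$, not of codimension $\ge2$.
\item Consequently the preimage $W$ of $\Sigma_2$ in $Z_1$ has dimension exactly $N-3$ whenever $\Sigma_2\ne\emptyset$, not $\le N-4$. For $(N,m)=(5,5)$, $W$ is $50$ disjoint copies of $\PP^1\times\PP^1$, one over each node, so the lemma's bound is sharp and your ``$\le N-4$'' is false.
\end{itemize}

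The repair is immediate.
\begin{itemize}
\item \emph{Dimension bound.} Kleiman on each smooth stratum $E_r$ (or simply $\dim(\Sigma_r\setminus\Sigma_{r+1})+2(r-1)$) gives $\dim W\le N-2-(r-1)^2\le N-3$. This is exactly what part (2) asserts.
\item \emph{Irreducibility.} Your argument needs only $\dim W<N-2$ together with purity of $Z_1$. No component $C$ of $Z_1$ lies in $W$, so $C\cap(Z_1\setminus W)$ is a component of the irreducible $Z_1\setminus W\cong X_M\setminus\Sigma_2$. This forces $C=\overline{Z_1\setminus W}$, so there is only one component.
\item \emph{Drop one sentence.} Remove ``connected because birational to $X_M$'': that inference is not valid on its own, and it is not needed.
\item \emph{Alternative.} Bertini's irreducibility theorem (Jouanolou) gives irreducibility of $Z_1$ directly. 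Apply it to $\widetilde Z\to\PP^{m^2-1}$ and the $m^2-N$ hyperplanes cutting out $\Lambda$; the image still has dimension $N-1\ge2$ before the last cut.
\end{itemize}

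Finally, like the paper's sketch, you tacitly assume $N\le m^2$, so that $x\mapsto M(x)$ embeds $\PP^{N-1}$ as a generic linear subspace. For $N>m^2$ (e.g.\ $(N,m)=(5,2)$, which appears in the appendix table), $X_M$ is a cone. In that case $Z_1$ is directly a projective bundle over $\PP^{m-1}\times\PP^{m-1}$, so the conclusions still hold, but this case needs a separate remark.
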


\begin{proof}[Proof sketch]
Consider the universal objects over the full matrix space $\PP^{m^2-1}$: the determinant hypersurface $D$ is irreducible of degree $m$ with $\Sing D=\{\rk\le m-2\}$ of codimension $4$ in $\PP^{m^2-1}$ (classical; see \cite{GKZ94}), and the universal corank-one incidence $\widetilde Z_1\subset\PP^{m^2-1}\times\PP^{m-1}\times\PP^{m-1}$ is smooth and irreducible (it fibres over $\PP^{m-1}_v\times\PP^{m-1}_u$ with fibres the linear systems $\{M:\,Mv=0,\ u^{\mathsf T}M=0\}\cong\PP^{m^2-2m}$, after noting the two conditions overlap in one equation, $u^{\mathsf T}Mv=0$, counted once: $2m-1$ independent linear conditions). A generic $M$ is a generic linear section $\PP^{N-1}\subset\PP^{m^2-1}$; parts (1)--(3) follow from Bertini--Kleiman transversality applied to this section, intersected with $D$, $\Sing D$, and $\widetilde Z_1$ respectively. The identification of $Z_1\to X_M$ over the corank-one locus is linear algebra: kernel and cokernel of a corank-one matrix are lines.
\end{proof}

\begin{lemma}[$Z_1$ is the conormal variety of the determinant]\label{lem:condet}
Let $M$ be generic, $N\ge3$, $m\ge2$. The rational map
\[
\psi:Z_1\dashrightarrow \PP^{N-1}\times\Pd^{N-1},\qquad
(x,v,u)\;\longmapsto\;\bigl(x,\ [\,u^{\mathsf T}A_1v:\cdots:u^{\mathsf T}A_Nv\,]\bigr)
\]
is defined away from the preimage of $\Sigma_2$, and it maps $Z_1$ birationally onto $\Con(X_M)$. Moreover $\psi^*\hd=a+b$ in $A^1(Z_1)$ (restricted from the product), so that
\[
\delta_i(\Con X_M)\;=\;\int_{Z_1}h^i\,(a+b)^{\,N-2-i},\qquad i=0,\dots,N-2 .
\]
In the universal case $N=m^2$ this recovers the classical statement that the dual of the determinant hypersurface is the Segre variety $\PP^{m-1}\times\PP^{m-1}$ \cite{GKZ94,Tev05}.
\end{lemma}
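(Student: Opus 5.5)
The plan is to verify the Lemma at the level of points over the open set $X_M^\circ:=X_M\setminus\Sigma_2$ and then pass to closures, using Lemma~\ref{lem:generic} throughout.

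\textbf{Step 1: the map is defined away from $\Sigma_2$.} We must show that the form $\ell_{u,v}:=[u^{\mathsf T}A_1v:\cdots:u^{\mathsf T}A_Nv]$ is nonzero at every point of $Z_1$ over $X_M^\circ$. Write $f=\det M$. By Jacobi's formula,
\[
\partial_i f(x)=\mathrm{tr}\bigl(\mathrm{adj}M(x)\,A_i\bigr).
\]
At a corank-one matrix we have $\mathrm{adj}M(x)=\lambda\,v u^{\mathsf T}$ with $\lambda\neq0$, where $v$ spans the kernel and $u$ the cokernel. Hence $\partial_i f(x)=\lambda\,u^{\mathsf T}A_iv$, so $\ell_{u,v}$ is proportional to $df(x)$.

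By Lemma~\ref{lem:generic}(1), $X_M$ is reduced and $\Sing X_M=\Sigma_2$. Therefore $df(x)\neq0$ on $X_M^\circ$, and $\ell_{u,v}$ is a well-defined point of $\Pd^{N-1}$ there.

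\textbf{Step 2: birationality onto $\Con X_M$.} From Step 1, $\psi(x,v,u)=(x,T_xX_M)$ for $x\in X_M^\circ$. This is exactly the open dense part of $\Con X_M$ lying over the smooth locus. By Lemma~\ref{lem:generic}(2), $Z_1\to X_M$ is an isomorphism over $X_M^\circ$. Likewise $\Con X_M\to X_M$ is an isomorphism over the smooth locus (one tangent hyperplane per point). So $\psi$ restricts to an isomorphism between dense opens.

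By Lemma~\ref{lem:generic}(3), $Z_1$ is irreducible, and the preimage of $\Sigma_2$ has dimension $\le N-3$. So this open set is dense in $Z_1$, and the closure of the image of $\psi$ is $\Con X_M$. This gives birationality.

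\textbf{Step 3: the class identity, and where the difficulty lies.} On the graph, the coordinates of $\psi$ are bilinear in $(u,v)$. Hence $\psi^*\mathcal O(1)$ agrees with $\mathcal O(1,1)|_{Z_1}$, that is, $a+b$, away from the base locus $B=Z_1\cap\{u^{\mathsf T}A_\bullet v=0\}$. The potential correction lives on the locus over $\Sigma_2$.

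This is the step I expect to be the main obstacle. Since $Z_1$ may be singular over $\Sigma_2$ and $\psi$ may have base points there, the intersection numbers could a priori pick up a Segre-class contribution from $B$. My plan is to show that $B$ is empty, or at least that $\dim B\le N-2-(N-2-i)-1$ in the relevant range.

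Note that $B$ lies over $\Sigma_2$, where $\dim\Sigma_2=N-5$. Inside the universal corank-one incidence $\widetilde Z_1$, the condition $u^{\mathsf T}A_iv=0$ for all $i$ is $N$ additional linear conditions on the generic linear section. Bertini--Kleiman applied to $\widetilde Z_1$ therefore gives
\[
\dim B\le (N-2)-N<0 .
\]
I would phrase this as follows: the map $\widetilde Z_1\to\PP^{m^2-1}$ and the bilinear map $(v,u)\mapsto u^{\mathsf T}(\cdot)v$ are both pulled back along a generic linear embedding $\C^N\hookrightarrow\C^{m\times m}$. Vanishing of the linear form $A\mapsto u^{\mathsf T}Av$ restricted to the image imposes $N$ conditions on $(v,u)$ and $x$ jointly. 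This is because the evaluation form $vu^{\mathsf T}$ is never zero, so its restriction to a generic $N$-dimensional subspace vanishes in codimension $N$ over $\PP^{m-1}\times\PP^{m-1}$; the incidence with $x$ only shrinks it further.

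Granting $B=\emptyset$, $\psi$ is a morphism with $\psi^*\hd=a+b$ and $\psi^*h=h$. Since $\psi$ has degree one onto $\Con X_M$, the projection formula gives
\[
\delta_i(\Con X_M)=\int_{Z_1}h^i(a+b)^{N-2-i}.
\]

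\textbf{Step 4: the universal case.} When $N=m^2$, the image $[u^{\mathsf T}A_\bullet v]$ is the rank-one matrix $vu^{\mathsf T}$. So the second projection of $\Con X_M$ is the Segre variety $\PP^{m-1}\times\PP^{m-1}$, recovering the classical duality statement.
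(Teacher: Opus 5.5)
Your Steps 1, 2 and 4 are the paper's argument: Jacobi's formula with $\mathrm{adj}\,M(x)=\lambda\,vu^{\mathsf T}$, the isomorphism over the corank-one locus, and irreducibility plus equal dimension for birationality.

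In Step 3 you correctly isolate a point the paper passes over quickly. Its proof only remarks that the degenerate locus has dimension $\le N-3$ and defers to Proposition~\ref{prop:witness}. That proposition claims generic members of the linear system $\{u^{\mathsf T}(\sum_i\ell_iA_i)v\}$ miss the preimage $W$ of $\Sigma_2$, which tacitly needs the system to be base-point-free along $W$. Moreover $W$ is a divisor in $Z_1$ once $N\ge5$ (fibres $\PP^1\times\PP^1$ over the $(N-5)$-dimensional $\Sigma_2$), so base points there would genuinely alter $\psi^*\hd$. Proving $B=\emptyset$ is therefore the right target, and it is true, but your argument does not prove it:
\begin{itemize}
\item Kleiman--Bertini does not apply in the form you invoke. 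The conditions $u^{\mathsf T}A_iv=0$ are built from the same subspace $\mathrm{span}(A_1,\dots,A_N)$ that cuts $Z_1$ out of $\widetilde Z_1$, so they are not fixed conditions met by a generic section.
\item The count $(N-2)-N$ ignores that $\sum_ix_i\,u^{\mathsf T}A_iv=u^{\mathsf T}M(x)v=0$ on $Z_1$. So the section $(u^{\mathsf T}A_iv)_i$ lies in a rank-$(N-1)$ subbundle (the same excess-one syzygy as in Theorem~\ref{thm:class}), and even the heuristic expected dimension is $-1$.
\item Most seriously, your reformulation gives $\dim\{(v,u):u^{\mathsf T}A_iv=0\ \forall i\}=2m-2-N$. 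This is nonnegative whenever $N\le 2m-2$, in particular throughout the regime $m\ge N$. The phrase ``the incidence with $x$ only shrinks it further'' bounds the image of $B$ by this nonnegative number and says nothing about emptiness.
\end{itemize}
What is needed is that no such $(v,u)$ has $\mathrm{span}(A_i)\cap\{A:\,Av=0,\ u^{\mathsf T}A=0\}\neq0$. That condition has codimension $2m-1-N$, exactly one more than the dimension available, so it requires a genuine incidence count over the Grassmannian of $N$-planes.

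A cleaner repair applies Kleiman--Bertini to the fixed rank strata $D_c^\circ=\{\rk A=m-c\}\subset\PP^{m^2-1}$, the same kind of transversality Lemma~\ref{lem:generic} already uses. At $A=M(x)$ of corank $c\ge1$, the normal space of $D_c^\circ$ is $\mathrm{Hom}(\ker A,\mathrm{coker}\,A)$. The projection to it is $A'\mapsto\bigl((v,u)\mapsto u^{\mathsf T}A'v\bigr)$ on $\ker A\times\ker A^{\mathsf T}$. Transversality of the generic $\PP^{N-1}$ to $D_c^\circ$ says that $\mathrm{span}(A_1,\dots,A_N)$ surjects onto this normal space (for $N\ge m^2$ this is automatic). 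Since $v\otimes u\neq0$ for nonzero $v\in\ker A$ and $u\in\ker A^{\mathsf T}$, evaluation at $(v,u)$ is a nonzero functional on the normal space, so $(u^{\mathsf T}A_iv)_i\neq0$. Hence $B=\emptyset$ and $\psi$ is a morphism on all of $Z_1$ with $\psi^*\hd=a+b$ on the nose. Your projection-formula conclusion then goes through, with $\psi_*[Z_1]=[\Con X_M]$ by birationality. For $c=1$ this argument is just your Step 1.
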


\begin{proof}
Jacobi's formula gives $d(\det M)=\mathrm{tr}\bigl(\mathrm{adj}(M)\,dM\bigr)$. At a corank-one matrix $M(x)$ with kernel line $\C v$ and cokernel line $\C u^{\mathsf T}$, the adjugate is rank one with image $\C v$ and kernel $u^\perp$: $\mathrm{adj}(M(x))=\kappa\, vu^{\mathsf T}$ for some $\kappa\ne 0$. Hence $\partial_i\det M(x)=\kappa\,u^{\mathsf T}A_iv$, i.e. the gradient of $\det M$ at $x$ is proportional to $(u^{\mathsf T}A_iv)_{i=1}^N$. By Lemma~\ref{lem:generic}(1), $X_M$ is reduced with singular locus $\Sigma_2$, so on $X_M\setminus\Sigma_2$ the gradient is nonzero and spans the conormal line of $X_M$ at $x$. Therefore $\psi$ is a morphism on $Z_1\setminus(\text{preimage of }\Sigma_2)$, lands in $\Con X_M$, and is bijective over the smooth locus (the conormal point determines $x$, and $x$ determines $(v,u)$ by Lemma~\ref{lem:generic}(2)). Both $Z_1$ and $\Con X_M$ are irreducible of dimension $N-2$, so $\psi$ is birational onto $\Con X_M$. Finally, for a linear form $\ell=(\ell_1,\dots,\ell_N)$ on $\Pd^{N-1}$, $\psi^*\ell=\sum_i\ell_i\,u^{\mathsf T}A_iv=u^{\mathsf T}\bigl(\textstyle\sum_i\ell_iA_i\bigr)v$ is bilinear in $(u,v)$, a section of $\mathcal O(0,1,1)$; hence $\psi^*\hd=a+b$. The multidegree formula follows by the projection formula, using that $\psi$ is birational and that the multidegrees of $\Con X_M$ can be computed on any birational model on which the two polarizations are pulled back. The degenerate locus has dimension $\le N-3$ and cannot affect intersection numbers against $N-2$ divisor classes; this is made precise in Proposition~\ref{prop:witness}.
\end{proof}

\begin{remark}[What a reading is allowed to see]\label{rem:reading-sees}
Lemma~\ref{lem:condet} is the structural reason the no-go theorem has its scope: counting solutions of the corank-one system, against any multihomogeneous conditions in $(h,a,b)$, is intersection theory on (a birational model of) the conormal variety of $X_M$. Whatever invariant of the input $f$ a method extracts this way, it is read through the cycle-theoretic shadow of $\Con$. Corank-$c$ incidences for $c\ge 2$ see conormal data of the deeper degeneracy loci; Proposition~\ref{prop:monotone} (conditional on Hypothesis~\ref{hyp:S}) addresses whether they are cheaper per root, and none of the corank-one results below depend on it.
\end{remark}

\section{The exact reading cost}\label{sec:reading}

Throughout this section $M$ is generic, $N\ge 3$, $m\ge 2$, and $P:=\PP^{N-1}\times\PP^{m-1}_v\times\PP^{m-1}_u$ with hyperplane classes $h,a,b$.

\subsection{The class of the incidence}

\begin{theorem}[Class of $Z_1$]\label{thm:class}
In $A^\bullet(P)$,
\[
[Z_1]\;=\;(h+a)^m\cdot c_{m-1}(K),\qquad
c_{m-1}(K)\;=\;\sum_{k=0}^{m-1}(-1)^k\,a^k\,(h+b)^{\,m-1-k},
\]
where $K=\ker\bigl(\mathcal O(h+b)^{\oplus m}\xrightarrow{\;\cdot v\;}\mathcal O(h+a+b)\bigr)$ is the rank-$(m-1)$ kernel bundle of the contraction with the tautological $v$.
\end{theorem}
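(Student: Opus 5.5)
The plan is to compute $[Z_1]$ in two stages, treating $Z_1$ as the zero locus of a regular section of a vector bundle on a smooth intermediate variety.

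\textbf{Stage 1: the kernel condition.} The condition $M(x)v=0$ is a section of $\mathcal O(h+a)^{\oplus m}$ on $P$: each row $\sum_i x_i (A_i v)_j$ is bilinear in $(x,v)$. Let
\[
Y=\{(x,v,u): M(x)v=0\}.
\]
This is the pullback to $P$ of $Y'=\{(x,v): M(x)v=0\}\subset\PP^{N-1}\times\PP^{m-1}$. For generic $M$, $Y'$ is a generic section of the universal incidence, which is smooth and irreducible. By the Bertini--Kleiman argument of Lemma~\ref{lem:generic}, $Y'$ has codimension $m$, so the section is regular and $[Y]=(h+a)^m$. When $N-1+m-1-m<0$, $Y'$ is empty and both sides vanish.

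\textbf{Stage 2: the cokernel condition with its syzygy.} On $Y$, the row vector $u^{\mathsf T}M(x)$ gives $m$ sections of $\mathcal O(h+b)$. These are not independent, because $(u^{\mathsf T}M(x))\cdot v=u^{\mathsf T}(M(x)v)=0$ on $Y$. So the section $s=u^{\mathsf T}M(x)$ of $\mathcal O(h+b)^{\oplus m}$ lands in the kernel $K$ of contraction with the tautological $v\in H^0(\mathcal O(a)^{\oplus m})$, that is, the map $\mathcal O(h+b)^{\oplus m}\to\mathcal O(h+a+b)$.

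Since $v\neq0$ pointwise, this contraction is surjective, so $K$ is a vector bundle of rank $m-1$ with
\[
c(K)=\frac{(1+h+b)^m}{1+h+a+b}.
\]
I would then check that the degree-$(m-1)$ part equals $\sum_k(-1)^k a^k (h+b)^{m-1-k}$. One way is to use the Euler sequence twisted by $\mathcal O(h+b-a)$, which gives $K\cong T_{\PP^{m-1}_v}(-a)\otimes\mathcal O(h+b)$ up to the right twist. Alternatively, expand the quotient and truncate, using $a^m=0$. The identity to verify is that the coefficient extraction of $(1+h+b)^m(1+h+a+b)^{-1}$ in degree $m-1$ collapses to the stated alternating sum modulo $a^m$. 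Writing $c=h+b$, this is the polynomial identity
\[
\sum_{j}\binom{m}{j}c^j(-1)^{m-1-j}(c+a)^{m-1-j}=\frac{c^m-(-a)^m}{c+a}
\]
(the degree-$(m-1)$ part of $((1+c)^m-\text{stuff})/(1+c+a)$). By the binomial theorem this equals $\bigl((c-(c+a))^m\cdot(\ldots)\bigr)$, and it reduces to $\sum_k(-1)^k a^k c^{m-1-k}$ modulo $a^m$. This is routine.

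Then $Z_1=Z(s)\subset Y$. By dimension count, $\dim Y=N-2+m-1+1$ and $\dim Z_1=N-2$ (Lemma~\ref{lem:generic}(2)), so $Z_1$ has the expected codimension $m-1$ in $Y$. Because $Y$ is smooth (or at least Cohen--Macaulay) and the section is regular, $[Z_1]=c_{m-1}(K)\cap[Y]$. Pushing forward to $P$ gives $(h+a)^m c_{m-1}(K)$.

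\textbf{Main obstacle.} The crux is regularity of $s$ as a section of $K$ on $Y$, equivalently that $Z_1$ is reduced of the expected codimension with no excess components. Expected dimension is already given by Lemma~\ref{lem:generic}(2),(3), including the bound $\le N-3$ on the $\Sigma_2$-preimage. The remaining point is reducedness of the zero scheme. For this I would use the universal incidence $\widetilde Z_1$: it is smooth of codimension $2m-1$ in the universal $P$, and it is cut transversely by $Mv=0$ together with the $m-1$ independent equations of $u^{\mathsf T}M$ modulo the syzygy, as in the $2m-1$ count of Lemma~\ref{lem:generic}. Kleiman transversality then transfers this to a generic linear section, and Cohen--Macaulayness of $Y$ gives equality of the fundamental class with the localized top Chern class.
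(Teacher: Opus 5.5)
Your proposal is correct and follows the paper's proof: $[Y]=(h+a)^m$ from the regular section $M(x)v$ of $\mathcal O(h+a)^{\oplus m}$, the syzygy $u^{\mathsf T}(M(x)v)=0$ putting $u^{\mathsf T}M(x)$ in $K$, and $c_{m-1}(K)$ extracted from $(1+h+b)^m/(1+h+a+b)$. The paper does that last step by checking that both candidate expressions, multiplied by $h+a+b$, give $(h+b)^m-(-a)^m$, so the identity holds exactly in $\Z[h,a,b]$ and not just modulo $a^m$.

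Two slips to fix. First, $\dim Y=N+m-3$, not $N-2+m-1+1$; the latter would make the codimension of $Z_1$ equal to $m$. Second, the dual Euler sequence gives $K\cong\Omega_{\PP^{m-1}_v}(a)\otimes\mathcal O(h+b)$, not $T_{\PP^{m-1}_v}(-a)\otimes\mathcal O(h+b)$. With the correct bundle, $c(\Omega_{\PP^{m-1}_v}(a))=1/(1+a)$ and the twisting formula $c_r(E\otimes L)=\sum_k c_k(E)\,c_1(L)^{r-k}$ give the alternating sum at once. With $T(-a)$ you would lose the signs.
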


\begin{proof}
The $m$ entries of $M(x)v$ are bilinear in $(x,v)$, i.e. a section of $E_1:=\mathcal O(h+a)^{\oplus m}$ on $\PP^{N-1}\times\PP^{m-1}_v$. For generic $M$ its zero locus $Y:=\{Mv=0\}$ fibres over $X_M$ with the projectivized kernels as fibres, generically points (Lemma~\ref{lem:generic}(2)), so $\dim Y=N-2$: this is the expected codimension $m$ in the $(N+m-3)$-dimensional product, the section is regular, and $[Y]=c_m(E_1)=(h+a)^m$.

Now work on $Y\times\PP^{m-1}_u\subset P$. The $m$ entries of $u^{\mathsf T}M(x)$ form a section $s$ of $\mathcal O(h+b)^{\oplus m}$. The coordinates $v_1,\dots,v_m$ of the $v$-factor define a surjection of bundles $\mathcal O(h+b)^{\oplus m}\to\mathcal O(h+a+b)$, $(\phi_j)\mapsto\sum_j\phi_jv_j$ (surjective since $v\ne0$ pointwise), with kernel a subbundle $K$ of rank $m-1$. On $Y\times\PP^{m-1}_u$ the section $s$ takes values in $K$: pointwise, $\sum_j(u^{\mathsf T}M)_jv_j=u^{\mathsf T}(Mv)=0$. Its zero locus is exactly $Z_1$, of the expected codimension $m-1$ in $Y\times\PP^{m-1}_u$ (Lemma~\ref{lem:generic}(2): $\dim Z_1=N-2=\dim(Y\times\PP^{m-1}_u)-(m-1)$), and for generic $M$ the section is regular along $Z_1$. Hence
\[
[Z_1]=[Y\times\PP^{m-1}_u]\cdot c_{m-1}(K)=(h+a)^m\,c_{m-1}(K).
\]
It remains to evaluate $c_{m-1}(K)$. From $0\to K\to\mathcal O(h+b)^{\oplus m}\to\mathcal O(h+a+b)\to0$,
\[
c(K)\;=\;\frac{(1+h+b)^m}{1+h+a+b}.
\]
Write $s=h+b$, $t=h+a+b=s+a$ as independent formal variables. In $\Z[s,a]$ one has the telescoping identity
\begin{equation}\label{eq:telescope}
\Bigl(\sum_{k=0}^{m-1}(-1)^ka^k s^{\,m-1-k}\Bigr)(s+a)\;=\;s^m-(-a)^m ,
\end{equation}
while the degree-$(m-1)$ part of $(1+s)^m\sum_{j\ge0}(-1)^jt^j$ is $\sum_{j=0}^{m-1}(-1)^j\binom{m}{m-1-j}s^{m-1-j}t^{j}$, and
\[
\Bigl(\sum_{j=0}^{m-1}(-1)^j\binom{m}{m-1-j}s^{m-1-j}t^{j}\Bigr)\,t
\;=\;s^m-\sum_{j=0}^{m}\binom{m}{j}s^{m-j}(-t)^{j}
\;=\;s^m-(s-t)^m\;=\;s^m-(-a)^m .
\]
Both candidate expressions for $c_{m-1}(K)$ multiply to the same polynomial under the non-zero-divisor $t\in\Z[h,a,b]$, hence agree in $\Z[h,a,b]$, and therefore agree after mapping to $A^\bullet(P)$. This proves the closed form.
\end{proof}

\subsection{The polar count}

By Lemma~\ref{lem:condet} the top polar degree of the generic determinantal hypersurface is read off $Z_1$ by intersecting with $N-2$ contraction conditions of class $a+b$. Define
\[
T(N,m)\;:=\;\int_{P}\,[Z_1]\,(a+b)^{N-2}\;=\;\delta_0(\Con X_M)\;=\;\deg X_M^{\vee}.
\]

\begin{theorem}[Closed forms]\label{thm:T}
With $[\,\cdot\,]$ denoting coefficient extraction in $\Z[h,a,b]$:
\begin{align}
T(N,m)&=\bigl[h^{N-1}a^{m-1}b^{m-1}\bigr]\,(h+a)^m\Bigl(\sum_{k=0}^{m-1}(-1)^ka^k(h+b)^{m-1-k}\Bigr)(a+b)^{N-2}
\label{eq:T-poly}\\
&=\bigl[h^{N-1}a^{m-1}b^{m-1}\bigr]\,\frac{(h+a)^m(h+b)^m(a+b)^{N-2}}{h+a+b}
\label{eq:T-rational}\\
&=\sum_{j=0}^{N-2}\binom{N-2}{j}\sum_{k\ge0}(-1)^k\binom{m}{k+j+1}\binom{m-1-k}{\,N-2-j-k\,}
\label{eq:T-binom}\\
&=\sum_{j=0}^{N-2}\binom{N-2}{j}\sum_{l=0}^{j}(-1)^{j-l}\binom{m}{l}\binom{m+j-l}{\,N-1-l\,}.
\label{eq:T-binom2}
\end{align}
$T(N,m)$ is a polynomial in $m$ of degree $N-1$ with leading coefficient $\binom{2N-4}{N-2}/(N-1)!$.
\end{theorem}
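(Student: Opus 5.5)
The plan is to derive all four expressions from Theorem~\ref{thm:class} by pure coefficient bookkeeping, then read off the leading term in $m$ from \eqref{eq:T-binom2}. The first step is \eqref{eq:T-poly}. Since $A^\bullet(P)=\Z[h,a,b]/(h^N,a^m,b^m)$ and $\int_P h^{N-1}a^{m-1}b^{m-1}=1$, the definition $T(N,m)=\int_P[Z_1](a+b)^{N-2}$ together with the class formula of Theorem~\ref{thm:class} gives \eqref{eq:T-poly} immediately. Working in $\Z[h,a,b]$ before reducing is harmless, because we only extract one monomial, and that monomial lies below all three truncations.

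For \eqref{eq:T-rational} I will fix the convention that $1/(h+a+b)$ is expanded as a power series in $a/(h+b)$:
\[
\frac{1}{h+a+b}=\sum_{k\ge0}(-1)^k a^k (h+b)^{-k-1}.
\]
With this convention, $(h+b)^m/(h+a+b)$ equals the finite sum $c_{m-1}(K)$ plus terms with $k\ge m$. Each such extra term carries $a^k$ with $k\ge m$, and multiplying by the remaining factors $(h+a)^m(a+b)^{N-2}$ only raises the $a$-degree. Hence every extra term has $a$-degree $\ge m$ and does not contribute to the coefficient of $a^{m-1}$. This also agrees with the telescoping identity \eqref{eq:telescope}, which writes $c_{m-1}(K)=\bigl((h+b)^m-(-a)^m\bigr)/(h+a+b)$: the $(-a)^m$ piece is exactly what is discarded. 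I will state this convention explicitly, since the quotient is not a polynomial.

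Next comes \eqref{eq:T-binom}. I will expand the three factors of \eqref{eq:T-poly}:
\[
(h+a)^m=\sum_p\binom{m}{p}h^{m-p}a^p,\qquad (a+b)^{N-2}=\sum_j\binom{N-2}{j}a^{N-2-j}b^{j},
\]
together with $a^k(h+b)^{m-1-k}$ expanded binomially. Matching exponents then does the rest. The $b$-exponent forces the power of $b$ taken from $(h+b)^{m-1-k}$ to be $m-1-j$, so the $h$-exponent from that factor is $j-k$. The $a$-exponent forces $p=m-1-k-(N-2-j)$. The $h$-exponent then closes automatically, since total degree is homogeneous. The surviving coefficient is
\[
\binom{N-2}{j}(-1)^k\binom{m}{m-1-k-N+2+j}\binom{m-1-k}{j-k}.
\]
Using symmetry of binomials, this should convert into the displayed form $\binom{m}{k+j+1}\binom{m-1-k}{N-2-j-k}$ after the substitution $j\mapsto N-2-j$, which is legitimate because $\binom{N-2}{j}$ is symmetric. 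I expect to have to redo the indexing carefully here.

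The form \eqref{eq:T-binom2} I will obtain by the same extraction done on the rational form \eqref{eq:T-rational}. Concretely, I write $(h+b)^m/(h+a+b)$ via $t=h+a+b$ with $h+b=t-a$, expand $(t-a)^m$, and collapse the inner $k$-sum by Vandermonde / upper-negation reindexing $l=\dots$. I will check the final identity numerically on small $(N,m)$ as a safeguard.

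The last step is the degree and leading coefficient. In \eqref{eq:T-binom2} every product $\binom{m}{l}\binom{m+j-l}{N-1-l}$ is a polynomial in $m$ of degree exactly $N-1$, with leading coefficient $1/\bigl(l!\,(N-1-l)!\bigr)$. The inner sum therefore has leading coefficient
\[
\frac{1}{(N-1)!}\sum_{l\le j}(-1)^{j-l}\binom{N-1}{l}=\frac{1}{(N-1)!}\binom{N-2}{j},
\]
by the standard partial alternating sum of binomials. Summing over $j$ gives $\sum_j\binom{N-2}{j}^2/(N-1)!=\binom{2N-4}{N-2}/(N-1)!$ by Vandermonde, which is nonzero, so the degree is exactly $N-1$.

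I expect the main obstacle to be the index and sign bookkeeping in passing between \eqref{eq:T-binom} and \eqref{eq:T-binom2}. A secondary point is stating the expansion convention for \eqref{eq:T-rational} precisely enough that the coefficient extraction is unambiguous.
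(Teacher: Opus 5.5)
Everything except \eqref{eq:T-binom2} is correct and follows the paper. The reindexing you were unsure about in \eqref{eq:T-binom} is clean: $\binom{m}{m+1-k-N+j}=\binom{m}{N-1+k-j}$, and $j\mapsto N-2-j$ turns your coefficient into exactly $\binom{N-2}{j}(-1)^k\binom{m}{k+j+1}\binom{m-1-k}{N-2-j-k}$. The paper writes $(a+b)^{N-2}=\sum_j\binom{N-2}{j}a^jb^{N-2-j}$ and lands there directly.

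For \eqref{eq:T-rational}, the paper expands $1/(h+a+b)$ at $h=\infty$ rather than in powers of $a/(h+b)$. The discarded piece $(-a)^m(h+a)^m(a+b)^{N-2}/(h+a+b)$ has $a$-degree at least $m$ under either expansion. (Your negative powers of $h+b$ are $a$-free, however you expand them.) So the two conventions give the same coefficient.

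The gap is \eqref{eq:T-binom2}: you assert it but do not derive it, and the route you sketch does not arrive at it. Substituting $h+b=t-a$ gives $(h+b)^m/t=\sum_{l<m}\binom{m}{l}(-a)^lt^{m-1-l}+(-a)^m/t$. Extracting the coefficient then yields, for the term $b^j$ of $(a+b)^{N-2}$, the single sum
\[
\sum_{l=0}^{\min(j,m-1)}(-1)^l\binom{m}{l}\binom{m-1-l}{j-l}\binom{m+j-l}{N-1}.
\]
This is a correct closed form, but it is not \eqref{eq:T-binom2} term by term. At $N=3$, $j=1$ it reads $(m-1)\binom{m+1}{2}-m\binom{m}{2}$ against $m^2-\binom{m+1}{2}$; only the totals agree. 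There is no inner $k$-sum left to ``collapse'', and the reindexing ``$l=\dots$'' is never specified. Numerical checks on small $(N,m)$ cannot establish the identity for all $N$.

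The missing step is short if you start from \eqref{eq:T-binom}, which you already have. Put $r=N-2-j\ge 0$ and $k'=k+j+1$.

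Upper negation, $\binom{m-1-k}{r-k}=(-1)^{r-k}\binom{r-m}{r-k}$, turns the inner sum into $(-1)^r\sum_{k'\ge j+1}\binom{m}{k'}\binom{r-m}{N-1-k'}$. Over all $k'\ge 0$, Vandermonde gives $\binom{r}{N-1}=0$, since $0\le r<N-1$. So the inner sum equals $-(-1)^r\sum_{l=0}^{j}\binom{m}{l}\binom{r-m}{N-1-l}$.

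Negating back, $\binom{r-m}{N-1-l}=(-1)^{N-1-l}\binom{m+j-l}{N-1-l}$, gives the inner sum of \eqref{eq:T-binom2}. The sign is $(-1)^{r+N-l}=(-1)^{j-l}$.

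This is the paper's computation in elementary form. The paper writes $\binom{m-1-k}{r-k}=[z^{r-k}](1+z)^{m-1-k}$, substitutes $w=-z/(1+z)$, and sums the tail $\sum_k\binom{m}{k+j+1}w^k=w^{-(j+1)}\bigl((1+w)^m-\sum_{l\le j}\binom{m}{l}w^l\bigr)$. The $(1+w)^m$ piece contributes $\pm[z^{N-1}](1+z)^j=0$, which is exactly your Vandermonde zero. With \eqref{eq:T-binom2} in hand, your leading-coefficient argument is correct as written.
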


\begin{proof}
\eqref{eq:T-poly} is Theorem~\ref{thm:class} together with the standard fact that an intersection number on $P=\PP^{N-1}\times\PP^{m-1}\times\PP^{m-1}$ is the coefficient of $h^{N-1}a^{m-1}b^{m-1}$. For \eqref{eq:T-rational}, interpret the fraction by expanding $1/(h+a+b)=\sum_{s\ge0}(-1)^s(a+b)^sh^{-s-1}$ at $h=\infty$, which makes the coefficient extraction well defined on homogeneous rational functions. By \eqref{eq:telescope}, $(h+a)^m\,c_{m-1}(K)\,(a+b)^{N-2}$ differs from the fraction in \eqref{eq:T-rational} by $(-a)^m(h+a)^m(a+b)^{N-2}/(h+a+b)$, every term of whose expansion has $a$-degree at least $m$ and hence contributes nothing to the coefficient of $a^{m-1}$. So \eqref{eq:T-rational} agrees with \eqref{eq:T-poly}. For \eqref{eq:T-binom}: expand the three factors of \eqref{eq:T-poly},
\[
(h+a)^m=\sum_\alpha\binom m\alpha h^{m-\alpha}a^\alpha,\quad
c_{m-1}(K)=\sum_k(-1)^ka^k\sum_\beta\binom{m-1-k}\beta h^{m-1-k-\beta}b^\beta,\quad
(a+b)^{N-2}=\sum_j\binom{N-2}ja^jb^{N-2-j},
\]
and match exponents: $a$ forces $\alpha=m-1-k-j$, $b$ forces $\beta=m-N+1+j$, and then the $h$-degree is automatically $(k+j+1)+(N-2-j-k)=N-1$. The coefficient is $\binom{m}{m-1-k-j}\binom{m-1-k}{m-N+1+j}=\binom{m}{k+j+1}\binom{m-1-k}{N-2-j-k}$, proving \eqref{eq:T-binom}. For \eqref{eq:T-binom2}, fix $j$ and write the inner sum of \eqref{eq:T-binom} as a residue: with $\binom{m-1-k}{r-k}=[z^{r-k}](1+z)^{m-1-k}$, $r:=N-2-j$,
\[
\sum_k(-1)^k\binom{m}{k+j+1}\binom{m-1-k}{r-k}
=[z^r](1+z)^{m-1}\sum_k\binom m{k+j+1}w^k,\qquad w:=\frac{-z}{1+z},
\]
and $\sum_k\binom m{k+j+1}w^k=w^{-(j+1)}\bigl((1+w)^m-\sum_{l\le j}\binom ml w^l\bigr)$ with $1+w=(1+z)^{-1}$. Substituting and simplifying,
\[
\text{inner}_j=(-1)^{j+1}[z^{N-1}](1+z)^j+\sum_{l=0}^{j}(-1)^{j-l}\binom ml\,[z^{N-1-l}](1+z)^{m+j-l},
\]
and the first term vanishes since $j\le N-2<N-1$. This is \eqref{eq:T-binom2}. Finally, each $\binom{m}{l}\binom{m+j-l}{N-1-l}$ is a polynomial in $m$ of degree $l+(N-1-l)=N-1$ with leading coefficient $1/\bigl(l!\,(N-1-l)!\bigr)$; summing,
\[
(N-1)!\cdot[\text{lead}]=\sum_{j}\binom{N-2}{j}\sum_{l=0}^j(-1)^{j-l}\binom{N-1}{l}
=\sum_j\binom{N-2}j\binom{N-2}j=\binom{2N-4}{N-2},
\]
using $\sum_{l=0}^j(-1)^{j-l}\binom{N-1}{l}=\binom{N-2}{j}$ (alternating partial row sum) and Vandermonde.
\end{proof}

\subsection{Classical anchors}

\begin{proposition}[Anchors]\label{prop:anchors}
$T(3,m)=m(m-1)$, $\ T(4,m)=m(m-1)^2$, and $T(5,5)=220$. The first two equal the dual degrees of the smooth generic determinantal plane curve and surface; the third satisfies the Teissier identity
\[
T(5,5)\;=\;\underbrace{5\cdot4^3}_{\text{smooth value}}\;-\;\underbrace{50}_{\#\text{nodes}}\cdot\underbrace{(\mu+\mu^{(1)})}_{=2}\;=\;320-100 .
\]
\end{proposition}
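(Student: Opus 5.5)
The plan has two halves. First, evaluate the closed form of Theorem~\ref{thm:T} at the three parameter values. Second, match each value with an independent classical computation of the dual degree, using $T(N,m)=\deg X_M^\vee$ from Lemma~\ref{lem:condet}.

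\textbf{Step 1: the numbers.} I would use \eqref{eq:T-binom2}, since its inner sums are short for small $N$.

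For $N=3$, only $j\in\{0,1\}$ occur:
\[
T(3,m)=\binom m2+\Bigl(m\cdot m-\binom{m+1}{2}\Bigr)=\tfrac{m(m-1)}2-\tfrac{m(m+1)}2+m^2=m(m-1).
\]

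For $N=4$, the sum runs over $j=0,1,2$ with weights $1,2,1$. Each term is an explicit cubic polynomial in $m$. I would collect them symbolically, and as a cross-check compare four values of $m$ against $m(m-1)^2$, since both sides are cubics (Theorem~\ref{thm:T} gives degree $N-1=3$).

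For $(N,m)=(5,5)$, the sum is finite: $j=0,\dots,3$ with weights $1,3,3,1$. I would evaluate it by hand, or equivalently extract the coefficient of $h^4a^4b^4$ in \eqref{eq:T-poly}, and confirm the value $220$.

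\textbf{Step 2: the smooth anchors.} By Lemma~\ref{lem:generic}(1), $\Sigma_2$ is empty for $N\le4$, so $X_M$ is a smooth curve or surface of degree $m$. For smooth $X$ of degree $d$ in $\PP^n$, the conventions in Section~\ref{sec:conormal} give $\delta_0(\Con X)=d(d-1)^{n-1}$. This is $m(m-1)$ for $n=2$ and $m(m-1)^2$ for $n=3$. These agree with Step~1, and the consistency is automatic because $T=\delta_0$ by Lemma~\ref{lem:condet}.

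\textbf{Step 3: the quintic threefold.} For $N=m=5$, Lemma~\ref{lem:generic}(1) says $\Sigma_2\subset\PP^4$ is finite and is cut out by a generic linear section of the rank-$\le3$ locus of $5\times5$ matrices. Its cardinality is therefore the degree of that determinantal variety. The Thom--Porteous / Giambelli formula for corank $c$ in $m\times m$ matrices gives this degree as
\[
\prod_{i=0}^{c-1}\frac{i!\,(m+i)!}{(c+i)!\,(m-c+i)!}.
\]
With $c=2$ and $m=5$ this is $10\cdot5=50$.

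Next I would identify the singularity type at each such point. At a corank-two matrix, choose local coordinates splitting off the invertible $3\times3$ block; the determinant is then locally a unit times the determinant of a $2\times2$ block of functions, $xy-zw$. Transversality of the generic slice makes these four entries local coordinates on $\PP^4$. So each point of $\Sigma_2$ is an ordinary node ($A_1$) of the threefold, with $\mu=1$ and sectional Milnor number $\mu^{(1)}=1$.

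Finally, I would apply Teissier's formula for hypersurfaces with isolated singularities,
\[
\deg X^\vee=d(d-1)^{n-1}-\sum_p\bigl(\mu_p+\mu_p^{(1)}\bigr).
\]
This gives $5\cdot4^3-50\cdot2=220$, matching Step~1.

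\textbf{Main obstacle.} The only genuinely delicate step is Step~3's local claim that the generic slice meets the corank-two stratum transversally and that the resulting point is an ordinary node. I would justify this by Kleiman transversality applied to $\Sing D$, as in the proof of Lemma~\ref{lem:generic}, together with the normal form $\det\begin{pmatrix}x&z\\w&y\end{pmatrix}$ of the determinant near a corank-two point. Everything else is finite arithmetic or a citation of a classical formula.
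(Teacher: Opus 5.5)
Your proposal is correct and follows essentially the same route as the paper. You evaluate \eqref{eq:T-binom2} at the three parameter values, use $\Sigma_2=\emptyset$ for $N\le 4$ to get the smooth dual degrees $m(m-1)^{N-2}$, and for $(5,5)$ count $50$ nodes and subtract Teissier's drop of $2$ per node. The differences are minor: your product formula for the degree of the rank-$\le 3$ locus gives the same Thom--Porteous number the paper computes as $c_2^2-c_1c_3=15^2-5\cdot 35$, and your Schur-complement/transversality argument proves that the points of $\Sigma_2$ are ordinary nodes, which the paper only asserts.
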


\begin{proof}
The values follow from \eqref{eq:T-binom2} by direct evaluation; for instance, at $N=3$ the inner sums are $\binom m2$ for both $j=0,1$, giving $T(3,m)=2\binom m2=m(m-1)$, and at $(N,m)=(5,5)$ the four inner sums are $5,35,35,5$ with weights $1,3,3,1$, giving $220$. (We record the $N=4$ evaluation: the inner sums are $\binom m3,\ \tfrac{m(m-1)(2m-1)}6,\ \binom m3$ with weights $1,2,1$, summing to $m(m-1)^2$.) For the geometric identifications: by Lemma~\ref{lem:generic}(1), $\Sing X_M=\Sigma_2$ is empty for $N\le 4$, so $X_M$ is a smooth hypersurface of degree $m$ in $\PP^{N-2+1}$ and $\delta_0=m(m-1)^{N-2}$, matching. For $(N,m)=(5,5)$: $\Sigma_2\subset\PP^4$ is the rank-$\le3$ locus of the generic map $\mathcal O(-1)^{\oplus5}\to\mathcal O^{\oplus5}$, of expected codimension $(5-3)^2=4$, hence a finite set of ordinary nodes of the quintic threefold $X_M$; by Thom--Porteous its count is the coefficient of $h^4$ in $\Delta^{(2)}_{(2)}\bigl(c(\mathcal O^{5}-\mathcal O(-1)^{5})\bigr)=c_2^2-c_1c_3$ with $c=(1-h)^{-5}$, i.e. $15^2-5\cdot35=50$. Each ordinary node of a threefold drops the dual degree by $\mu+\mu^{(1)}=1+1=2$ \cite{Tei73,Pie78}, so $\deg X_M^\vee=320-100=220$, matching \eqref{eq:T-binom2}.
\end{proof}

\subsection{The witness: the count is attained}

\begin{proposition}[Floor witness]\label{prop:witness}
For generic $M$ and $N-2$ generic linear forms $\ell^{(1)},\dots,\ell^{(N-2)}$ on $\Pd^{N-1}$, the system
\[
M(x)v=0,\qquad u^{\mathsf T}M(x)=0,\qquad u^{\mathsf T}\Bigl(\textstyle\sum_i\ell^{(t)}_iA_i\Bigr)v=0\ \ (t=1,\dots,N-2)
\]
has exactly $T(N,m)$ solutions in $P$, all reduced, all with $\rk M(x)=m-1$. Consequently every valid intersection-theoretic upper bound for the number of solutions of the corank-one reading is at least $T(N,m)$: the reading cost is exact.
\end{proposition}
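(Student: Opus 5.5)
We would prove this by a Kleiman--Bertini transversality argument on $Z_1$ combined with the class computation of Theorem~\ref{thm:class}.

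\textbf{Step 1: interpret the extra equations.} The $N-2$ extra equations $u^{\mathsf T}(\sum_i\ell^{(t)}_iA_i)v=0$ are, by Lemma~\ref{lem:condet}, exactly the pullbacks $\psi^*\ell^{(t)}$. Each is a section of $\mathcal O(0,1,1)$ on $P$. As $\ell$ ranges over linear forms on $\Pd^{N-1}$, these sections span a linear system $W\subset H^0(P,\mathcal O(a+b))$.

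\textbf{Step 2: base locus of $W$.} Restricted to $Z_1$, the base locus of $W$ is precisely the locus where $u^{\mathsf T}A_iv=0$ for all $i$, i.e. where $\psi$ is undefined. By the proof of Lemma~\ref{lem:condet} this is contained in the preimage $Z_1^{\ge2}$ of $\Sigma_2$: on the corank-one part the gradient $(u^{\mathsf T}A_iv)_i$ is a nonzero multiple of $\nabla\det M$, which is nonzero because $X_M$ is reduced with singular locus $\Sigma_2$.

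\textbf{Step 3: generic sections avoid the degenerate locus.} Let $Z_1^\circ=Z_1\setminus Z_1^{\ge2}$. By Lemma~\ref{lem:generic}(3), $Z_1^\circ$ is smooth of pure dimension $N-2$, and $W$ is base-point free on it. Bertini's theorem for base-point-free linear systems in characteristic zero (equivalently, Kleiman transversality for the morphism $\psi:Z_1^\circ\to\Pd^{N-1}$ against a generic codimension-$(N-2)$ linear subspace) then shows that $N-2$ generic members cut $Z_1^\circ$ in finitely many reduced points.

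We also need that no solution lies on $Z_1^{\ge2}$. Here the key input is $\dim Z_1^{\ge2}\le N-3$ from Lemma~\ref{lem:generic}(2). This alone is not enough, since the base locus lies inside every member of $W$, so we cannot argue by dimension counting against generic members. This is the main obstacle. We would try two routes:
\begin{itemize}
\item[(a)] Show directly that $\Sigma_2$ is empty for $N\le4$. For $N\ge5$, use a dimension count on the universal incidence $\widetilde Z_1$: over a point with $\rk M(x)\le m-2$, the fibre in $(v,u)$ is $\PP(\ker)\times\PP(\mathrm{coker})$, and the bilinear conditions $u^{\mathsf T}A_iv=0$ are generically independent. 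This should show that the base scheme $B=Z_1^{\ge2}\cap\{u^{\mathsf T}A_\bullet v=0\}$ is empty or has dimension $<0$ for generic $A$.
\item[(b)] Failing that, pass to the blow-up $\widetilde Z_1\to Z_1$ along $B$, where $\psi$ becomes a morphism. On the blow-up the generic intersection is transverse and avoids the exceptional divisor. Then compare intersection numbers, checking that the excess contribution of $B$ vanishes because $\dim B$ is small relative to $N-2$.
\end{itemize}
We expect (a) to succeed. The constraint $u^{\mathsf T}A_iv=0$ for all $N$ values of $i$ is $N$ conditions on a locus of dimension $\dim\Sigma_2+2\cdot1=N-5+2$ or so, so it is generically empty.

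\textbf{Step 4: count and conclude.} With $B=\emptyset$, $Z_1\cap H_1\cap\dots\cap H_{N-2}$ is a finite reduced set contained in $Z_1^\circ$, where $\rk M(x)=m-1$. Its cardinality is $\int_P[Z_1](a+b)^{N-2}=T(N,m)$, using the class of Theorem~\ref{thm:class} and the fact that a transverse zero-dimensional intersection computes the intersection product.

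For the consequence, this specific system is an instance of the corank-one reading whose solution set has exactly $T(N,m)$ points. Any valid upper bound on solution counts, over all instances, must therefore be at least $T(N,m)$.
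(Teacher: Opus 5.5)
Your outline is the paper's own: Kleiman--Bertini for the pulled-back system $\{\psi^*\ell\}$ on $Z_1$, base-point-freeness on the corank-one locus via the gradient, $\dim Z_1^{\ge2}\le N-3$, and Theorem~\ref{thm:class} to identify the count with $T(N,m)$. The only divergence is the step you single out as the main obstacle, and there you are right. The paper's proof establishes base-point-freeness only off $Z_1^{\ge2}$. It then concludes from $\dim Z_1^{\ge2}<N-2$ alone that $N-2$ generic members miss $Z_1^{\ge2}$, and that inference needs $Z_1^{\ge2}$ to contain no base points.

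Your route (a) supplies exactly this missing fact. It closes by a single parameter count, with no stratum-by-stratum analysis needed:
\begin{itemize}
\item Fix $(x,v,u)\in P$. The conditions $M(x)v=0$, $u^{\mathsf T}M(x)=0$, and $u^{\mathsf T}A_iv=0$ $(1\le i\le N)$ are linear in $(A_1,\dots,A_N)$.
\item Under the trace pairing they span $x\otimes\{w_1v^{\mathsf T}+uw_2^{\mathsf T}:w_1,w_2\in\C^m\}+\C^N\otimes\C\,uv^{\mathsf T}$. The two summands have dimensions $2m-1$ and $N$ and meet only in the line through $x\otimes uv^{\mathsf T}$, so these are $2m+N-2$ independent conditions.
\item The incidence of all such $(A_\bullet,x,v,u)$ therefore has dimension $Nm^2+(N+2m-3)-(2m+N-2)=Nm^2-1$, so it cannot dominate $(\C^{m\times m})^N$.
\item Hence for generic $M$ the vector $(u^{\mathsf T}A_iv)_i$ vanishes nowhere on $Z_1$, and $\psi$ is a morphism on all of $Z_1$.
\item Kleiman applied to $Z_1^{\ge2}\to\Pd^{N-1}$ then shows that a generic codimension-$(N-2)$ linear subspace pulls back to the empty set on $Z_1^{\ge2}$.
\end{itemize}
In the language of your sketch, the same fact reads as follows. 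Transversality of $\mathrm{span}(A_i)$ to the rank-$(m-r)$ stratum means the $r^2$ vectors $(u_k^{\mathsf T}A_iv_j)_i$, built from bases of the cokernel and kernel, are linearly independent. So no nonzero pure tensor $u\otimes v$ is sent to $0$.

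Route (b) is unnecessary. With (a) written out, your Step~4 and the conservation-of-number consequence coincide with the paper's. It is precisely your extra step that justifies the clauses ``all reduced, all with $\rk M(x)=m-1$'' and the absence of any excess contribution.
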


\begin{proof}
By Lemma~\ref{lem:generic}, $Z_1$ is reduced of pure dimension $N-2$, smooth away from the preimage $W$ of $\Sigma_2$, and $\dim W\le N-3$. The $N-2$ conditions are pullbacks under $\psi$ of generic hyperplanes of $\Pd^{N-1}$ (Lemma~\ref{lem:condet}); the corresponding linear system on $Z_1$ is base-point-free away from $W$ (on the corank-one locus the vector $(u^{\mathsf T}A_iv)_i$ is proportional to the nonzero gradient, so some contraction is nonzero). By Kleiman--Bertini, $N-2$ generic members meet $Z_1\setminus W$ transversally in finitely many reduced points and miss $W$ (since $\dim W-(N-2)<0$). The number of points is the intersection number $\int[Z_1](a+b)^{N-2}=T(N,m)$, with no excess contribution because the intersection is transverse, zero-dimensional, and contained in the smooth locus. The last sentence is the conservation-of-number direction: a bound asserted for all $M$ of size $m$ must hold for the generic $M$, whose count is exactly $T(N,m)$.
\end{proof}

Appendix~\ref{app:conic} verifies the case $(N,m)=(3,2)$ by an explicit Gr\"obner computation: two reduced solutions, matching $T(3,2)=2=\deg(\text{conic})^\vee$, together with the single B\'ezout junk point predicted by Proposition~\ref{prop:bez} below.

\subsection{Comparison with the B\'ezout estimate}

The companion paper \cite{She26a} bounds the count by the multihomogeneous B\'ezout number of the system with one (dependent) equation discarded:
\[
\Bbez(N,m)\;:=\;\bigl[h^{N-1}a^{m-1}b^{m-1}\bigr]\,(h+a)^m(h+b)^{m-1}(a+b)^{N-2}
\;=\;\sum_{j=0}^{N-2}\binom{N-2}{j}\binom{m}{j+1}\binom{m-1}{\,N-2-j\,}.
\]

\begin{proposition}[B\'ezout junk is one-sided]\label{prop:bez}
For generic $M$ and a generic choice of discarded equation, the dropped-syzygy system cuts out $Z_1\cup Z'$ with $Z'$ effective of pure dimension $N-2$, and
\[
\Bbez(N,m)\;=\;T(N,m)\;+\;\int [Z']\,(a+b)^{N-2}\;\ge\;T(N,m),
\]
the junk term being nonnegative since $a+b$ is globally generated. At $N=3$ the comparison is exact for every $m$: $\Bbez(3,m)=\binom m2+m(m-1)=\tfrac32\,m(m-1)=\tfrac32\,T(3,m)$.
\end{proposition}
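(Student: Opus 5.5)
I would work on $Y\times\PP^{m-1}_u$, where $Y=\{Mv=0\}$ and $[Y]=(h+a)^m$ as in the proof of Theorem~\ref{thm:class}. The full system $u^{\mathsf T}M(x)=0$ is a section $s$ of $\mathcal O(h+b)^{\oplus m}$ that lands in the kernel bundle $K$. Discarding one equation means the following. For a generic index $j_0$, or more invariantly a generic linear functional $\lambda$ on $\C^m$, keep only the $m-1$ components of $s$ in a complementary quotient. Call this quotient $Q\cong\mathcal O(h+b)^{\oplus(m-1)}$; it is a trivial-twist bundle, and $c_{m-1}(Q)=(h+b)^{m-1}$. Its zero scheme $D$ is the dropped-syzygy locus, and
\[
\Bbez(N,m)=\int (h+a)^m(h+b)^{m-1}(a+b)^{N-2}
\]
is the degree of the localized top Chern class of $Q$ against $(a+b)^{N-2}$.

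The key step is to show that $D$ has pure dimension $N-2$, the expected dimension, so that $[D]=c_{m-1}(Q)\cap[Y\times\PP^{m-1}_u]$ is an honest cycle rather than an excess class. I would identify $D$ set-theoretically. At a point of $D$, all coordinates of $u^{\mathsf T}M$ except the $j_0$-th vanish. By the syzygy $u^{\mathsf T}Mv=0$, the remaining coordinate then satisfies $(u^{\mathsf T}M)_{j_0}v_{j_0}=0$.

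So $D=Z_1\cup Z'$, where $Z'$ lies in $\{v_{j_0}=0\}\cap Y\times\PP^{m-1}_u$ and is cut out there by $m-1$ conditions on $u$. For a given $x$ these say that $u^{\mathsf T}M(x)$ is a multiple of $e_{j_0}^{\mathsf T}$, that is, $u\in M(x)^{-\mathsf T}$-preimage of a line. Generically in the corank-one locus this is a single point $u$, and the condition $v_{j_0}=0$ is one condition on the divisor $X_M$ part of $Y$.

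A dimension count through the universal incidence, with Bertini--Kleiman as in Lemma~\ref{lem:generic}, should give $\dim Z'=N-2$ with each component of $Z'$ reduced. I expect the main obstacle here: one must check that no component of $D$ has excess dimension over $\Sigma_2$ or over $\{v_{j_0}=0\}\cap$ (corank $\ge2$). For that I would bound the fibres via the universal $\PP^{m^2-1}$ family, where these loci have codimension large enough that a generic $\PP^{N-1}$ section keeps them of dimension $\le N-3$.

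Once purity is established, the section of $Q$ is regular. The zero-scheme class is then $[D]=[Z_1]+[Z']$, with multiplicity one on $Z_1$ because $Z_1$ is reduced and $s$ is regular there by Theorem~\ref{thm:class}. Intersecting with $(a+b)^{N-2}$ gives the displayed identity. The junk integral is nonnegative because $a+b$ is globally generated and $Z'$ is effective, as already noted in Section~\ref{sec:conormal}.

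For $N=3$, I would not rely on geometry. I would evaluate the binomial formula for $\Bbez$ directly: the $j=0$ term is $\binom m1\binom{m-1}1=m(m-1)$ and the $j=1$ term is $\binom m2$. Their sum is $\tfrac32 m(m-1)$, which is $\tfrac32 T(3,m)$ by Proposition~\ref{prop:anchors}. This also identifies the junk contribution as $\binom m2$, consistent with the single junk point at $m=2$.
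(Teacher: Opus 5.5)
Your route is the paper's own: drop one equation, show the zero scheme $D\supseteq Z_1$ has expected dimension $N-2$ and class $(h+a)^m(h+b)^{m-1}$, intersect the effective residual with $(a+b)^{N-2}$, and evaluate $N=3$ from the binomial sum. Your syzygy argument placing the junk inside $\{v_{j_0}=0\}$ is more explicit than the paper's sketch. But the fibre description you use for the key purity step is wrong, and it contradicts the dimension you then assert. At a corank-one point $x$ with kernel $\C v$, the row space of $M(x)$ is $(\C v)^{\perp}$, and $v_{j_0}=0$ says exactly that $e_{j_0}^{\mathsf T}$ lies in it. Hence $\{u:\ u^{\mathsf T}M(x)\in\C e_{j_0}^{\mathsf T}\}$ is spanned by the cokernel vector $u_0$ and a preimage of $e_{j_0}^{\mathsf T}$. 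It is a line $\PP^1\subset\PP^{m-1}_u$, not a point. (The single point you describe is the fibre where $v_{j_0}\neq 0$, and that point lies on $Z_1$, not in the junk.) Your count gives $\dim Z'=(N-3)+0=N-3$. But $Y\times\PP^{m-1}_u$ is a local complete intersection cut by $m-1$ further equations, so every component of $D$ has dimension at least $N-2$. Such a $Z'$ could therefore not be a component, and you would have ``proved'' $D=Z_1$ and $\Bbez=T$, which is false already at $N=3$. The correct count is $(N-3)+1=N-2$: over the corank-one locus, $Z'$ is a $\PP^1_u$-bundle over the divisor $Y\cap\{v_{j_0}=0\}$. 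Appendix~\ref{app:conic} shows this concretely: at the junk point $x=(0,0,1)$, $v=(1,0)$, every $u\in\PP^1_u$ satisfies the kept equation, and the contraction condition selects $u_1=-\tfrac53$.

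With the corrected fibres your plan closes. Over $\Sigma_2$ (dimension $N-5$) the $v$-fibre is a $\PP^1$. The $u$-fibre is the $\PP^1$ of the two-dimensional left kernel, jumping to a $\PP^2$ only where $\ker M(x)\subset\{v_{j_0}=0\}$, which imposes two further conditions. So $D$ has dimension at most $N-3$ there and contributes no component. $D$ is therefore Cohen--Macaulay of pure dimension $N-2$, the section is regular, and its class is $(h+a)^m(h+b)^{m-1}$. For multiplicity one along $Z_1$, ``$Z_1$ is reduced'' is not quite the reason. The precise reason is that $K\to Q$ has kernel $\{c\,e_{j_0}:\ c\,v_{j_0}=0\}$, so it is an isomorphism on $\{v_{j_0}\neq 0\}$, and there $D$ coincides with $Z_1$ as a scheme. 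Reducedness of $Z'$ is not needed; effectivity suffices. The corrected picture also makes your $N=3$ remark geometric. $Y\cap\{v_{j_0}=0\}$ consists of $\int_Y a=[h^2a^{m-2}](h+a)^m=\binom m2$ points. Each carries a $\PP^1_u$ on which $a+b$ has degree one, which reproduces the junk term $\binom m2$ exactly.
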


\begin{proof}
Discarding the $j_0$-th equation of $u^{\mathsf T}M=0$ leaves $2m-1$ equations whose zero scheme, for generic $M$ and generic $j_0$ (after a generic change of the $u$-frame), has pure expected dimension $N-2$ and class $(h+a)^m(h+b)^{m-1}$; it contains $Z_1$, and the residual cycle $Z'$ is effective. Intersecting with $(a+b)^{N-2}$ and using nonnegativity of multidegrees of effective cycles gives the inequality. The $N=3$ evaluation is the two-term coefficient extraction recorded in the statement.
\end{proof}

\subsection{Asymptotics: tight per root, loose by one half}

The asymptotics rest on one more closed form, with all terms nonnegative; it is the form in which the count is easiest to bound from below, and it removes every alternating-sum issue from the estimates.

\begin{lemma}[A positive form of the count]\label{lem:T-pos}
For all $N\ge3$ and $m\ge2$,
\begin{equation}\label{eq:T-pos}
T(N,m)\;=\;\sum_{i=\lceil N/2\rceil}^{\min(m,\,N-1)}\binom{m}{i}\binom{i-1}{N-1-i}\binom{2i-2}{i-1},
\end{equation}
a sum with all terms nonnegative. In particular $T(N,m)\ge\binom{m}{N-1}\binom{2N-4}{N-2}$ whenever $m\ge N-1$ (the term $i=N-1$).
\end{lemma}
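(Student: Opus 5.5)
The plan is to start from the rational generating form \eqref{eq:T-rational} of Theorem~\ref{thm:T} and regroup the numerator so that the denominator $h+a+b$ cancels term by term. The key algebraic observation is the factorization
\[
(h+a)(h+b)\;=\;h(h+a+b)+ab .
\]
Raising this to the $m$-th power and expanding binomially gives
\[
(h+a)^m(h+b)^m=\sum_{i=0}^{m}\binom{m}{i}\,h^{i}(h+a+b)^{i}(ab)^{m-i}.
\]
Dividing by $h+a+b$, every term with $i\ge1$ becomes the honest polynomial $h^i(h+a+b)^{i-1}(ab)^{m-i}$.

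The $i=0$ term is $(ab)^m(a+b)^{N-2}/(h+a+b)$, which must be handled in the expansion convention fixed in the proof of Theorem~\ref{thm:T}, namely expansion at $h=\infty$. Every monomial in its expansion has $a$-degree at least $m$, so it contributes nothing to the coefficient of $a^{m-1}$. This is the only point requiring care, and I expect it to be the (mild) main obstacle. One must check that the coefficient extraction on the rational function is linear and compatible with splitting off this term, which holds because the expansion at $h=\infty$ is a ring homomorphism into Laurent series in $h^{-1}$ with polynomial coefficients in $a,b$.

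For $i\ge1$, the factor $(ab)^{m-i}$ absorbs $a^{m-i}b^{m-i}$, and the factor $h^i$ absorbs $h^i$. We are therefore left to compute
\[
\bigl[h^{N-1-i}a^{i-1}b^{i-1}\bigr]\,(h+a+b)^{i-1}(a+b)^{N-2}.
\]
Writing $(h+a+b)^{i-1}=\sum_r\binom{i-1}{r}h^r(a+b)^{i-1-r}$, the $h$-exponent forces $r=N-1-i$, which contributes the factor $\binom{i-1}{N-1-i}$. What remains is $(a+b)^{2i-N}(a+b)^{N-2}=(a+b)^{2i-2}$, whose $a^{i-1}b^{i-1}$ coefficient is $\binom{2i-2}{i-1}$. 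Multiplying by $\binom mi$ and summing over $i$ gives \eqref{eq:T-pos}.

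The summation range follows from the vanishing of the factors. The factor $\binom mi$ vanishes for $i>m$. The factor $\binom{i-1}{N-1-i}$ vanishes unless $0\le N-1-i\le i-1$, that is, unless $N/2\le i\le N-1$. Together these give the range $\lceil N/2\rceil\le i\le\min(m,N-1)$. Every term is a product of binomial coefficients and hence nonnegative.

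For the final claim, when $m\ge N-1$ the index $i=N-1$ lies in the range. Its term is $\binom{m}{N-1}\binom{N-2}{0}\binom{2N-4}{N-2}$. Dropping all the other nonnegative terms gives the stated lower bound $T(N,m)\ge\binom{m}{N-1}\binom{2N-4}{N-2}$.

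As a sanity check, I would verify the formula against Proposition~\ref{prop:anchors}. At $N=3$ only $i=2$ appears, giving $\binom m2\cdot1\cdot2=m(m-1)$, as required.
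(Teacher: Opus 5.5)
Your proof is correct and follows essentially the same route as the paper: the factorization $(h+a)(h+b)=h(h+a+b)+ab$, cancelling $h+a+b$ in each term with $i\ge1$, and the same extraction of $\binom{i-1}{N-1-i}\binom{2i-2}{i-1}$. The only cosmetic difference is in the $i=0$ term: you discard it because its $a$-degree is at least $m$, while the paper discards it because its expansion at $h=\infty$ has only negative powers of $h$, and either observation suffices.
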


\begin{proof}
In $\Z[h,a,b]$ one has $(h+a)(h+b)=h(h+a+b)+ab$, hence
\[
(h+a)^m(h+b)^m\;=\;\sum_{i=0}^m\binom{m}{i}\,h^i\,(h+a+b)^i\,(ab)^{m-i}.
\]
Dividing by $h+a+b$ in the expansion at $h=\infty$ used in \eqref{eq:T-rational},
\[
\frac{(h+a)^m(h+b)^m}{h+a+b}\;=\;\sum_{i=1}^m\binom{m}{i}\,h^i\,(h+a+b)^{i-1}(ab)^{m-i}\;+\;\frac{(ab)^m}{h+a+b},
\]
and the expansion of the last term contains only negative powers of $h$, hence contributes nothing to the coefficient of $h^{N-1}$. In the $i$-th summand, multiplied by $(a+b)^{N-2}$, the factor $(ab)^{m-i}$ forces the extraction of $a^{i-1}b^{i-1}$ from $h^i(h+a+b)^{i-1}(a+b)^{N-2}$: choosing $h^{N-1-i}$ from $(h+a+b)^{i-1}$, in $\binom{i-1}{N-1-i}$ ways, leaves $(a+b)^{(i-1)-(N-1-i)}\cdot(a+b)^{N-2}=(a+b)^{2i-2}$, whose $a^{i-1}b^{i-1}$-coefficient is $\binom{2i-2}{i-1}$. Terms with $i<\lceil N/2\rceil$ or $i>N-1$ vanish through $\binom{i-1}{N-1-i}$. (Hand checks: at $N=3$ the single term $i=2$ gives $2\binom m2=m(m-1)$; at $N=4$, $i=2,3$ give $2\binom m2+6\binom m3=m(m-1)^2$; at $(5,5)$, $i=3,4$ give $120+100=220$; and $T(5,2)=0$ since the range is empty. The identity was additionally machine-checked against \eqref{eq:T-binom2} by exact integer arithmetic on the grid $3\le N\le12$, $2\le m\le6N$.)
\end{proof}

\begin{theorem}[Asymptotics]\label{thm:asymp}
Fix $N\ge3$. As polynomials in $m$,
\[
T(N,m)=\frac{\binom{2N-4}{N-2}}{(N-1)!}\,m^{N-1}\bigl(1+O_N(1/m)\bigr),
\qquad
\Bbez(N,m)=\frac{\binom{2N-3}{N-1}}{(N-1)!}\,m^{N-1}\bigl(1+O_N(1/m)\bigr),
\]
so that
\[
\lim_{m\to\infty}\frac{T(N,m)}{\Bbez(N,m)}\;=\;C_N\;:=\;\frac{\binom{2N-4}{N-2}}{\binom{2N-3}{N-1}}\;=\;\frac{N-1}{2N-3}\;\xrightarrow[N\to\infty]{}\;\frac12 .
\]
Moreover, in any regime with $N\to\infty$ and $m/N\to\infty$,
\[
T(N,m)^{1/(N-1)}\;=\;(4e+o(1))\,\frac mN\;=\;\bigl(1+o(1)\bigr)\,\Bbez(N,m)^{1/(N-1)} ,
\]
with $o(1)$ terms that depend only on $N$ and on $m/N$ and tend to $0$ as both tend to infinity, at any joint rate.
\end{theorem}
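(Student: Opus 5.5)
The plan has two parts: the fixed-$N$ asymptotics come from leading coefficients via Vandermonde, and the per-root statement from sandwiching $T$ between explicit products using Lemma~\ref{lem:T-pos} below and an elementary majorant of $\Bbez$ above.

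\emph{Fixed $N$.} The leading coefficient of $T(N,m)$ is already computed in Theorem~\ref{thm:T}, so only $\Bbez$ needs work. In the binomial form of $\Bbez$, each term $\binom{N-2}{j}\binom{m}{j+1}\binom{m-1}{N-2-j}$ is a polynomial in $m$ of degree $N-1$. Its leading coefficient is $\binom{N-2}{j}/\bigl((j+1)!\,(N-2-j)!\bigr)=\binom{N-2}{j}\binom{N-1}{j+1}/(N-1)!$. Summing over $j$ with Vandermonde, $\sum_j\binom{N-2}{j}\binom{N-1}{N-2-j}=\binom{2N-3}{N-2}=\binom{2N-3}{N-1}$, which gives the stated leading term. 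Both are polynomials in $m$ of degree $N-1$, so the relative errors are $O_N(1/m)$. The ratio of leading coefficients simplifies directly:
\[
\binom{2N-4}{N-2}\Big/\binom{2N-3}{N-1}=\frac{(2N-4)!\,(N-1)!\,(N-2)!}{(N-2)!^2\,(2N-3)!}=\frac{N-1}{2N-3}.
\]

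\emph{Per root, lower bound.} Here fixed-$N$ polynomial asymptotics are useless because $N$ grows, so I would work with explicit inequalities. For $m\ge N-1$, Lemma~\ref{lem:T-pos} gives
\[
T(N,m)\ge\binom{m}{N-1}\binom{2N-4}{N-2}.
\]
Then I would use $\binom{m}{N-1}\ge (m-N+2)^{N-1}/(N-1)!$, Stirling in the form $((N-1)!)^{1/(N-1)}=(1+o(1))N/e$, and $(1-N/m)\to1$ as $m/N\to\infty$. Together these give $\binom{m}{N-1}^{1/(N-1)}\ge(e+o(1))\,m/N$. The central binomial satisfies $\binom{2N-4}{N-2}\ge 4^{N-2}/(2N-3)$, whose $(N-1)$-st root is $4-o(1)$. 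Hence $T^{1/(N-1)}\ge(4e-o(1))\,m/N$.

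\emph{Per root, upper bound.} I would use $T\le\Bbez$, which is Proposition~\ref{prop:bez}; alternatively it can be read off termwise if one prefers to avoid geometry. Bounding $\binom{m}{j+1}\le m^{j+1}/(j+1)!$ and $\binom{m-1}{N-2-j}\le m^{N-2-j}/(N-2-j)!$, the same Vandermonde computation gives, for all $m$, the exact majorant
\[
\Bbez(N,m)\le \frac{\binom{2N-3}{N-1}}{(N-1)!}\,m^{N-1}.
\]
Its root is $(4+o(1))(e+o(1))\,m/N$ since $\binom{2N-3}{N-1}\le 4^{N}$. This sandwiches both $T^{1/(N-1)}$ and $\Bbez^{1/(N-1)}$ between $(4e\pm o(1))\,m/N$, which yields both equalities.

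The main obstacle is bookkeeping the uniformity: the $o(1)$ terms must depend only on $N$ and $m/N$. This holds because every error factor above is explicitly either a function of $N$ alone (Stirling, central binomial) or $(1-(N-2)/m)$, a function of $m/N$ alone. I would state each of these as a separate displayed inequality to make the joint-rate claim evident.
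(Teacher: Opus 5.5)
Your proposal is correct and follows the paper's proof step for step. It uses Vandermonde for the leading coefficient of $\Bbez$, the single $i=N-1$ term of Lemma~\ref{lem:T-pos} together with $\binom{2K}{K}\ge 4^K/(2K+1)$ and Stirling for the lower bound, and $T\le\Bbez$ from Proposition~\ref{prop:bez} with the termwise majorant $\binom mr\le m^r/r!$ for the upper bound.

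Two cosmetic points. First, bound $1-(N-2)/m$ below by $1-N/m$, as the paper does, so that the error factor is literally a function of $m/N$. Second, your aside that $T\le\Bbez$ can be ``read off termwise'' is unsubstantiated, since the two binomial forms do not compare term by term; rely on Proposition~\ref{prop:bez}, as your main argument already does.
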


\begin{proof}
The leading coefficient of $T$ is Theorem~\ref{thm:T}; for $\Bbez$, the same computation gives $(N-1)!\cdot[\text{lead}]=\sum_j\binom{N-2}j\binom{N-1}{j+1}=\binom{2N-3}{N-2}=\binom{2N-3}{N-1}$ by Vandermonde. The ratio identity $\binom{2N-3}{N-1}=\frac{2N-3}{N-1}\binom{2N-4}{N-2}$ is Pascal plus the absorption identity; since $T$ and $\Bbez$ are polynomials in $m$ of degree $N-1$ with these positive leading coefficients, the fixed-$N$ ratio limit follows.

For the per-root law we give a two-sided bound with explicit error factors; no estimate is ever applied to an alternating sum. \emph{Upper bound.} By Proposition~\ref{prop:bez}, $T\le\Bbez$, and every term of $\Bbez=\sum_j\binom{N-2}j\binom{m}{j+1}\binom{m-1}{N-2-j}$ is nonnegative, so $\binom mr\le m^r/r!$ applies termwise:
\[
\Bbez(N,m)\;\le\;m^{N-1}\sum_{j=0}^{N-2}\frac{\binom{N-2}j}{(j+1)!\,(N-2-j)!}\;=\;\frac{\binom{2N-3}{N-1}}{(N-1)!}\,m^{N-1},
\]
the last equality being the Vandermonde evaluation above. Hence, using $\binom{2N-3}{N-1}\le2^{2N-3}$, $(N-1)!\ge((N-1)/e)^{N-1}$, and $2^{(2N-3)/(N-1)}\le4$, for \emph{every} $N\ge3$ and every $m\ge2$,
\[
T(N,m)^{1/(N-1)}\;\le\;\Bbez(N,m)^{1/(N-1)}\;\le\;\frac{4e\,m}{N-1}\;=\;\Bigl(4e+\tfrac{4e}{N-1}\Bigr)\frac mN .
\]
\emph{Lower bound.} Keep the single term $i=N-1$ of Lemma~\ref{lem:T-pos}: for $m\ge N-1$, using $\binom{2K}K\ge4^K/(2K+1)$ at $K=N-2$,
\[
T(N,m)\;\ge\;\binom{m}{N-1}\binom{2N-4}{N-2}\;\ge\;\frac{(m-N+2)^{N-1}}{(N-1)!}\cdot\frac{4^{N-2}}{2N-3}.
\]
Taking $(N-1)$-st roots and using Stirling's upper bound $(N-1)!\le e\sqrt{N-1}\,\bigl((N-1)/e\bigr)^{N-1}$,
\[
T(N,m)^{1/(N-1)}\;\ge\;\Bigl(1-\frac Nm\Bigr)\,\frac{4e\,m}{N-1}\cdot\Bigl(4(2N-3)\,e\sqrt{N-1}\Bigr)^{-1/(N-1)}\;\ge\;\bigl(4e-o(1)\bigr)\,\frac mN,
\]
where the first error factor depends only on $m/N$ and the second only on $N$, each tending to $1$ in the stated regime. Since $\Bbez\ge T$, the same lower bound holds for $\Bbez$, and combining the four inequalities gives the displayed per-root chain with the claimed uniformity.
\end{proof}

\begin{remark}[The bounded-ratio constant]\label{rem:theta}
At $m=\rho N$ with $\rho$ fixed, a saddle-point analysis of \eqref{eq:T-rational} suggests $T^{1/(N-1)}=\theta(\rho)\,m/N\,(1+o(1))$ with $\theta$ increasing and $\theta(\rho)\to4e$ as $\rho\to\infty$ (in the audit, the leading finite-$\rho$ correction appears through the factor $2\rho/(4\rho-1)$ at the count level). This matches the exact grid values recorded in Remark~\ref{rem:floor-numerics}, but it is not proved here, and \emph{no statement in this paper uses it}: the bounded-ratio regime enters the no-go theorem only through the absolute-constant floor, which is Proposition~\ref{prop:floor}, proved unconditionally from Lemma~\ref{lem:T-pos}.
\end{remark}

\begin{remark}[The full polar profile]\label{rem:profile}
The same computation reads the entire multidegree vector: for $0\le i\le N-2$,
\[
T_i(N,m)\;:=\;\delta_i(\Con X_M)\;=\;\bigl[h^{N-1}a^{m-1}b^{m-1}\bigr]\,h^i\,\frac{(h+a)^m(h+b)^m(a+b)^{N-2-i}}{h+a+b},
\]
with $T_0=T$ and $T_{N-2}=m=\deg X_M$. Everything in Sections~\ref{sec:ceiling} and \ref{sec:nogo} is stated at the level of the full profile.
\end{remark}

\subsection{Corank monotonicity}

The corank-one floor used in Section~\ref{sec:nogo} is unconditional. For corank $c\ge2$ we record the natural monotonicity statement, conditional on one explicitly displayed hypothesis whose verification is Schubert-calculus bookkeeping that we have not carried out.

\begin{hypothesis}[Effective specialization]\label{hyp:S}
For each $c\ge2$, each slot, and each size $m$, there is a flat degeneration of the slot-resolved corank-$c$ extraction conditions of Definition~\ref{def:reading} --- specializing the $c$-planes to contain a fixed corank-one pair $(v,u)$ --- under which the limit of the solution class decomposes as the class of a slot-matched corank-one system of size $m-c+1$ \emph{plus an effective cycle}; equivalently, the Schubert correction terms produced by the specialization pair nonnegatively against the globally generated condition classes.
\end{hypothesis}

\begin{proposition}[Higher corank is not cheaper; conditional on Hypothesis~(S)]\label{prop:monotone}
Assume Hypothesis~\ref{hyp:S}. Let $c\ge2$ and let $B$ be any valid intersection-theoretic upper bound for the number of solutions of the corank-$c$ reading at size $m$, against any complementary-dimension product of the tautological globally generated classes on $\PP^{N-1}\times\Gr(c,m)\times\Gr(c,m)$. Then in the regime $N\to\infty$, $m/N\to\infty$,
\[
B^{1/(N-1)}\;\ge\;(4e-o(1))\,\frac mN .
\]
\end{proposition}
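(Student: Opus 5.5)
The plan is to reduce the corank-$c$ case to the corank-one witness by the specialization that Hypothesis~\ref{hyp:S} supplies, and then to feed in the unconditional per-root floor from Theorem~\ref{thm:asymp}.

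\emph{Step 1.} Fix $c\ge2$, a slot, and a size $m$. Let $B$ be a valid intersection-theoretic upper bound for the corank-$c$ reading against a complementary-dimension product $\Omega$ of globally generated tautological classes on $\PP^{N-1}\times\Gr(c,m)\times\Gr(c,m)$. By validity, as in the last sentence of Proposition~\ref{prop:witness}, $B$ dominates the intersection number $\int[Z_c]\cdot\Omega$ for generic $M$. This number is invariant under flat degeneration, so we may compute it in the limit of the specialization in Hypothesis~\ref{hyp:S}. There the $c$-planes are forced to contain a fixed corank-one pair $(v,u)$, and what remains is the slot-matched corank-one system of size $m'=m-c+1$ for the compressed generic pencil on the complement.

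\emph{Step 2.} By Hypothesis~\ref{hyp:S}, the limit class equals the corank-one class plus an effective cycle. Globally generated classes pair nonnegatively with effective cycles, in the same way the junk term is handled in Proposition~\ref{prop:bez}. Hence
\[
B\;\ge\;\int[Z_c]\cdot\Omega\;\ge\;T_{\mathrm{slot}}(N,m-c+1).
\]
For the top slot the right-hand side is $T(N,m-c+1)$. For other slots I will use the positive-form argument of Lemma~\ref{lem:T-pos} applied to the profile of Remark~\ref{rem:profile}, or simply restrict to the slot relevant to the reading.

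\emph{Step 3.} Apply the lower half of Theorem~\ref{thm:asymp} with $m$ replaced by $m'=m-c+1$. If $c\le m/2$ (I will treat the other range separately below), then $m'/N\to\infty$, and
\[
T(N,m')^{1/(N-1)}\ge(4e-o(1))\,m'/N=(4e-o(1))(1-c/m)\,m/N.
\]
The factor $1-c/m$ is absorbed into the $o(1)$ provided $c=o(m)$. For large $c$ comparable to $m$ the same comparison fails, so I would instead iterate the specialization $\lfloor c/2\rfloor$ times. Alternatively, I would note that for $c>m/2$ the corank-$c$ locus is empty in codimension $c^2>N-1$ once $m$ is large, so the reading is vacuous. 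That case split is the step I am least sure of.

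\emph{Main obstacle.} The real difficulty is the large-corank regime in Step 3, and more basically the claim that the degenerated corank-one system is exactly of size $m-c+1$ with generic coefficients. Genericity of the compressed pencil needs a Bertini argument like Lemma~\ref{lem:generic}, and this must be checked before Theorem~\ref{thm:asymp} can be quoted.
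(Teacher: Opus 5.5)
This is essentially the paper's own conditional argument: Hypothesis~(S) degenerates the generic corank-$c$ count to the corank-one count $T(N,m-c+1)$ plus an effective correction, which pairs nonnegatively with the globally generated classes, so $B\ge T(N,m-c+1)$; Theorem~\ref{thm:asymp} then finishes when $c=o(m)$, and large $c$ is dismissed as vacuous by the codimension-$c^2$ count, exactly as the paper does. Your uncertain case split closes if you split at $c^2\le N-1$ instead of at $c\le m/2$. When $m/N\to\infty$, every $c\le\sqrt{N-1}$ is automatically $o(m)$, while for $c^2>N-1$ the generic corank-$c$ locus is empty, so iterating the specialization is not needed (it would only shrink the size further); and your worry about genericity of the compressed pencil is already absorbed into (S), which hands you the corank-one class itself.
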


\begin{proof}[Discussion]
Two ingredients, the second of which is exactly where Hypothesis~(S) enters. (i) \emph{Positivity}: by Fulton--Pragacz \cite{FP98,Pra88}, the class of the corank-$c$ kernel incidence (a degeneracy locus with kernel and cokernel flags) is a nonnegative combination of Schubert monomials in the tautological classes; intersection numbers of effective cycles against globally generated classes are nonnegative and monotone under adding effective terms. (ii) \emph{Specialization}: degenerating the corank-$c$ configuration so that the $c$-planes contain a fixed corank-one pair $(v,u)$ exhibits the corank-one count $T(N,m-c+1)$ as a lower bound for the corank-$c$ count read against the matching conditions, \emph{provided} the correction terms produced by the specialization carry the effective sign --- which is precisely Hypothesis~\ref{hyp:S}, and is what would have to be written out. Granting it, the per-root cost of $T(N,m-c+1)$ is $(4e-o(1))\,m/N$ for $c=O(m/\log m)$ by Theorem~\ref{thm:asymp}, while for larger $c$ the dimension count alone already forces a super-quadratic number of equations and the bound is vacuous.
\end{proof}

\flag{Hypothesis~\ref{hyp:S} is the sign half of step (ii): Fulton--Pragacz positivity gives effectivity of the corank-$c$ classes themselves, but the decomposition of the \emph{specialized} class into a corank-one part plus corrections must additionally be shown effective, not merely Schubert-expressible, before the lower bound transfers. Until that is written out, Proposition~\ref{prop:monotone} is conditional. Nothing else in the paper depends on it except the corank-$c\ge2$ clause of Theorem~\ref{thm:nogo}; the corank-one no-go is unconditional on this point.}
\section{The invariant ceiling}\label{sec:ceiling}

In this section $f$ is an arbitrary nonzero homogeneous polynomial of degree $d\ge2$ in $N$ variables, $X=V(f)_{\mathrm{red}}\subset\PP^n$, $n=N-1$. No hypothesis is placed on $\Sing X$: $X$ may be reducible, non-normal, with singularities in arbitrary codimension. The degree $d$ always refers to the given equation, which dominates the degree of the reduction.

\subsection{The Euler-characteristic envelope}

\begin{lemma}[Global Milnor fibre as a cover]\label{lem:cover}
Let $g$ be a nonzero homogeneous polynomial of degree $d\ge1$ in $k+1$ variables, $C=V(g)\subset\PP^k$, $U=\PP^k\setminus C$, and $F=\{g=1\}\subset\C^{k+1}$. Then $F$ is a smooth affine variety of dimension $k$, the map $\pi:F\to U$, $x\mapsto[x]$, is a $d$-sheeted covering, and $\chi(F)=d\,\chi(U)$.
\end{lemma}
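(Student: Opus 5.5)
The plan is to verify the three claims in order: smoothness and affineness of $F$, the covering property of $\pi$, and then the Euler characteristic identity as a formal consequence.

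\emph{Smoothness.} $F$ is the zero set of the single polynomial $g-1$ in $\C^{k+1}$, so it is affine. Euler's identity gives $\sum_i x_i\,\partial_i g = d\,g$. At any point of $F$ the right side equals $d\neq 0$, so the gradient of $g-1$ cannot vanish there. Hence $F$ is a smooth hypersurface of dimension $k$. This holds whether or not $g$ is reduced and whatever the singularities of $C$, which is the point of using the global Milnor fibre.

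\emph{Covering.} I would exhibit $\pi$ as a quotient map. The group $\mu_d$ of $d$-th roots of unity acts on $F$ by scalar multiplication, since $g(\zeta x)=\zeta^d g(x)=g(x)$. The action is free because $x\neq 0$ on $F$.

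Next I would check that $\pi$ is well defined and surjective onto $U$. Its image lies in $U$ because $g(x)=1\neq0$. Conversely, a point $[y]\in U$ has $g(y)=c\neq 0$. Then $\lambda y\in F$ exactly when $\lambda^d=c^{-1}$, which has exactly $d$ solutions forming one $\mu_d$-orbit. So the fibres of $\pi$ are precisely the $\mu_d$-orbits.

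For local triviality, work over an open set $U_0\subset U$ where $[y]$ lifts to a holomorphic section $y$ of the tautological line. There $g(y)$ is nowhere zero, and on simply connected pieces one can choose a branch of $g(y)^{-1/d}$. These choices give $d$ disjoint local sections $\zeta\,g(y)^{-1/d}y$ with $\zeta\in\mu_d$. Equivalently, $\pi$ is the quotient of a free action of a finite group, hence a finite étale map of degree $d$, hence a topological $d$-sheeted covering.

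\emph{Euler characteristic.} The Euler characteristic is multiplicative under finite coverings of spaces with finite CW structures: pull back a finite triangulation of $U$ fine enough that its cells trivialize the cover. This gives $\chi(F)=d\,\chi(U)$. Alternatively, use additivity of compactly supported Euler characteristic over a stratification trivializing $\pi$. Since $U$ is a complex algebraic variety, compactly supported and ordinary Euler characteristics agree, so either route works.

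There is no real obstacle here. The only point needing care is that local triviality is argued directly from the free finite group action, rather than assuming $C$ is nice in any way.
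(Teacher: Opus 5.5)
Your proof is correct and follows the paper's argument: Euler's relation for smoothness, the fibre $\{tx : t^d g(x)=1\}$ consisting of exactly $d$ points, and multiplicativity of $\chi$ in finite coverings. The only difference is that you spell out the local triviality, via the free $\mu_d$-action and local branches of $g(y)^{-1/d}$, which the paper dismisses as standard. One small precision: since $U$ is non-compact, your ``finite triangulation'' should be read as a semialgebraic triangulation by open simplices, or as a compact deformation retract of $U$, and that reading is exactly your $\chi_c$ alternative.
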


\begin{proof}
Smoothness: by Euler's relation $\sum_ix_i\partial_ig=dg$, on $F$ we have $x\cdot\nabla g=d\ne0$, so $\nabla g\ne0$. The fibre of $\pi$ over $[x]$ with $g(x)\ne0$ is $\{tx:\,t^dg(x)=1\}$, which has exactly $d$ points; local triviality is standard (the $\mathbb G_m$-action). Euler characteristics of complex algebraic varieties are multiplicative in finite coverings.
\end{proof}

\begin{lemma}[Critical points via refined B\'ezout]\label{lem:crit}
With $g,F$ as in Lemma~\ref{lem:cover} and $d\ge2$: for a generic linear function $\ell$ on $\C^{k+1}$ and generic linear functions $\ell_1,\dots,\ell_{k-j}$ and generic $t_1,\dots,t_{k-j}\in\C$, the restriction of $\ell$ to the smooth $j$-dimensional affine variety
$F_j:=F\cap\{\ell_1=t_1,\dots,\ell_{k-j}=t_{k-j}\}$
is a Morse function with at most $d(d-1)^{\,j}$ critical points, for every $0\le j\le k$.
\end{lemma}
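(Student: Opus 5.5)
We must prove Lemma~\ref{lem:crit}: the restriction of a generic linear function $\ell$ to $F_j$ is Morse, and has at most $d(d-1)^j$ critical points.

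\emph{Setup.} The plan is to write the critical locus of $\ell|_{F_j}$ as a system of equations on $\C^{k+1}$ and bound the isolated solutions by Fulton's refined B\'ezout theorem. Smoothness of $F_j$ comes first. $F$ is smooth by Lemma~\ref{lem:cover}. For generic $(\ell_1,\dots,\ell_{k-j},t_1,\dots,t_{k-j})$, the affine subspace $\Lambda=\{\ell_s=t_s\}$ is a generic member of a family of affine $(j+1)$-planes, and the group of affine transformations acts transitively on that family. Kleiman--Bertini therefore makes $\Lambda$ transverse to $F$, so $F_j$ is smooth of pure dimension $j$ (or empty).

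\emph{Critical equations.} Parametrize $\Lambda\cong\C^{j+1}$ by affine-linear coordinates $y$. Then $g|_\Lambda=:G(y)$ is an inhomogeneous polynomial of degree $\le d$, and $\ell|_\Lambda=:\lambda(y)$ is a generic affine-linear function; genericity of $\ell$ on $\C^{k+1}$ induces genericity of $\lambda$ on $\Lambda$. A point $y\in F_j$ is critical for $\lambda$ exactly when $\nabla G(y)$ is parallel to $\nabla\lambda$, a fixed nonzero vector $c$. Choose coordinates with $c=e_0$. The critical locus is then
\[
G=1,\qquad \partial_{y_1}G=\cdots=\partial_{y_j}G=0 ,
\]
namely one equation of degree $d$ and $j$ equations of degree $\le d-1$. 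Refined B\'ezout (Fulton, Ex.~12.3.1) bounds the number of isolated points of such an intersection, counted with multiplicity, by the product of the degrees, which is $d(d-1)^j$. Any equation of degree $0$ is a nonzero constant, giving no solutions and making the bound trivial.

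\emph{Finiteness and Morse property.} These are the main obstacle, since refined B\'ezout only counts isolated points. Consider the Gauss-type map $\gamma:F_j\to\PP(T^*)$, sending $y$ to the class of $dG|_{T_yF_j}$. Equivalently, this is the logarithmic-type map to $\PP^{j}$ given by the cotangent direction of the smooth hypersurface $\{G=1\}\subset\C^{j+1}$.

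The critical points of $\lambda|_{F_j}$ are exactly the fibre $\gamma^{-1}([c])$. By generic smoothness applied to $\gamma$ (characteristic zero), a generic $[c]$ is either outside the image of $\gamma$ or a regular value. Hence the fibre is finite and reduced, and this is precisely the statement that each critical point is nondegenerate, i.e.\ Morse; the Hessian of $\lambda|_{F_j}$ at a critical point is identified with $d\gamma$. The one thing to check is that genericity of $\ell$ on $\C^{k+1}$, after restriction, covers a dense open set of directions $[c]$ for each fixed generic $\Lambda$. This holds because restriction of linear forms $\C^{k+1\,*}\to\Lambda_0^*$ is a surjective linear map.

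With all critical points isolated and simple, each contributes exactly $1$ to the refined B\'ezout count. Their number is therefore at most $d(d-1)^j$, with no loss from positive-dimensional components, since there are none.
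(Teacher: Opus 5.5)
Your proposal is correct and is essentially the paper's proof. In both, the Lagrange condition cuts out the critical locus by one equation of degree $d$ and $j$ equations of degree $\le d-1$, and refined B\'ezout bounds the isolated solutions by $d(d-1)^j$. The paper keeps the $k-j$ linear equations in $\C^{k+1}$ and homogenizes into $\PP^{k+1}$, rather than first restricting to $\Lambda$ as you do. Your Gauss-map and generic-smoothness argument fills in the Morse and finiteness claim that the paper simply calls standard.

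One small correction. After homogenizing, the intersection can have positive-dimensional components at infinity (e.g.\ $g=x_1^d$ with $j\ge2$). So your closing phrase ``since there are none'' holds only for the affine critical locus, which is exactly why the refined rather than the ordinary B\'ezout theorem is needed.
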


\begin{proof}
That $\ell|_{F_j}$ is Morse with finitely many critical points for generic data is standard (Sard applied to the relevant gradient/covector maps). Choose linear coordinates in which $\mathrm{span}(\ell_1,\dots,\ell_{k-j},\ell)$ is spanned by the first $k-j+1$ coordinate functions. By Lagrange, $x\in F_j$ is critical for $\ell|_{F_j}$ iff $\nabla g(x)\in\mathrm{span}(e_1,\dots,e_{k-j+1})$, i.e.
\[
g(x)=1,\qquad \ell_s(x)=t_s\ (s\le k-j),\qquad \partial_ig(x)=0\ \ (k-j+2\le i\le k+1),
\]
a system of $(1)+(k-j)+(j)$ equations of degrees $d,1,\dots,1,d-1,\dots,d-1$. (Solutions with $\nabla g=0$ cannot occur: Euler's relation forces $g=0\ne1$.) Every critical point is an isolated solution of this system. Homogenize with $x_0$ (replacing $g=1$ by $g=x_0^d$, degrees unchanged): an isolated affine solution remains an isolated point of the projective intersection in $\PP^{k+1}$, since a positive-dimensional projective component through it would meet the affine chart in positive dimension near the point. By Fulton's refined B\'ezout theorem \cite[Ex.~12.3.1]{Ful98}, the number of isolated points of an intersection of hypersurfaces in $\PP^{k+1}$ is at most the product of the degrees, even in the presence of positive-dimensional (excess) components. The product is $d\cdot1^{\,k-j}\cdot(d-1)^{\,j}$.
\end{proof}

\begin{hypothesis}[Affine Lefschetz attaching for generic pencils]\label{hyp:HL}
Let $F\subset\C^M$ be a smooth closed affine variety of pure dimension $k$ and let $\ell:\C^M\to\C$ be a generic linear function. Then $\ell|_F$ has finitely many critical points, no atypical values arise from infinity, and for generic $t$ the variety $F$ has the homotopy type of $F\cap\{\ell=t\}$ with one $k$-cell attached per critical point of $\ell|_F$; in particular
\[
\chi(F)\;=\;\chi\bigl(F\cap\{\ell=t\}\bigr)\;+\;(-1)^{k}\,\#\mathrm{crit}(\ell|_F).
\]
\end{hypothesis}

\begin{proposition}[Envelope]\label{prop:envelope}
Let $f$ be nonzero homogeneous of degree $d\ge2$ in $N$ variables, $X=V(f)\subset\PP^{n}$ (as a set). Then for every $0\le k\le n$ and generic linear $L_k\subset\PP^n$ of dimension $k$,
\[
\bigl|\chi(X\cap L_k)\bigr|\;\le\;E(d,k)\;:=\;2(d-1)^k+2k+2 .
\]
\end{proposition}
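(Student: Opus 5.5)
The plan is to reduce everything to the global Milnor fibre of the slice and count cells. Since $L_k$ is generic, the restriction $g:=f|_{L_k}$ is a nonzero homogeneous polynomial of degree $d$ in $k+1$ variables, and $X\cap L_k=V(g)\subset\PP^k$ as sets. Write $U=\PP^k\setminus V(g)$ and $F=\{g=1\}\subset\C^{k+1}$. Additivity of $\chi$ together with Lemma~\ref{lem:cover} gives
\[
\chi(X\cap L_k)\;=\;(k+1)-\chi(U)\;=\;(k+1)-\tfrac1d\,\chi(F).
\]
So it suffices to show $|\chi(F)|\le d\bigl(2(d-1)^k+k+1\bigr)$. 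Then $|\chi(X\cap L_k)|\le (k+1)+2(d-1)^k+(k+1)=E(d,k)$.

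To bound $\chi(F)$, I will build a generic affine flag as in Lemma~\ref{lem:crit}. Set $F_j=F\cap\{\ell_1=t_1,\dots,\ell_{k-j}=t_{k-j}\}$, so that $F_k=F$ and $F_{j-1}$ is a generic level set of the generic linear function $\ell|_{F_j}$. Each $F_j$ is smooth by the Euler-relation argument of Lemma~\ref{lem:cover}, applied to $g$ restricted to the affine slice. The base case is $F_0$: it is the solution set of a degree-$d$ univariate equation $g(p+sq)=1$ on a generic affine line. Generically this equation has exactly $d$ distinct roots, so $\chi(F_0)=d$, which also equals the count $d(d-1)^0$.

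For $j\ge1$, Hypothesis~\ref{hyp:HL} applied to $F_j$ and the pencil $\ell$ gives
\[
\chi(F_j)=\chi(F_{j-1})+(-1)^j\,\#\mathrm{crit}(\ell|_{F_j}),
\]
and Lemma~\ref{lem:crit} bounds the critical count by $d(d-1)^j$. Telescoping and applying the triangle inequality (I will not try to exploit sign cancellation) yields
\[
|\chi(F)|\;\le\;d\sum_{j=0}^{k}(d-1)^j .
\]
For $d\ge3$ the geometric sum has ratio $\ge2$, so it is at most $2(d-1)^k$. For $d=2$ it equals $k+1$. In both cases it is at most $2(d-1)^k+k+1$, which is the required bound after dividing by $d$.

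The main obstacle is the attaching step, namely Hypothesis~\ref{hyp:HL}: we need that a generic linear pencil on the smooth affine $F_j$ has no atypical values coming from infinity, so that $\chi$ changes only by the Morse critical points. I would first try to verify this for our specific $F_j$. The idea is that the projective closure of $\{g=x_0^d\}$ meets the hyperplane at infinity in $V(g)$. A generic $\ell$ then defines a pencil whose axis at infinity is a generic codimension-two linear space, and this axis is transverse to a Whitney stratification of the closure. That transversality should rule out vanishing cycles at infinity; this is the standard "tame at infinity for generic linear forms" argument. If this cannot be made self-contained, the proposition is stated conditionally on Hypothesis~\ref{hyp:HL}, as the paper already sets up. The remaining genericity claims, namely that $g\not\equiv0$, that $\ell|_{F_j}$ is Morse, and that the restrictions are compatible with slicing, follow from Kleiman--Bertini.
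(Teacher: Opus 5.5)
Your proposal is correct and follows the paper's proof essentially step for step: reduce to $V(g)\subset\PP^k$, pass to the global Milnor fibre via Lemma~\ref{lem:cover}, telescope the attaching formula of Hypothesis~\ref{hyp:HL} down a generic affine flag with Lemma~\ref{lem:crit} bounding each critical count, and split the geometric sum into the cases $d\ge3$ and $d=2$. One small correction: smoothness of the slices $F_j$ does not follow from Euler's relation, since the restriction of $g$ to an affine slice not through the origin is no longer homogeneous; it follows instead from generic smoothness (Sard) for generic $t_s$, which your closing appeal to Bertini already covers.
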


\begin{proof}
A generic $L_k$ is a $\PP^k$ on which $f$ restricts to a nonzero degree-$d$ form $g$, with $X\cap L_k=V(g)$, so it suffices to bound $|\chi(V(g))|$ in $\PP^k$. Let $F,U$ be as in Lemma~\ref{lem:cover}. By Hypothesis~\ref{hyp:HL} --- the affine Lefschetz theorem of Hamm--L\^e for generic pencils on smooth affine varieties \cite{HL73,Dim92} --- for generic $\ell$ and generic $t$, $F$ is obtained from $F\cap\{\ell=t\}$, up to homotopy, by attaching one $k$-cell per critical point of $\ell|_F$; hence
$\chi(F)=\chi(F\cap\{\ell=t\})+(-1)^k\,\#\mathrm{crit}(\ell|_F)$,
and inductively, slicing down to dimension $0$ and applying Lemma~\ref{lem:crit} at each level,
\[
|\chi(F)|\;\le\;\sum_{j=0}^{k}d(d-1)^{\,j}.
\]
For $d\ge3$ the geometric sum is at most $d(d-1)^k\frac{d-1}{d-2}\le2d(d-1)^k$, so $|\chi(U)|=|\chi(F)|/d\le2(d-1)^k$ and $|\chi(X\cap L_k)|\le\chi(\PP^k)+|\chi(U)|\le (k+1)+2(d-1)^k\le E(d,k)$. For $d=2$ the sum is $2(k+1)$, giving $|\chi(U)|\le k+1$ and $|\chi(X\cap L_k)|\le2k+2\le E(2,k)$.
\end{proof}

\flag{Status of Hypothesis~\ref{hyp:HL}. This is the classical affine Lefschetz attaching statement; the content requiring verification is the at-infinity genericity --- that a \emph{generic} linear pencil on a smooth affine variety has no atypical values coming from infinity, so the attaching is governed by the finite critical points alone. The author must pin the precise statement against \cite{HL73} and the exposition in \cite[Ch.~1, \S6 and Ch.~5]{Dim92} before this flag is removed; until then, Proposition~\ref{prop:envelope} --- and with it every constant in the ceiling and in Theorem~\ref{thm:nogo} --- should be read as conditional on (HL). Two mitigations. First, the dependence is on a textbook statement, not on new mathematics; no exponent depends on the precise form of the genericity. Second, there is an independent classical route to the same envelope shape: the Euler characteristic of the complement $\PP^k\setminus V(g)$ equals the alternating sum of the projective degrees of the gradient map of $g$, each of which is at most $(d-1)^j$ by the same refined-B\'ezout argument as Lemma~\ref{lem:crit}; this trades (HL) for the gradient-degree identity (in the style of Dimca--Papadima, Huh, and Aluffi) and is equally a citation to be pinned. 
Everything else in the proof is B\'ezout and Euler's relation.}

\subsection{The transform: slice data to multidegrees, mass four}

\begin{theorem}[Second-difference transform]\label{thm:transform}
Let $\varphi$ be a constructible function on $\PP^n$ with characteristic cycle $\CC(\varphi)=n_0[\mathrm{zero\ section}]+\sum_Wn_W[\Con W]$ and multidegree vector $\Delta_i(\varphi)=\sum_Wn_W\delta_i(\Con W)$ as in Section~\ref{sec:cc}. Then, with $\chi_k=\chi_k(\varphi)$ and $\chi_{-1}=\chi_{-2}=0$,
\[
\Delta_i(\varphi)\;=\;(-1)^i\bigl(\chi_{n-i}-2\chi_{n-i-1}+\chi_{n-i-2}\bigr),\qquad i=0,\dots,n-1,
\qquad\text{and}\qquad n_0(\varphi)=(-1)^n\chi_0(\varphi).
\]
In particular the transform from slice data to any single multidegree slot has $\ell^1$-mass at most $4$ (exactly $4$ for $i\le n-2$), independently of $n$.
\end{theorem}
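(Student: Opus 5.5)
The plan is to exploit linearity and the universality statement recorded in Section~\ref{sec:cc}, and to avoid computing the slice transform directly. Both sides of the claimed identity are $\Z$-linear in $\varphi$. The right-hand side is linear because $\varphi\mapsto\chi_k(\varphi)$ is an Euler integral over a fixed generic $L_k$. The left-hand side is linear because $\CC$ is linear. By the index theorem in its projective-slice form (Ernstr\"om, Matsui--Takeuchi), each $\chi_k(\varphi)$ is a universal $\Z$-linear functional of the vector $\bigl(n_0(\varphi),\Delta_0(\varphi),\dots,\Delta_{n-1}(\varphi)\bigr)\in\Z^{n+1}$, with coefficients depending only on $(n,k,i)$. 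I take this as given from the citation. The theorem then reduces to a finite linear-algebra problem: determine this universal $(n+1)\times(n+1)$ matrix and invert it. It suffices to evaluate on test functions whose data vectors form a basis of $\Z^{n+1}$.

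The test functions I would use are indicators of linear subspaces, since they are smooth and fully computable. For $0\le j\le n-1$, take $\varphi=\mathbf 1_{L_j}$ with $L_j$ a linear subspace of dimension $j$. Since $L_j$ is smooth, $\mathbf 1_{L_j}=\Eu_{L_j}$, so $\CC(\mathbf 1_{L_j})=(-1)^j[\Con L_j]$ and $n_0=0$. Moreover $\Con L_j=L_j\times L_j^{\perp}\cong\PP^j\times\PP^{n-1-j}$, so $\delta_i(\Con L_j)=\int h^i\hd^{\,n-1-i}$ equals $1$ if $i=j$ and $0$ otherwise. Hence $\Delta_i(\mathbf 1_{L_j})=(-1)^j\delta_{ij}$.

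On the slice side, a generic $L_k$ meets $L_j$ in a $\PP^{j+k-n}$, or is disjoint from it when $j+k<n$. This gives $\chi_k(\mathbf 1_{L_j})=r(j+k-n)$ with the ramp $r(x)=\max(x+1,0)$. The second difference $r(x)-2r(x-1)+r(x-2)$ equals $1$ at $x=0$ and $0$ for every other integer $x$. Evaluating at $x=j-i$ shows that $(-1)^i(\chi_{n-i}-2\chi_{n-i-1}+\chi_{n-i-2})=(-1)^i\delta_{ij}=(-1)^j\delta_{ij}$, which matches. The conventions $\chi_{-1}=\chi_{-2}=0$ agree with $r$ at negative arguments, so the boundary slots $i=n-1$ need no separate treatment.

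The zero-section direction is handled by $\varphi=\mathbf 1_{\PP^n}=\Eu_{\PP^n}$. Here $n_0=(-1)^n$, $\Delta\equiv0$, and $\chi_k=k+1$. The second differences vanish, including at $i=n-1$, where $\chi_1-2\chi_0+\chi_{-1}=2-2+0=0$. Also $(-1)^n\chi_0=(-1)^n=n_0$, which confirms the formula for $n_0$. The $n+1$ data vectors of $\mathbf 1_{L_0},\dots,\mathbf 1_{L_{n-1}},\mathbf 1_{\PP^n}$ are $\pm$ the standard basis vectors. Therefore the proposed formulas, being linear in the slice data and correct on a basis of the data space, agree with the universal inverse transform on all of $\Z^{n+1}$, and hence for every constructible $\varphi$. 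The mass statement is then read off: the row for slot $i$ has coefficients $1,-2,1$, of $\ell^1$-mass $4$ when $n-i-2\ge0$. For $i=n-1$ it has mass $3$, because $\chi_{-1}=0$ is a convention rather than a genuine slot.

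The main obstacle is the universality input: that $\chi_k$ factors through $(n_0,\Delta)$ alone, rather than depending on the individual cycles $[\Con W]$. I would justify this by the index formula $\chi(L_k;\varphi)=\CC(\varphi)\cdot\CC(\mathbf 1_{L_k})$ for $L_k$ transverse to a Whitney stratification (Kleiman). The Lagrangian cycle $\CC(\mathbf 1_{L_k})$ has rational class in the projectivized cotangent bundle of $\PP^n$, a subvariety of $\PP^n\times\Pd^n$, that is a polynomial in $h,\hd$ independent of choices. Intersecting this class with $[\Con W]$ therefore yields a fixed combination of the $\delta_i(\Con W)$, and the zero-section term contributes $n_0$ times a constant. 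The point requiring care is the sign and normalization conventions of the index pairing. These, however, are exactly what the linear-subspace calibration fixes, so any convention error would surface as a mismatch in the calibration step rather than propagate silently.
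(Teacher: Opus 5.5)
Your proposal is correct and is essentially the paper's proof. Universality of the slice functionals in $(n_0,\Delta_0,\dots,\Delta_{n-1})$ is taken from the Ernstr\"om/Matsui--Takeuchi index theorem. The formula is then checked on indicators of linear subspaces $\PP^j\subseteq\PP^n$, with $j=n$ covering the zero section: their data vectors are $\pm$ the standard basis, and their slice values form the ramp $\max(0,j+k-n+1)$, whose second difference is $1$ at the kink and $0$ elsewhere.

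One minor caveat on your closing sketch: the pairing cannot literally be taken in $\PP(T^*\PP^n)$. There $\dim\Con W+\dim\Con L_k=2n-2<2n-1$, and the two cycles are disjoint for generic $L_k$; the index is an excess intersection along the zero section, naturally computed in $\PP(T^*\PP^n\oplus\mathcal O)$. Since you, like the paper, rely on the citation for universality, this does not affect the argument.
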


\begin{proof}
\emph{Universality.} By the microlocal index theorem \cite{Kas73,BDK81,Dub84,KS90} in its projective Radon-transform form \cite{Ern94,MT07}, for generic $L_k$ the value $\chi_k(\varphi)$ is computed by the intersection pairing of $\CC(\varphi)$ with the (co)normal data of $L_k$; transversality for generic $L_k$ (Kleiman) makes the pairing of each fixed component $[\Con W]$ (resp. the zero section) with the $L_k$-data depend only on $(n,k)$ and the class of $\Con W$ in $A_{n-1}(\PP^n\times\Pd^n)$, i.e. only on the multidegrees $\delta_i(\Con W)$. Hence there is a matrix $A\in\Z^{(n+1)\times(n+1)}$, depending only on $n$, with
\[
\bigl(\chi_0(\varphi),\dots,\chi_n(\varphi)\bigr)^{\mathsf T}=A\cdot\bigl(n_0(\varphi),\Delta_0(\varphi),\dots,\Delta_{n-1}(\varphi)\bigr)^{\mathsf T}
\quad\text{for all constructible }\varphi .
\]
\emph{Evaluation on a basis.} Take $\varphi_m=\mathbb{1}_{\PP^m}$ for linear subspaces $\PP^m\subseteq\PP^n$, $m=0,\dots,n$. These are smooth, so $\CC(\varphi_m)=(-1)^m[\Con\PP^m]$, and $\Con\PP^m\cong\PP^m\times\PP^{n-m-1}$ has $\delta_i=\mathbb 1_{\{i=m\}}$; thus the coefficient vector of $\varphi_m$ is $(-1)^m e_m$ (reading $e_n$ as the zero-section slot). On the other side, $\chi_k(\varphi_m)=\chi(\PP^{m+k-n})=\max(0,\,m+k-n+1)$, a ramp in $k$ with kink at $k=n-m-1$. The matrix $R=[\chi_k(\varphi_m)]$ is unimodular (triangular with respect to $m+k$, with units on the antidiagonal $m+k=n$), so $A=R\cdot D^{-1}$ with $D=\mathrm{diag}((-1)^m)$ is invertible over $\Z$, and the coefficient vector of any $\varphi$ is the \emph{unique} universal linear image of its slice vector.

\emph{Closed form.} It therefore suffices to verify the displayed formulas on the basis. The second difference of the ramp $r_k=\max(0,m+k-n+1)$, evaluated as $r_{k+1}-2r_k+r_{k-1}$, equals $1$ at the kink $k=n-m-1$ and $0$ elsewhere; the formula's $i$-th entry reads the second difference centred at $k=n-i-1$, multiplied by $(-1)^i$, hence equals $(-1)^m$ exactly at $i=m$ and $0$ otherwise: this is $(-1)^me_m$, as required. For the zero-section slot, $\chi_0(\varphi_m)=\mathbb 1_{\{m=n\}}$ and $n_0(\varphi_m)=(-1)^n\mathbb 1_{\{m=n\}}$, verifying $n_0=(-1)^n\chi_0$. The mass count is immediate from the coefficients $(1,-2,1)$.
\end{proof}

\begin{remark}[Smooth consistency check]\label{rem:smoothcheck}
For $X$ smooth of degree $d$, $\CC(\mathbb{1}_X)=(-1)^{n-1}[\Con X]$ and $\delta_i(\Con X)=d(d-1)^{n-1-i}$, so Theorem~\ref{thm:transform} asserts the classical identity $\chi_k-2\chi_{k-1}+\chi_{k-2}=(-1)^{k-1}d(d-1)^{k-1}$ for the Euler characteristics $\chi_k$ of smooth degree-$d$ hypersurfaces in $\PP^k$. At $k=2$: $\chi(\text{smooth plane curve})-2d+0=(3d-d^2)-2d=-d(d-1)$. Appendix~\ref{app:examples} runs the singular checks: nodal and cuspidal curves, a plane arrangement, the Whitney umbrella, and two non-reduced examples.
\end{remark}

\subsection{The ceiling}

\begin{theorem}[Invariant ceiling]\label{thm:ceiling}
Let $f$ be nonzero homogeneous of degree $d\ge2$ in $N\ge3$ variables and $X=V(f)_{\mathrm{red}}\subset\PP^{n}$, $n=N-1$. Write, as in \eqref{eq:positivity}, $(-1)^{n-1}\CC(\mathbb{1}_X)=\sum_Sm_S[\Con\bar S]$ with $m_S\ge0$. Then for every stratum $S$ and every $0\le i\le n-1$,
\[
m_S\,\delta_i(\Con\bar S)\;\le\;\bigl|\Delta_i(\mathbb{1}_X)\bigr|\;\le\;4\,E(d,n)\;=\;8(d-1)^{N-1}+8N .
\]
Consequently $\bigl(m_S\,\delta_i(\Con\bar S)\bigr)^{1/(N-1)}\le(d-1)\bigl(1+o(1)\bigr)$ as $N\to\infty$, uniformly in $d\ge2$.
\end{theorem}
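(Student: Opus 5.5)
The plan is to chain the three inputs already in place: Kashiwara positivity \eqref{eq:positivity}, the second-difference transform of Theorem~\ref{thm:transform}, and the envelope of Proposition~\ref{prop:envelope}.

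\emph{Step 1 (positivity isolates one component).} By \eqref{eq:positivity}, applied to the reduced hypersurface $X$, we have $(-1)^{n-1}\CC(\mathbf{1}_X)=\sum_S m_S[\Con\bar S]$ with all $m_S\ge0$, and there is no zero-section term. Comparing with the definition of the multidegree vector in Section~\ref{sec:cc} gives
\[
(-1)^{n-1}\Delta_i(\mathbf{1}_X)=\sum_S m_S\,\delta_i(\Con\bar S).
\]
Each summand is a nonnegative integer, since multidegrees of effective cycles are nonnegative because $h$ and $\hd$ are globally generated. Hence any single term satisfies $m_S\delta_i(\Con\bar S)\le\sum_{S'}m_{S'}\delta_i(\Con\bar S')=|\Delta_i(\mathbf{1}_X)|$. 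This is the first inequality, and it is exactly where the absence of cancellation is used.

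\emph{Step 2 (transform).} Theorem~\ref{thm:transform} with $\varphi=\mathbf{1}_X$ gives $\Delta_i=(-1)^i(\chi_{n-i}-2\chi_{n-i-1}+\chi_{n-i-2})$, where $\chi_k=\chi(X\cap L_k)$ and $\chi_{-1}=\chi_{-2}=0$. The triangle inequality then yields
\[
|\Delta_i|\le|\chi_{n-i}|+2|\chi_{n-i-1}|+|\chi_{n-i-2}|.
\]

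\emph{Step 3 (envelope).} Each nonzero term is bounded by Proposition~\ref{prop:envelope}: $|\chi_k|\le E(d,k)=2(d-1)^k+2k+2$. Since $d-1\ge1$, $E(d,k)$ is nondecreasing in $k$, so every term is at most $E(d,n)$. With total coefficient mass $1+2+1=4$, this gives
\[
|\Delta_i|\le 4E(d,n)=8(d-1)^{n}+8n+8=8(d-1)^{N-1}+8N.
\]
The vanishing convention for $\chi_{-1}$ and $\chi_{-2}$ only helps when $i=n-1$.

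\emph{Step 4 (root).} Because $d-1\ge1$, we have $8(d-1)^{N-1}+8N\le(d-1)^{N-1}(8+8N)$. Taking $(N-1)$-st roots gives the bound $(d-1)(8N+8)^{1/(N-1)}=(d-1)(1+o(1))$, and the $o(1)$ is independent of $d$.

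I expect no real obstacle inside this proof: once the three ingredients are granted, it is bookkeeping. The substantive dependencies are inherited. The constant rests on Proposition~\ref{prop:envelope} and hence on Hypothesis~\ref{hyp:HL}, and the argument needs \eqref{eq:positivity} to hold for arbitrary, including reducible and non-normal, reduced $X$; this holds because any reduced hypersurface is a local complete intersection. The one point to double-check is the sign bookkeeping: $\Delta_i(\mathbf{1}_X)$ is defined from $\CC(\mathbf{1}_X)$ itself, so it equals $(-1)^{n-1}$ times a nonnegative sum, and therefore its absolute value is that sum.
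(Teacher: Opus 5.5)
Your proposal is correct and follows the paper's proof essentially step for step: positivity \eqref{eq:positivity} together with nonnegativity of multidegrees gives the first inequality, and Theorem~\ref{thm:transform} combined with Proposition~\ref{prop:envelope} and the monotonicity of $E(d,k)$ in $k$ gives the bound $4E(d,n)$. The only deviation is the root extraction: your single estimate $8(d-1)^{N-1}+8N\le(8N+8)(d-1)^{N-1}$ replaces the paper's case split on whether $(d-1)^{N-1}\ge N$, and both give a factor $1+o(1)$ uniform in $d$, yours slightly more cleanly.
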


\begin{proof}
Each $\delta_i(\Con\bar S)$ is nonnegative (Section~\ref{sec:conormal}) and each $m_S\ge0$ by \eqref{eq:positivity}, so the sum $\sum_Sm_S\delta_i(\Con\bar S)=|\Delta_i(\mathbb{1}_X)|$ bounds every summand: positivity forbids cancellation. By Theorem~\ref{thm:transform} and Proposition~\ref{prop:envelope},
\[
|\Delta_i(\mathbb{1}_X)|\le|\chi_{n-i}|+2|\chi_{n-i-1}|+|\chi_{n-i-2}|\le4\max_{0\le k\le n}E(d,k)=4E(d,n)=8(d-1)^{N-1}+8(N-1)+8 .
\]
For the per-root statement: if $(d-1)^{N-1}\ge N$ the bound is at most $16(d-1)^{N-1}$, whose $(N-1)$-st root is $(d-1)\cdot16^{1/(N-1)}=(d-1)(1+o(1))$; otherwise the bound is at most $16N$, whose $(N-1)$-st root is $1+o(1)\le(d-1)(1+o(1))$.
\end{proof}

The same statement holds verbatim with $\mathbb{1}_X$ replaced by $\Eu_X$ (whose $\CC$ is a single conormal term) and, with masses $2^{O(N)}$ that the root absorbs, for the Chern--Mather and Chern--Schwartz--MacPherson classes, which are $\Z$-combinations of the same data \cite{Mac74,Alu18}; the precise bookkeeping is Definition~\ref{def:invariant} and Lemma~\ref{lem:invclass}.

\subsection{The nearby-cycle variant}\label{sec:phivariant}

\begin{lemma}[$\nu$-envelope: reduced, isolated]\label{lem:nu-isolated}
Let $f$ be reduced of degree $d$ with $\dim\Sing X\le0$. Then for every $0\le k\le n$, $|\chi_k(\nu_f)|\le E(d,k)$, and the conclusion of Theorem~\ref{thm:ceiling} holds verbatim for $\CC(\nu_f)$ and hence for $\CC(\varphi_f)=\CC(\nu_f)-\CC(\mathbb{1}_X)$ with twice the constant.
\end{lemma}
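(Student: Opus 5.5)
The plan is to compute the slice values $\chi_k(\nu_f)$ directly and compare them with $\chi_k(\mathbf{1}_X)$. Since $f$ is reduced and $\dim\Sing X\le 0$, a generic $L_k$ with $k\le n-1$ misses the finitely many singular points (Kleiman). On $X\cap L_k$ every point is then a reduced-smooth point of $f$. There the Milnor fibre is contractible, so $\nu_f=1$, and $\chi_k(\nu_f)=\chi_k(\mathbf{1}_X)$ for all $k\le n-1$. The envelope bound $|\chi_k(\nu_f)|\le E(d,k)$ in these slots is therefore exactly Proposition~\ref{prop:envelope}.

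The only new slot is $k=n$, where
\[
\chi_n(\nu_f)=\chi(X)+\sum_{p\in\Sing X}\bigl(\chi(F_{f,p})-1\bigr)=\chi(X)+(-1)^{n-1}\sum_p\mu_p .
\]
Here $\mu_p$ is the Milnor number of the isolated hypersurface singularity in $n$ local variables, and the Milnor fibre is a bouquet of $(n-1)$-spheres. I would not bound $\chi(X)$ and $\sum_p\mu_p$ separately. Instead I would identify $\chi_n(\nu_f)$ with the Euler characteristic of a smoothing. Take a generic degree-$d$ form $g$ and the one-parameter family $f+tg$. For small $t\neq0$ the nearby fibre $X_t=V(f+tg)$ is a smooth degree-$d$ hypersurface. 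The standard specialization formula for nearby cycles of the family (the constructible function $\nu_f$ computes the nearby-cycle Euler characteristic of the smoothing) gives
\[
\chi(X_t)=\int_X\nu_f\,d\chi,
\]
which is the classical ``$\chi(\text{smooth})=\chi(X)+(-1)^{n-1}\sum\mu$'' identity for isolated singularities. Hence $\chi_n(\nu_f)=\chi(X_t)$ is the Euler characteristic of a smooth degree-$d$ hypersurface in $\PP^n$. That is itself a hypersurface to which Proposition~\ref{prop:envelope} applies, so $|\chi_n(\nu_f)|\le E(d,n)$. Equivalently, one can use the explicit formula $\chi=\frac{(1-d)^{n+1}-1}{d}+n+1$, which is bounded by $E(d,n)$ by a direct check.

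With the slice envelope in hand, the rest follows the proof of Theorem~\ref{thm:ceiling}. Perversity of $\psi_f\mathbb{Q}[n]$ and Kashiwara positivity give $(-1)^{n-1}\CC(\nu_f)=\sum_S m'_S[\Con\bar S]$ with $m'_S\ge0$, where the strata are the smooth part of $X$ and the singular points. There is no zero-section term, because $\nu_f$ vanishes off $X$ and $n_0=(-1)^n\chi_0(\nu_f)$ is zero for a generic point $L_0$. Theorem~\ref{thm:transform} then bounds each $|\Delta_i(\nu_f)|\le 4E(d,n)$, and the absence of cancellation bounds each summand. For $\varphi_f=\nu_f-\mathbf{1}_X$, linearity of $\CC$ and the triangle inequality give $|\Delta_i(\varphi_f)|\le 8E(d,n)$. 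The components of $\CC(\varphi_f)$ are supported at the singular points, so positivity applies there as well: $\varphi_f[n-1]$ is perverse, being the stalk function of the vanishing-cycle complex. Thus the component bound holds with twice the constant.

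The main obstacle is the top slot, namely justifying $\int_X\nu_f\,d\chi=\chi(X_t)$ for a global generic smoothing. I would first try the nearby-cycle functor for the proper family over a disc, together with the fact that proper pushforward commutes with $\psi$. One subtlety remains: $\nu_f$ is the local Milnor fibre of $f$, not of the family $f+tg$. The two agree at the isolated singular points because $g(p)\neq0$ for generic $g$, which makes the pencil's fibre locally a Milnor fibre of $f$. They also agree at the smooth points. The base locus $V(f)\cap V(g)$ needs a separate check, and the fallback is a direct count. Taking $\chi(X_t)-\chi(X)=(-1)^{n-1}\sum\mu_p$ from the Milnor-fibre description reduces the claim to the same identity, so the base-locus check is where I expect the real work to lie.
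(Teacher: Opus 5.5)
Your argument follows the paper's proof. Generic slices $L_k$ with $k<n$ miss the finite set $\Sing X$, so $\nu_f=\mathbf{1}_X$ on them. The top slot is handled by the generic pencil $f+tg$ together with the specialization identity $\chi(X_t)=\int_X\nu_{\mathrm{fam}}\,d\chi$ for the proper total space $Y=\{f+tg=0\}\subset\PP^n\times\C$. You then apply Proposition~\ref{prop:envelope} to the degree-$d$ hypersurface $X_t$ and get positivity from perversity of shifted nearby cycles.

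The one step you left open is the base locus $B=X\cap V(g)$, which you flagged as the real work. It is short and needs no fallback to the classical $\mu$-formula. For generic $g$ the set $B$ misses the finite set $\Sing X$, so at each $b\in B$ we have $g(b)=0$ and $df(b)\neq0$. Hence $d(f+tg)$ at $(b,0)$ equals $df(b)\neq0$. So $Y$ is smooth at $(b,0)$, with tangent space $\{(\xi,\tau):df(b)\xi=0\}$ in which $\tau$ is free. Therefore $t|_Y$ is a submersion there, its Milnor fibre is contractible, and $\nu_{\mathrm{fam}}(b)=1=\nu_f(b)$. Away from $B$ your own observation applies: $t=-f/g$ is $f$ times a unit, so $\nu_{\mathrm{fam}}=\nu_f$.

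Two minor points. Smoothness of $X_t$ is never needed, only that it is a degree-$d$ hypersurface. For $\varphi_f$, the componentwise bound already follows by subtracting the nonnegative, separately bounded masses of $\CC(\nu_f)$ and $\CC(\mathbf{1}_X)$. Your appeal to perversity of vanishing cycles is correct but optional.
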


\begin{proof}
For $k<n$, a generic $L_k$ misses the finite set $\Sing X$ and is non-characteristic, so $\nu_f|_{X\cap L_k}\equiv1$ and $\chi_k(\nu_f)=\chi(X\cap L_k)$, bounded by Proposition~\ref{prop:envelope}. For $k=n$, choose a generic degree-$d$ form $g$ and consider the pencil $X_t=V(f+tg)$ with total space $Y=\{f+tg=0\}\subset\PP^n\times\C$ and base locus $B=X\cap V(g)$, which is disjoint from $\Sing X$ for generic $g$. Properness of $Y\to\C$ gives $\chi(X_t)=\int_{X}\nu_{\mathrm{fam}}\,d\chi$ for $0<|t|\ll1$, where $\nu_{\mathrm{fam}}(x)$ is the Euler characteristic of the Milnor fibre at $x$ of the function $t$ on $Y$. Away from $B$, locally $t=-f/g$ with $g$ a unit, so $Y$ is a graph (smooth) and $\nu_{\mathrm{fam}}=\nu_f$ there --- including at the singular points of $X$. At $b\in B$, $X$ is smooth, $df(b)\ne0$, the total space is smooth at $(b,0)$ and $t|_Y$ is submersive there (a tangent vector $(\xi,\tau)$ satisfies $df(b)\xi+\tau g(b)=df(b)\xi=0$ with $\tau$ free), so $\nu_{\mathrm{fam}}(b)=1=\nu_f(b)$. Hence $\chi_n(\nu_f)=\int_X\nu_f\,d\chi=\chi(X_t)$ with $X_t$ a degree-$d$ hypersurface, and Proposition~\ref{prop:envelope} applies to it. The positivity input replacing \eqref{eq:positivity} is that shifted nearby cycles of perverse sheaves are perverse \cite{KS90,DimST}, so $(-1)^{n-1}\CC(\nu_f)$ is effective; the rest of the proof of Theorem~\ref{thm:ceiling} is unchanged.
\end{proof}

\begin{remark}[The one conditional constant]\label{rem:massey}
For general $f$ --- non-isolated singularities and/or non-reduced structure --- the same shape of envelope for $\chi_k(\nu_f)$ holds with the multiplicity weights absorbed exactly (for instance $\chi_1(\nu_f)=\sum_ie_i\deg g_i=d$ for $f=\prod g_i^{e_i}$, as in Appendix~\ref{app:nonreduced}), and with the magnitudes along positive-dimensional singular strata controlled by L\^e numbers; Massey's polar-type bounds \cite{Mas95} give L\^e-number envelopes of order $(d-1)^{\bullet}$, yielding $|\chi_k(\nu_f)|\le d^k\,e^{O(k)}$ in general. We have verified the resulting ceiling on the non-reduced examples of Appendix~\ref{app:nonreduced} and use it nowhere with a specific constant: \emph{this is the single conditional ingredient of the paper}, it enters only the $e^{O(1)}$ of Theorem~\ref{thm:nogo} when $\nu$-type invariants (Milnor classes) are included, and it affects no exponent. The no-go theorem for the $\mathbb{1}_X$-family of invariants ($\CC(\mathbb{1}_X)$, Chern--Mather, CSM) is unconditional.
\end{remark}

\section{The no-go theorem}\label{sec:nogo}

We now assemble the two halves. Section~\ref{sec:ceiling} caps every component of every characteristic-cycle invariant of the input at $e^{O(1)}d$ per root; Section~\ref{sec:reading} pins the per-root cost of every kernel-incidence reading at $(4e-o(1))\,m/N$ from below. The comparison closes at $m=e^{O(1)}dN$. This section makes ``every invariant'' and ``every reading'' precise, supplies the one missing floor estimate in the bounded-ratio regime, and proves Theorem~D.

\subsection{The invariant class}

\begin{definition}[Profile; CC-type invariants]\label{def:invariant}
Let $f$ be nonzero homogeneous of degree $d\ge2$ in $N$ variables, $n=N-1$, $X=V(f)_{\mathrm{red}}$. The \emph{profile} $P(f)$ of $f$ is the finite vector consisting of, for each $\varphi\in\{\mathbb 1_X,\ \Eu_X,\ \nu_f\}$: the zero-section coefficient $n_0(\varphi)$, the aggregated multidegrees $\Delta_i(\varphi)$, and the individual component data $n_W(\varphi)\,\delta_i(\Con W)$ for $0\le i\le n-1$ (Section~\ref{sec:cc}). An invariant $I$ is \emph{of characteristic-cycle type with weight $w$} if there is a $\Z$-linear functional $\Lambda$ on profile vectors, with coefficients depending only on $N$ and of total $\ell^1$ mass at most $w$, such that $I(f)=|\Lambda(P(f))|$. We say $I$ is in the \emph{$\mathbb 1_X$-scope} if $\Lambda$ involves only the $\mathbb 1_X$- and $\Eu_X$-blocks, and in the \emph{$\nu$-scope} otherwise. A \emph{slot-$i$ component} is a single coordinate of the profile carrying the index $i$.
\end{definition}

\begin{lemma}[Polar B\'ezout]\label{lem:polarbez}
For every reduced hypersurface $X\subset\PP^n$ of degree $d\ge2$ and every $0\le i\le n-1$,
\[
\delta_i(\Con X)\;\le\;d(d-1)^{\,n-1-i}.
\]
\end{lemma}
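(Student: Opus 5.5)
The plan is to realize $\delta_i(\Con X)$ as a count of isolated points of an explicit polynomial system and then bound that count by Fulton's refined B\'ezout theorem, exactly as in Lemma~\ref{lem:crit}. Write $X=V(f)$ with $f$ reduced of degree $d$ in $N=n+1$ variables. On the smooth locus $X_{\mathrm{sm}}$ the conormal variety is the closure of the graph of the Gauss map $x\mapsto[\nabla f(x)]$, because reducedness makes $\nabla f\ne0$ on a dense open subset of every component of $X$. Hence $\delta_i(\Con X)=\int_{\Con X}h^i\hd^{\,n-1-i}$ counts the points of $\Con X$ lying over a generic codimension-$i$ linear space $\Lambda\subset\PP^n$ and mapping into a generic codimension-$(n-1-i)$ linear space $\Lambda'\subset\Pd^n$. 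By Kleiman transversality (both classes are globally generated), for generic $(\Lambda,\Lambda')$ these points are reduced. They also lie over $X_{\mathrm{sm}}$, because the part of $\Con X$ over $\Sing X$ has dimension $\le n-2$ and is therefore missed by $n-1$ generic conditions.

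Next I would write the point count as a system on $\PP^n$. The conditions are $f(x)=0$; $x\in\Lambda$, which is $i$ linear equations; and $\nabla f(x)\in\Lambda'$, which says that $n-1-i$ generic linear combinations $\sum_j c^{(s)}_j\partial_jf(x)$ vanish. Each of these is a form of degree $d-1$ in $x$. The system therefore consists of $n$ hypersurfaces in $\PP^n$, with degrees $d$, then $1$ repeated $i$ times, then $d-1$ repeated $n-1-i$ times.

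Every point of $\Con X\cap(\Lambda\times\Lambda')$ over $X_{\mathrm{sm}}$ gives a solution $x$ with $\nabla f(x)\ne0$. Distinct conormal points over $X_{\mathrm{sm}}$ have distinct $x$, so counting points in $\Con X$ is the same as counting these solutions. The refined B\'ezout theorem \cite[Ex.~12.3.1]{Ful98} bounds the number of isolated solutions by $d\cdot(d-1)^{\,n-1-i}$, even when the system also has excess components, for instance inside $\Sing X\cap\Lambda$, where all partials vanish.

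The main obstacle is checking that each counted point is an \emph{isolated} solution of the system, so that the excess locus $\{\nabla f=0\}\cap X\cap\Lambda$ cannot absorb it. I would handle this through transversality. The map $x\mapsto(x,[\nabla f(x)])$ is a local isomorphism from $X_{\mathrm{sm}}$ onto $\Con X$ near each counted point. So the system, restricted to a neighbourhood of $x$ in $\PP^n$, cuts out locally the same scheme as $\Con X\cap(\Lambda\times\Lambda')$. That scheme is a reduced point, since the equation $f=0$ defines $X$ reducedly near smooth points. This is the one place where reducedness is essential. Summing over the finitely many isolated points then gives $\delta_i(\Con X)\le d(d-1)^{n-1-i}$, with equality for smooth $X$, consistent with Section~\ref{sec:conormal}. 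Components $\Con X_j$ of a reducible $X$ need no separate treatment, since the argument never uses irreducibility.
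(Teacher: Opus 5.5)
Your proposal is correct and follows the paper's proof essentially step for step. You use Bertini/Kleiman to realize $\delta_i(\Con X)$ as a count of reduced points lying over $X_{\mathrm{sm}}$, translate the $\hd$-conditions into degree-$(d-1)$ combinations of the partials of $f$, and apply Fulton's refined B\'ezout bound on isolated solutions. Imposing the $i$ linear conditions as equations in $\PP^n$ rather than restricting to $L\cong\PP^{n-i}$ is only a cosmetic difference, and your local-isomorphism argument for isolatedness fills in a step the paper merely asserts.
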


\begin{proof}
Since $h$ and $\hd$ are very ample on $\PP^n\times\Pd^n$, Bertini applied to the irreducible $(n-1)$-fold $\Con X$ shows that $\delta_i$ equals the number of reduced points of $\Con X$ meeting $i$ generic divisors of class $h$ and $n-1-i$ generic divisors of class $\hd$, all contained in the dense open subset lying over $X_{\mathrm{sm}}$ (the complement has dimension $<n-1$ and is missed by $n-1$ generic divisors). Over $X_{\mathrm{sm}}$ the hyperplane $H$ is the tangent hyperplane at $x$, so the $\hd$-conditions $p_j\in H$ become $\sum_kp_j^{(k)}\partial_kf(x)=0$: equations of degree $d-1$ in $x$. The points being counted are therefore isolated solutions, inside the generic linear section $L\cong\PP^{n-i}$, of the system $\{f|_L=0\}\cup\{n-1-i$ equations of degree $d-1\}$; by Fulton's refined B\'ezout theorem \cite[Ex.~12.3.1]{Ful98} their number is at most $d(d-1)^{n-1-i}$. (This is the classical bound of polar degrees by their smooth values \cite{Pie78,Tei73}; we record the refined-B\'ezout proof for self-containedness.)
\end{proof}

\begin{corollary}[Slotwise ceiling]\label{cor:slotwise}
With $f,X$ as above and the decomposition \eqref{eq:positivity}, for every stratum $S$ and every slot $i$,
\[
m_S\,\delta_i(\Con\bar S)\;\le\;\bigl|\Delta_i(\mathbb 1_X)\bigr|\;\le\;4\,E(d,\,n-i)\;=\;8(d-1)^{\,n-i}+8(n-i)+8,
\]
and $\delta_i(\Con X)\le d(d-1)^{n-1-i}$. The same slotwise bound holds for the components of $\CC(\nu_f)$ when $f$ is reduced with isolated singularities (Lemma~\ref{lem:nu-isolated}), and in general with the constant of Remark~\ref{rem:massey}.
\end{corollary}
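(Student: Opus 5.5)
The plan is to rerun the proof of Theorem~\ref{thm:ceiling}, tracking which slices enter each slot instead of bounding every slice value by the worst case $k=n$.

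\textbf{Positivity.} By \eqref{eq:positivity}, $(-1)^{n-1}\CC(\mathbb 1_X)=\sum_S m_S[\Con\bar S]$ with $m_S\ge0$. Every $\delta_i(\Con\bar S)$ is nonnegative because $h$ and $\hd$ are globally generated. Hence
\[
\sum_S m_S\,\delta_i(\Con\bar S)=\bigl|\Delta_i(\mathbb 1_X)\bigr| ,
\]
and each summand $m_S\delta_i(\Con\bar S)$ is bounded by this total. This gives the first inequality.

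\textbf{Slot-resolved transform.} Theorem~\ref{thm:transform} gives
\[
\bigl|\Delta_i(\mathbb 1_X)\bigr|\le|\chi_{n-i}|+2|\chi_{n-i-1}|+|\chi_{n-i-2}| .
\]
Only slice dimensions $k\le n-i$ appear here; the terms with $k<0$ vanish by convention. Proposition~\ref{prop:envelope} bounds each $|\chi_k|$ by $E(d,k)=2(d-1)^k+2k+2$. Since $E(d,k)$ is nondecreasing in $k$ for $d\ge2$, the right side is at most $4E(d,n-i)=8(d-1)^{n-i}+8(n-i)+8$. As in Theorem~\ref{thm:ceiling}, this step inherits the conditional status of Hypothesis~\ref{hyp:HL} through Proposition~\ref{prop:envelope}.

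\textbf{Polar bound for $X$ itself.} The inequality $\delta_i(\Con X)\le d(d-1)^{n-1-i}$ is exactly Lemma~\ref{lem:polarbez}, applied to the reduced hypersurface $X$. Its degree is at most $d$, and $d(d-1)^{n-1-i}$ is monotone in $d$, so using the degree of the given equation is harmless.

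\textbf{$\nu$-variants.} When $f$ is reduced with isolated singularities, Lemma~\ref{lem:nu-isolated} supplies two things. First, the envelope $|\chi_k(\nu_f)|\le E(d,k)$ holds for every $k$. Second, $(-1)^{n-1}\CC(\nu_f)$ is effective, by perversity of nearby cycles. With these, the positivity and transform steps above go through verbatim with $\nu_f$ in place of $\mathbb 1_X$. In the general case the envelope is replaced by the one in Remark~\ref{rem:massey}, which carries its conditional constant.

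\textbf{Main obstacle.} The step needing care is the $\nu$-scope in general. There the slotwise envelope is only as good as Massey's L\^e-number bounds, and the monotonicity in $k$ must be checked for that envelope rather than for $E$. For the $\mathbb 1_X$ statement everything reduces to earlier results.
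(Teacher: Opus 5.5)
Your proposal is correct and follows the paper's route. Positivity of $(-1)^{n-1}\CC(\mathbb 1_X)$ turns the total into a bound on each component, the second-difference transform shows that slot $i$ involves only $\chi_{n-i},\chi_{n-i-1},\chi_{n-i-2}$, and Proposition~\ref{prop:envelope} together with monotonicity of $E(d,k)$ in $k$ closes the estimate. Your extra bookkeeping is accurate and is left implicit in the paper: reading $\delta_i(\Con X)\le d(d-1)^{n-1-i}$ off Lemma~\ref{lem:polarbez} using $\deg X_{\mathrm{red}}\le d$, and inheriting the dependence on Hypothesis~\ref{hyp:HL}.
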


\begin{proof}
Theorem~\ref{thm:transform} expresses $\Delta_i$ through $\chi_{n-i},\chi_{n-i-1},\chi_{n-i-2}$ only; Proposition~\ref{prop:envelope} bounds these by $E(d,k)$ with $k\le n-i$, and $E(d,k)$ is nondecreasing in $k$. Positivity \eqref{eq:positivity} converts the total into a componentwise bound as in Theorem~\ref{thm:ceiling}.
\end{proof}

The slot decay matters below: a slot-$i$ reading also pays a slot-dependent price, and the two decays match.

\begin{lemma}[The classical invariants are in the class]\label{lem:invclass}
Each of the following is of characteristic-cycle type: (i) every single profile component, with weight $1$ --- in particular $\delta_{\mathrm{top}}(X)=\delta_0(\Con X)$, every polar degree, and every stratum datum $m_S\delta_i$; (ii) the degrees of the Chern--Mather class of $X$, with weight $2^{O(N)}$, in the $\Eu_X$-block (Piene's formula expresses them as signed binomial combinations of the polar degrees \cite{Pie78}); (iii) the degrees of the Chern--Schwartz--MacPherson class of $X$, with weight $e^{O(N)}$, in the $\mathbb 1_X$-block (MacPherson \cite{Mac74}; the conversion between CSM degrees, generic-slice Euler characteristics, and the $\Delta$-profile has binomial coefficient masses \cite{Alu18}, the last step by the explicit matrix of Theorem~\ref{thm:transform}); (iv) the degrees of the Milnor class of $V(f)$, with weight $e^{O(N)}$, in the $\nu$-scope \cite{Alu18,Mas95}. In all cases the weight contributes at most $w^{1/(N-1)}=e^{O(1)}$ per root.
\end{lemma}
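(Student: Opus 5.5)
The plan is to verify the four items separately, each time exhibiting an explicit $\Z$-linear functional $\Lambda$ on the profile vector $P(f)$ of Definition~\ref{def:invariant} and bounding its $\ell^1$ mass. The final sentence is then immediate: a weight $w\le C^N$ gives $w^{1/(N-1)}\le C^{N/(N-1)}=e^{O(1)}$.

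\emph{Item (i).} For (i) nothing needs to be done, since a single coordinate of $P(f)$ is $\Lambda=$ a coordinate projection of mass $1$. It remains only to check that $\delta_{\mathrm{top}}(X)=\delta_0(\Con X)$ and the other polar degrees really are profile coordinates. I would use the $\Eu_X$-block: $\CC(\Eu_X)=(-1)^{n-1}[\Con X]$ is a single conormal term, so $n_X(\Eu_X)\delta_i(\Con X)=\pm\delta_i(\Con X)$. When $X$ is reducible, $\Eu_X=\sum_j\Eu_{X_j}$ over components, and $\delta_i(\Con X)$ is the sum of the corresponding component coordinates, with mass at most the number of components $\le d$. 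This is still $e^{O(1)}$ per root, but I would phrase it by reading the individual $n_W\delta_i$ coordinates. The stratum data $m_S\delta_i$ are, by \eqref{eq:positivity}, the coordinates $n_{\bar S}(\mathbf 1_X)\delta_i(\Con\bar S)$ up to the sign $(-1)^{n-1}$, which the absolute value in $I=|\Lambda(P)|$ absorbs.

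\emph{Item (ii).} For (ii) I would quote Piene's formula: $\deg c_M(X)_j$ is a signed combination $\sum_i(-1)^{\ast}\binom{n-i}{j}\delta_i(\Con X)$, roughly. Its coefficient mass is at most $\sum_i\binom{n+1}{\cdot}\le 2^{n+1}$, composed with item (i).

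\emph{Item (iii).} For (iii) I would route the computation explicitly. The CSM degrees are determined by the slice Euler characteristics $\chi_k(\mathbf 1_X)$ through Aluffi's binomial transform; the involution relating $c_{SM}$ to the $\chi(X\cap L_k)$ has entries $\binom{\cdot}{\cdot}$ and mass $2^{O(N)}$. The $\chi_k$ are in turn expressed through the profile by the matrix $A$ of Theorem~\ref{thm:transform}, that is, by inverting the second-difference relation. Here I expect the main obstacle: $A$ applied to $(n_0,\Delta_0,\dots)$ amounts to double summation of ramp functions, with entries $\le n+1$. Its mass per row is therefore $O(n^2)$, so $\chi_k$ has mass $O(N^2)$, and the composite mass is at most $2^{O(N)}\cdot O(N^2)=e^{O(N)}$. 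I would make this explicit by writing $\chi_k=\sum_m r_k(m)(-1)^m(\text{slot }m)$ with the ramp $r_k(m)=\max(0,m+k-n+1)$ from the proof of Theorem~\ref{thm:transform}.

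\emph{Item (iv).} For (iv), the Milnor class is, up to sign, $c_{SM}(X)-c_F(X)$, and after Aluffi's reformulation it is obtained from $\CC(\nu_f)$ and $\CC(\mathbf 1_X)$ in the same way. So the same binomial and ramp mass bound applies to the $\nu_f$-block together with the $\mathbf 1_X$-block, landing in the $\nu$-scope. The only care needed is that the Fulton class term is a universal expression in $d$ and $h$, hence also slice-expressible. I would flag the citation reliance (\cite{Alu18,Mas95}) rather than reprove it, consistent with the paper's conventions.
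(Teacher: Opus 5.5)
Your proposal is correct and follows the route the paper gives inline in the statement (the paper has no separate proof): a coordinate projection for (i), Piene's polar-degree formula for (ii), Aluffi's binomial CSM/slice conversion composed with the matrix of Theorem~\ref{thm:transform} for (iii), and $c_*$ applied to the $\nu_f$- and $\mathbf 1_X$-blocks for (iv). Your explicit ramp bound of $O(N^2)$ per row also usefully makes clear that (iii) uses the inverse matrix $A$, not the mass-$4$ second difference.

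Two small tightenings are needed. In (i), for reducible $X$ you do not need the component sum, whose weight $\le d$ is not $e^{O(1)}$ per root once $d\gg e^{N}$; instead use that $\delta_i(\Con X)=|\Delta_i(\Eu_X)|$ is itself an aggregated profile coordinate, which gives weight $1$ as stated. In (iv), the Fulton term lands in the $\nu$-block because $c_F(V(f))=c_*(\nu_f)$ (Parusi\'nski--Pragacz). Being a universal expression in $d$ does not suffice on its own, since a $d$-dependent expression is not a functional whose coefficients depend only on $N$.
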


\subsection{Readings and the floor}

\begin{definition}[Valid reading; derivable bound]\label{def:reading}
Fix $N\ge3$, a corank $c\ge1$, a slot $0\le i\le N-2$, and a size $m$. The \emph{slot-$i$ corank-$c$ system} of a linear matrix $M$ of size $m$ is the incidence $Z_c$ of Section~\ref{sec:incidence} intersected with $i$ generic position conditions (class $h$) and the complementary number of generic contraction conditions (class $a+b$ for $c=1$; the corresponding tautological globally generated classes on the Grassmann factors for $c\ge2$). A function $B(m)$ is a \emph{valid bound} for this reading if for \emph{every} $M$ of size $m$ the number of isolated solutions of the system is at most $B(m)$. An invariant $I$ is \emph{read} through the system if the engine inequality
\[
I(f)\;\le\;\#\{\text{isolated solutions of the system of }M_f\}
\]
holds for every size-$m$ representation $M_f$ of $f$ (in the homogeneous model; affine-linear representations enter through Remark~\ref{rem:homog} and Lemma~\ref{lem:taut} below). A \emph{profile-matched reading} of a weight-$w$ invariant $I=|\sum_i\lambda_i(\text{slot-}i\ \text{data})|$ bounds it by $B(m)=\sum_i|\lambda_i|\,B_i(m)$ with $B_i$ valid for the slot-$i$ system. The \emph{derivable bound} for a target $f_0$ is
\[
\mu_{I,B}(f_0)\;:=\;\min\{m:\ B(m)\ge I_m(f_0)\},
\]
where $I_m$ allows the homogenization multiplicity at size $m$ (Lemma~\ref{lem:taut}): the engine certifies $\dc>m$ exactly when $I_m(f_0)>B(m)$.
\end{definition}

The definition isolates what every argument of the companion papers' shape actually uses: a count bound valid for all $M$ (this is what ``intersection-theoretic'' delivers --- conservation of number), and a geometric injection of invariant data into solutions. This is also precisely where the theorem's generality is bought, and where its scope ends: an argument is in scope if and only if it (a) injects an invariant of Definition~\ref{def:invariant} into the solution set of a kernel-incidence system via the engine inequality $I(f)\le\#\{\text{solutions}\}$, and (b) bounds that count by a function valid for \emph{every} $M$ of the given size. A method that reads conormal data through a different injection, or in the reverse direction, is formally outside Definition~\ref{def:reading} even if conormal-flavored; Remark~\ref{rem:scope} records the two exits. Validity on \emph{all} $M$ is the lever:

\begin{lemma}[Floor at corank one, exact regime]\label{lem:floor1}
Every valid bound for the slot-$i$ corank-one reading satisfies $B_i(m)\ge T_i(N,m)=T(N-i,m)$, where $T_i$ is the full polar profile of Remark~\ref{rem:profile} and the identity $T_i(N,m)=T(N-i,m)$ (with $T(2,m):=m$) holds because extracting $h^i$ from \eqref{eq:T-rational} reproduces the same rational form with $N$ replaced by $N-i$. In particular, for $N\to\infty$, $m/N\to\infty$,
\[
B_0(m)^{1/(N-1)}\;\ge\;(4e-o(1))\,\frac mN .
\]
\end{lemma}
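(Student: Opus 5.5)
The plan is to prove the three assertions of Lemma~\ref{lem:floor1} in order: the floor inequality $B_i(m)\ge T_i(N,m)$, the identity $T_i(N,m)=T(N-i,m)$, and the per-root consequence at $i=0$. The main tool is the conservation-of-number argument of Proposition~\ref{prop:witness}, generalized from slot $0$ to slot $i$.

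\textbf{Floor inequality.} I take $M$ generic and form the slot-$i$ corank-one system: $Z_1$ cut by $i$ generic divisors of class $h$ and $N-2-i$ generic contraction divisors of class $a+b$. By Lemma~\ref{lem:generic}, $Z_1$ is reduced, of pure dimension $N-2$, and smooth off the preimage $W$ of $\Sigma_2$, with $\dim W\le N-3$.
\begin{itemize}
\item The linear system $|h|$ is base-point-free on all of $P$.
\item The contraction system $|a+b|$, pulled back via $\psi$, is base-point-free on $Z_1\setminus W$. This is because the vector $(u^{\mathsf T}A_kv)_k$ is proportional to the nonzero gradient there, as in the proof of Lemma~\ref{lem:condet}.
\end{itemize}
Kleiman--Bertini then shows that the $N-2$ generic divisors miss $W$, since $\dim W<N-2$. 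It also shows they meet $Z_1\setminus W$ transversally in finitely many reduced points. Their number is therefore $\int[Z_1]\,h^i(a+b)^{N-2-i}$, which equals $\delta_i(\Con X_M)=T_i(N,m)$ by Lemma~\ref{lem:condet} and Remark~\ref{rem:profile}. All of these points are isolated solutions. A valid bound must hold for every $M$ of size $m$, in particular this generic one, so $B_i(m)\ge T_i(N,m)$.

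\textbf{Identity $T_i(N,m)=T(N-i,m)$.} This is a formal coefficient extraction. In the rational form \eqref{eq:T-rational} with the extra factor $h^i$, taking the coefficient of $h^{N-1}$ in $h^i\cdot G$ is the same as taking the coefficient of $h^{N-1-i}$ in $G$. Here
\[
G=\frac{(h+a)^m(h+b)^m(a+b)^{(N-i)-2}}{h+a+b}.
\]
Setting $N'=N-i$, the expression is exactly the defining extraction for $T(N',m)$, and the expansion of $1/(h+a+b)$ at $h=\infty$ is unaffected by the multiplication by $h^i$.

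For $N'\ge3$ this recovers $T(N',m)$ as defined. For $N'=2$ (the slot $i=N-2$) I would compute directly from the expansion in Lemma~\ref{lem:T-pos}. The summand indexed by $j$ is $h^j(h+a+b)^{j-1}(ab)^{m-j}$. To contribute to the coefficient of $a^{m-1}b^{m-1}$ it must have total $(a,b)$-degree at least $2m-2$, i.e.\ $2(m-j)+(j-1)\ge 2m-2$, which forces $j\le1$. Only $j=1$ survives, contributing $m$, so $T(2,m)=m=\deg X_M$ as claimed.

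As a geometric cross-check, slicing $X_M$ by a generic $\PP^{N-1-i}$ gives a generic determinantal hypersurface in $N-i$ variables. The classical rule that polar degrees pass to generic hyperplane sections then gives the same identity.

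\textbf{Per-root consequence.} For $i=0$ the floor reads $B_0(m)\ge T(N,m)$. The lower-bound half of Theorem~\ref{thm:asymp}, obtained from the single $i=N-1$ term of Lemma~\ref{lem:T-pos}, gives $T(N,m)^{1/(N-1)}\ge(4e-o(1))\,m/N$ in the regime $N\to\infty$, $m/N\to\infty$. Taking $(N-1)$-st roots, which is monotone, transfers this to $B_0(m)$.

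\textbf{Expected main obstacle.} The delicate step is the transversality for the mixed conditions. One must check that the generic position hyperplanes, together with the contraction divisors, still avoid $W$ and produce no excess contribution along $W$. One must also check that the count on the birational model $Z_1$ equals $\delta_i(\Con X_M)$ when $i>0$. I would handle both by the same dimension count as in Proposition~\ref{prop:witness}: $W$ has dimension at most $N-3$ and $N-2$ generic members are imposed in total. The only remaining point is the base-point-freeness of $|a+b|$ off $W$, which Lemma~\ref{lem:condet} already supplies.
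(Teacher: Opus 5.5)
Your route is the paper's own. You transport Proposition~\ref{prop:witness} to the slot-$i$ system at a generic $M$, obtain the identity by pulling $h^i$ out of \eqref{eq:T-rational}, and take the per-root claim from the lower-bound half of Theorem~\ref{thm:asymp}. Your direct evaluation $T(2,m)=m$ from the expansion in Lemma~\ref{lem:T-pos} is a correct addition; the paper only cites a grid check there.

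One step does not go through as justified, and it is the one you single out as the main obstacle. The paper's proof of Proposition~\ref{prop:witness} uses the same reasoning. Base-point-freeness of the contraction system on $Z_1\setminus W$, together with $\dim W<N-2$, does not imply that generic members miss $W$. Suppose the forms $u^{\mathsf T}A_kv$ ($k=1,\dots,N$) all vanished at some $p\in W$. Then every contraction divisor would pass through $p$, so for $i=0$ the cut would contain $p$ however the members are chosen. Such a base point can absorb part of $\int_{Z_1}h^i(a+b)^{N-2-i}$, and it also breaks the projection-formula identification with $\delta_i(\Con X_M)$ in Lemma~\ref{lem:condet}. The transverse count off $W$ would then no longer be forced up to $T_i(N,m)$.

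What you need is base-point-freeness on all of $Z_1$. This holds for generic $M$ by a dimension count. Fix $([x],[v],[u])$ and make a linear change of the $x$-coordinates so that $x=e_1$; this recombines the $A_k$ without changing ranks. The conditions $M(x)v=0$, $u^{\mathsf T}M(x)=0$, and $u^{\mathsf T}A_kv=0$ for all $k$ then impose exactly $(2m-1)+(N-1)$ independent linear conditions on $(A_1,\dots,A_N)$:
\begin{itemize}
\item $2m-1$ of them fall on $A_1$, with $u^{\mathsf T}A_1v=0$ already implied;
\item one falls on each of $A_2,\dots,A_N$.
\end{itemize}
Since $([x],[v],[u])$ ranges over a space of dimension $N+2m-3$, the incidence has dimension $Nm^2-1$. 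It therefore cannot dominate the $Nm^2$-dimensional space of tuples $(A_1,\dots,A_N)$.

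Hence for generic $M$ the map $\psi$ is a morphism on all of $Z_1$, and both $|h|$ and the contraction system are base-point-free there. Generic members then cut $W$ in dimension $\dim W-(N-2)<0$, and the rest of your argument stands as written.
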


\begin{proof}
Proposition~\ref{prop:witness} applies verbatim to the slot-$i$ system: for generic $M$ the intersection of $Z_1$ with $i$ generic $h$-divisors and $N-2-i$ generic contraction divisors is transverse, reduced, contained in the smooth corank-one locus, of cardinality $\int_{Z_1}h^i(a+b)^{N-2-i}=T_i(N,m)$. A bound valid for all $M$ must hold at the generic $M$. The identity $T_i(N,m)=T(N-i,m)$ is immediate from \eqref{eq:T-rational} and was verified by exact arithmetic on the grid $4\le N\le6$, $2\le m\le5$, all $i$; the asymptotic statement is Theorem~\ref{thm:asymp}.
\end{proof}

\begin{proposition}[Floor, bounded-ratio regime]\label{prop:floor}
For all $N\ge3$ and $m\ge2N$,
\[
T(N,m)\;\ge\;\Bigl(\frac{2m}{N}\Bigr)^{N-1}.
\]
\end{proposition}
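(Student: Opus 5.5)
The plan is to take a single nonnegative term of the positive form of Lemma~\ref{lem:T-pos} and bound it below by elementary binomial estimates. No alternating sum is involved, and the hypothesis $m\ge 2N$ enters only to guarantee $m\ge N-1$, so that the $i=N-1$ term is present.

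\textbf{Step 1 (one term).} Since $m\ge 2N>N-1$, Lemma~\ref{lem:T-pos} gives
\[
T(N,m)\;\ge\;\binom{m}{N-1}\binom{2N-4}{N-2}.
\]

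\textbf{Step 2 (the $m$-dependent factor).} I will use the standard bound $\binom{m}{k}\ge (m/k)^k$. It holds because $\binom mk=\prod_{j=0}^{k-1}\frac{m-j}{k-j}$ and each factor satisfies $\frac{m-j}{k-j}\ge\frac mk$ when $m\ge k$. With $k=N-1$ this gives
\[
\binom{m}{N-1}\;\ge\;\Bigl(\frac{m}{N-1}\Bigr)^{N-1}\;=\;\Bigl(\frac{m}{N}\Bigr)^{N-1}\Bigl(\frac{N}{N-1}\Bigr)^{N-1}.
\]
It therefore suffices to prove the purely combinatorial inequality
\[
\binom{2N-4}{N-2}\;\ge\;2^{N-1}\Bigl(1-\frac1N\Bigr)^{N-1}\qquad (N\ge3).
\]

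\textbf{Step 3 (central binomial).} Since $(1-1/N)^{N-1}\le 1$, it is enough to show $\binom{2K}{K}\ge 2^{K+1}$ with $K=N-2$. For large $K$ I will use $\binom{2K}{K}\ge 4^K/(2K+1)$, which was already invoked in the proof of Theorem~\ref{thm:asymp}. This gives $\binom{2K}{K}\ge 2^{K+1}$ as soon as $2^{K-1}\ge 2K+1$, which holds for $K\ge5$, i.e.\ $N\ge7$.

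The remaining cases $N=3,4,5,6$ I will check directly against the sharper target $2^{N-1}(1-1/N)^{N-1}$, without dropping the factor $(1-1/N)^{N-1}$:
\begin{itemize}
\item $N=3$: $\binom21=2$ against $4\cdot\frac49\approx1.78$;
\item $N=4$: $\binom42=6$ against $8\cdot\frac{27}{64}=3.375$;
\item $N=5$: $\binom63=20$ against $16\cdot(4/5)^4\approx6.55$;
\item $N=6$: $\binom84=70$ against $32\cdot(5/6)^5\approx12.9$.
\end{itemize}
All four hold. Combining Steps 1–3 yields $T(N,m)\ge(2m/N)^{N-1}$.

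\textbf{Main obstacle.} The only delicate point is Step 3. The cruder bound $\binom{2K}{K}\ge 2^K$ falls short by the factor $2(1-1/N)^{N-1}\ge 2/e$, so the ratio to be beaten exceeds $1$. One must therefore use the $4^K/(2K+1)$ growth for large $K$ and keep the $(1-1/N)^{N-1}$ factor in the small cases.
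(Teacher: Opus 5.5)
Your proof is correct and follows essentially the paper's route: keep the single term $i=N-1$ of Lemma~\ref{lem:T-pos}, apply $\binom{m}{N-1}\ge(m/(N-1))^{N-1}$, and reduce to $\binom{2N-4}{N-2}\ge 2^{N-1}(1-1/N)^{N-1}$ using $\binom{2K}{K}\ge 4^K/(2K+1)$. The differences are only in how the small cases are handled. The paper treats $N=3,4$ through the closed forms $m(m-1)$ and $m(m-1)^2$, and covers all $N\ge5$ at once via $(1-1/N)^{N-1}\le e^{-4/5}<\tfrac{9}{20}$ plus an induction; you instead drop that factor for $N\ge7$ and check $N=3,\dots,6$ directly. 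Like the paper's argument, yours actually only needs $m\ge N-1$.
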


\begin{proof}
For $N=3$ and $N=4$ the closed forms of Proposition~\ref{prop:anchors} give the statement directly for all $m\ge2$: $m(m-1)\ge(2m/3)^2$ is equivalent to $5m\ge9$, and $m(m-1)^2\ge(m/2)^3$ is equivalent to $2\sqrt2\,(m-1)\ge m$. For $N\ge5$, Lemma~\ref{lem:T-pos} together with $\binom mk\ge(m/k)^k$ and $\binom{2K}K\ge4^K/(2K+1)$ gives
\[
T(N,m)\;\ge\;\binom{m}{N-1}\binom{2N-4}{N-2}\;\ge\;\Bigl(\frac{m}{N-1}\Bigr)^{N-1}\frac{4^{N-2}}{2N-3},
\]
so it suffices to prove $4^{N-2}/(2N-3)\ge2^{N-1}\bigl(\tfrac{N-1}N\bigr)^{N-1}$, i.e.
\[
2^{N-3}\;\ge\;(2N-3)\Bigl(1-\frac1N\Bigr)^{N-1}.
\]
Since $(1-1/N)^{N-1}\le e^{-(N-1)/N}\le e^{-4/5}<\tfrac9{20}$ for $N\ge5$, it is enough that $2^{N-3}\ge\tfrac9{20}(2N-3)$ for $N\ge5$, which holds at $N=5$ ($4>3.15$) and is preserved under $N\mapsto N+1$, the left side doubling while the right side increases by $\tfrac9{10}$.
\end{proof}

\begin{remark}[Numerical corroboration; the sharp constant]\label{rem:floor-numerics}
The inequality was additionally machine-checked by exact integer arithmetic on the grid $3\le N\le12$, $2N\le m\le6N$ (the regime of the hypothesis): the minimum of $T(N,m)^{1/(N-1)}\,N/m$ over this grid is $2.7386\ldots$, attained at $(N,m)=(3,6)$, $T=30$, comfortably above $2$, and at each fixed $N$ the per-root constant is increasing in $m$ on the grid. Over the larger range $N\le m\le6N$ the minimum is $\sqrt6=2.4495\ldots$ at $(N,m)=(3,3)$ --- still above $2$, though the proposition is calibrated to $m\ge2N$, which is where it is used. The sharp per-root constant at bounded ratio is discussed, informally, in Remark~\ref{rem:theta}; nothing depends on it.
\end{remark}

\subsection{The multiplicity is invisible}

\begin{lemma}[Tautological part of the homogenization]\label{lem:taut}
Let $f_0$ be nonzero homogeneous of degree $d$ in $x_1,\dots,x_N$, let $a\ge1$, and set $G=x_0^{\,a}f_0$ on $\PP^N$, $H=\{x_0=0\}$, $X_H=V(f_0)\cap H$, and $C=V(f_0)\subset\PP^N$ (the cone over $X_H$). Then, as constructible functions,
\[
\nu_G\;=\;\nu^{\,\mathrm{prop}}\;+\;a\,\bigl(\mathbb 1_H-\mathbb 1_{X_H}\bigr),
\qquad
\nu^{\,\mathrm{prop}}:=\nu_{f_0}^{\,C}\cdot\mathbb 1_{C\setminus X_H},
\]
where $\nu_{f_0}^{\,C}$ is the Milnor-fibre function of $f_0$ on $C$. In particular: $(\mathrm a)$ the $m$-dependence of $\nu_{x_0^{m-d}f_0}$ is confined to the single coefficient $a=m-d$ multiplying a cycle determined by the pair $(H,X_H)$ alone, identical for every size-$m$ representation; $(\mathrm b)$ $\nu_G$ vanishes identically on $X_H$.
\end{lemma}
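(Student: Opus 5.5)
The identity is between constructible functions, so we would prove it pointwise. We split $\PP^N$ into four disjoint constructible pieces: $V(G)^c$; $C\setminus X_H$ (where $x_0\ne0$ and $f_0=0$); $H\setminus X_H$ (where $x_0=0$ and $f_0\ne0$); and $X_H$. On each piece we compute $\nu_G(p)=\chi(F_{G,p})$ from a local equation of $G$ at $p$, and compare it with the right-hand side. Consequences (a) and (b) are then read off directly. (b) is the value on the fourth piece. (a) holds because $\nu^{\mathrm{prop}}$ depends only on $f_0$ and not on $a$, while $\mathbb 1_H-\mathbb 1_{X_H}$ depends only on the pair $(H,X_H)$. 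Neither term refers to a representation $M$.

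\textbf{The first three pieces.}
\begin{itemize}
\item Off $V(G)$ both sides vanish: $\nu_G$ is supported on $V(G)$ by the convention of Section~\ref{sec:nearby}, and every term on the right is supported in $C\cup H=V(G)$.
\item At $p\in C\setminus X_H$ we have $x_0(p)\ne0$, so $x_0^a$ is a unit near $p$. Local equations differing by a unit have the same Milnor fibre (as noted in Section~\ref{sec:nearby}), so $\nu_G(p)=\nu^{C}_{f_0}(p)$. This is the value of $\nu^{\mathrm{prop}}$, and the correction term vanishes off $H$.
\item At $p\in H\setminus X_H$, $f_0$ is a unit near $p$, so the local equation is $x_0^a$ up to a unit. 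Its Milnor fibre is $a$ disjoint contractible sheets $\{x_0=\zeta\}$ with $\zeta^a=\varepsilon$, so $\nu_G(p)=a$. This matches $a(\mathbb 1_H-\mathbb 1_{X_H})(p)=a$, since $\nu^{\mathrm{prop}}(p)=0$.
\end{itemize}

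\textbf{The piece $X_H$.} Here the right-hand side is $0+a(1-1)=0$, so the content of the lemma is that $\nu_G\equiv0$ on $X_H$. At $p\in X_H$ choose the affine chart $x_j=1$ for some $j\ge1$ with $p_j\ne0$. In this chart $x_0$ is one coordinate and $f_0$ is a function $h(y)$ of the remaining coordinates $y$, since $f_0$ does not involve $x_0$. So locally $G=x_0^a\,h(y)$ in separated variables, with $h(p)=0$. Compute the Milnor fibre in a small polydisc $\{|x_0|<\eta\}\times B_\rho$, which is legitimate by Lê's polydisc version of the Milnor fibration. The projection $(x_0,y)\mapsto x_0$ maps
\[
\{x_0^a h(y)=\varepsilon\}
\]
onto a punctured disc, or an annulus after shrinking $\varepsilon$ relative to $\eta,\rho$. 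Its fibre over $x_0$ is $\{h=\varepsilon x_0^{-a}\}\cap B_\rho$, which for $|\varepsilon x_0^{-a}|$ small is a Milnor fibre of $h$ at $p$, and the family is locally trivial. Multiplicativity of $\chi$ in fibrations, together with $\chi(\text{annulus})=0$, then gives $\nu_G(p)=0$.

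\textbf{Main obstacle.} The delicate step is the last one: justifying that the polydisc Milnor fibre computes $\chi(F_{G,p})$, and that the family $\{h=s\}\cap B_\rho$ over the relevant range of $s=\varepsilon x_0^{-a}$ is a locally trivial fibration with the Milnor fibre of $h$ as fibre. When $|x_0|$ is near $\eta$, $|s|$ is small and the fibres are Milnor fibres. Near the inner radius $|s|$ is larger, but there we intersect with the annulus constraint $|x_0|<\eta$ only. I would first handle this by restricting to $\varepsilon\ll\eta^a\delta$, where $\delta$ is Milnor's radius for $h$, so that every fibre over the annulus $\varepsilon/\delta<|x_0|^a<\eta^a$ has $|s|<\delta$. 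The part of the fibre with $|x_0|^a\le\varepsilon/\delta$ must then be shown empty or collared. Failing that, the fallback is the Thom--Sebastiani/Sakamoto join theorem for products $g(x)h(y)$: the Milnor fibre of $x_0^a h$ is homotopy equivalent to a space fibred over $S^1$, which forces $\chi=0$.
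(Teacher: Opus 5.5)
Your four-piece decomposition is the paper's, and so are your values off $V(G)$, on $C\setminus X_H$ and on $H\setminus X_H$, and your derivation of (a) and (b). The gap is at $X_H$, where you suspected it, and the patch you propose does not close it.

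\textbf{Why the $x_0$-fibration fails.} Work in the product neighbourhood $\{|x_0|<\eta\}\times B_\rho$. The fibre of $F=\{x_0^ah(y)=\varepsilon\}$ over $x_0$ is $\{h=\varepsilon x_0^{-a}\}\cap B_\rho$, and this is a Milnor fibre of $h$ only when $|\varepsilon x_0^{-a}|<\delta$. The residual piece $|x_0|^a\le|\varepsilon|/\delta$ is never empty. It is $a$ sheets over $\{y\in B_\rho:\ |h(y)|\ge\delta\}$, a set independent of $\varepsilon$. That set is nonempty because Milnor's $\delta$ must lie below $\max_{\bar B_\rho}|h|$: at a maximum of $|h|$ on $\bar B_\rho$ the level set of $h$ is tangent to $S_\rho$. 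On that piece the fibres $\{h=s\}\cap B_\rho$ with $|s|\ge\delta$ can meet $S_\rho$ non-transversally and change topology as $s$ varies. So Milnor's theory gives neither a collar nor local triviality there, and your claim that the family over the $x_0$-plane is locally trivial is unsupported.

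The fallback does not avoid this. Join theorems of Sebastiani--Thom type concern sums $g(x)+h(y)$. For a product, what is available is the map $(g,h)$ onto $\{uv=\varepsilon\}$, which is a fibration only on a tube $\{|g|<\delta,\ |h|<\delta\}$. That is your fibration again, on a neighbourhood you would still have to compare with the one defining $F_{G,p}$.

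\textbf{The repair: project to $y$, as the paper does.} For $y\in B_\rho$ the equation $x_0^a=\varepsilon/h(y)$ has exactly $a$ roots. They satisfy $|x_0|<\eta$ precisely when $|h(y)|>c:=|\varepsilon|/\eta^a$. So $(x_0,y)\mapsto y$ is an unramified $a$-sheeted cover of $W_c:=\{y\in B_\rho:\ |h(y)|>c\}$, and $\chi(F)=a\,\chi(W_c)$, with no fibration needed in the $x_0$-direction.

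The region $|h|\ge\delta$ that blocked you now lies inside $W_c$ and costs nothing:
\begin{itemize}
\item for $c<\delta$, the open sets $W_c$ and $T:=B_\rho\cap\{|h|<\delta\}$ cover the contractible $B_\rho$;
\item $T$ retracts onto the contractible $V(h)\cap B_\rho$, so $\chi(T)=1$;
\item $W_c\cap T$ is Milnor's tube fibration over the annulus $c<|s|<\delta$, so $\chi(W_c\cap T)=0$.
\end{itemize}
Mayer--Vietoris then gives $1=\chi(W_c)+1-0$, so $\chi(W_c)=0$ and $\nu_G(p)=0$. The paper writes the image as $B_w\setminus V(g)$ and uses $\chi(B_w)-\chi(B_w\cap V(g))=1-1$; this Mayer--Vietoris count is the precise form of that step.

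\textbf{Keeping your projection.} If you want to keep the $x_0$-projection, replace $B_\rho$ by the tube $T$. The fibre over $x_0$ is then empty unless $|\varepsilon x_0^{-a}|<\delta$. $F$ becomes the pullback of the tube fibration along $x_0\mapsto\varepsilon x_0^{-a}$ over the annulus $(|\varepsilon|/\delta)^{1/a}<|x_0|<\eta$. It is therefore locally trivial with fibre $F_h$, and $\chi(F)=0$. The cost is that you must justify that this non-product neighbourhood computes $F_{G,p}$. The $y$-projection needs nothing beyond the product neighbourhood that you and the paper already take for granted.
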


\begin{proof}
At $x\notin H$, $x_0$ is a unit, so the local Milnor data of $G$ is that of $f_0$; this gives $\nu_G=\nu^{\mathrm{prop}}$ off $H$. At $x\in H$ choose local coordinates $(z,w)$ with $z=x_0$; since $f_0$ does not involve $x_0$, the local model is $G=z^a g(w)$ with $g$ the local equation of $f_0$ (a unit if $x\notin X_H$). The Milnor fibre $F=\{z^ag(w)=\varepsilon\}\cap(B_z\times B_w)$ projects to $\{w\in B_w: g(w)\ne0\}$ with exactly $a$ points in each fibre, an unramified cover, so
\[
\chi(F)\;=\;a\cdot\chi\bigl(B_w\setminus V(g)\bigr)\;=\;a\bigl(\chi(B_w)-\chi(B_w\cap V(g))\bigr)\;=\;a(1-1)\;=\;0
\]
when $x\in X_H$ (a small ball and its cone-contractible intersection with $V(g)$ both have $\chi=1$), and $\chi(F)=a\cdot\chi(B_w)=a$ when $x\in H\setminus X_H$. This is the displayed decomposition. Both local evaluations are confirmed on the explicit models of Appendix~\ref{app:nonreduced} ($a=2$, $g=w$: value $0$ at the intersection point, value $2$ along the doubled line).
\end{proof}

\begin{remark}[Cheap achievers neutralize the tautological part]\label{rem:taut}
The tautological summand $a(\mathbb 1_H-\mathbb 1_{X_H})$ is attained, at \emph{every} size $m\ge d$, by representations of minimal complexity: $x_0^{\,m-d}x_1^{\,d}=\det\bigl(\mathrm{diag}(x_0,\dots,x_0,x_1,\dots,x_1)\bigr)$ realizes the coefficient $a=m-d$ with a diagonal $M$ of size exactly $m$ (and the $(H,X_H)$-part of the cycle is shared by all degree-$d$ inputs with the given behaviour at infinity, up to the ceiling-bounded $\mathbb 1_{X_H}$-data). Validity of the engine inequality at these representations forces $B(m)$ to dominate the tautological contribution for every $m\ge d$; consequently no comparison $I_m(f_0)>B(m)$ can ever be witnessed by the tautological part, and the derivable bound is controlled by the $f_0$-proper part $\nu^{\mathrm{prop}}$, whose components obey the ceiling of Section~\ref{sec:ceiling} at degree $d$ (the cone $C\subset\PP^N$ is reduced of degree $d$, so Theorem~\ref{thm:ceiling} and Corollary~\ref{cor:slotwise} apply with $N\mapsto N+1$). This discharges the promise of Remark~\ref{rem:homog}: the multiplicity $x_0^{m-d}$ carries no usable information.
\end{remark}

\subsection{The theorem}

\begin{theorem}[No-go]\label{thm:nogo}
There is an absolute constant $K$ with the following property for all $N\ge3$ and all $d\ge2$. Let $f_0$ be nonzero homogeneous of degree $d$ in $N$ variables, and let $I$ be any characteristic-cycle-type invariant of weight $w=e^{O(N)}$ (Definition~\ref{def:invariant}). Then every profile-matched, slot-resolved, corank-one valid reading (Definition~\ref{def:reading}) satisfies
\[
\mu_{I,B}(f_0)\;\le\;K\,e^{O(1)}\,d\,N,
\]
where: the $e^{O(1)}$ is absorbed into $K$ for invariants in the $\mathbb 1_X$- and $\Eu_X$-scopes and for $\nu$-scope invariants of reduced inputs with isolated singularities; in the remaining $\nu$-scope cases it carries the conditional constant of Remark~\ref{rem:massey}. Proposition~\ref{prop:floor} is unconditional, so no conditionality enters through the floor. For corank $c\ge2$ the same conclusion holds modulo Proposition~\ref{prop:monotone}, itself conditional on Hypothesis~\ref{hyp:S}.
\end{theorem}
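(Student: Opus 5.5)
The plan is to compare an upper bound on the invariant with a lower bound on every valid count bound. The derivable bound is $\mu_{I,B}(f_0)=\min\{m: B(m)\ge I_m(f_0)\}$, so it suffices to exhibit one size $m^\ast\le K\,e^{O(1)}dN$ with $B(m^\ast)\ge I_{m^\ast}(f_0)$.

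\textbf{Step 1 (upper side).} Fix a size $m$. By Remark~\ref{rem:homog}, the invariant is evaluated on $G=x_0^{m-d}f_0$ in $N+1$ variables. By Lemma~\ref{lem:taut} and Remark~\ref{rem:taut}, the tautological summand $a(\mathbb 1_H-\mathbb 1_{X_H})$ is attained by the diagonal representation at every size, so validity of $B$ already forces $B(m)$ to dominate it. Only the proper part remains; this part is unchanged for the $\mathbb 1_X$-type blocks. Write $I=|\sum_i\lambda_i(\text{slot-}i\text{ data})|$ with $\sum|\lambda_i|\le w$. Corollary~\ref{cor:slotwise}, applied to the reduced cone of degree $d$ in $\PP^N$, bounds each slot-$i$ coordinate by $8(d-1)^{N-i}+8(N-i+1)$. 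In the $\nu$-scope we use Lemma~\ref{lem:nu-isolated} instead, or Remark~\ref{rem:massey} in the general case. Hence
\[
I_m(f_0)\;\le\;\text{(tautological part)}+\sum_i|\lambda_i|\,C\,e^{O(N-i)}(d-1)^{N-i}+O(wN).
\]

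\textbf{Step 2 (floor side).} A profile-matched reading has $B(m)=\sum_i|\lambda_i|B_i(m)$. By Lemma~\ref{lem:floor1}, $B_i(m)\ge T(N-i,m)$. By Proposition~\ref{prop:floor}, applied with $N-i$ in place of $N$ and valid for $m\ge 2(N-i)$, this gives $T(N-i,m)\ge(2m/(N-i))^{N-i-1}$.

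The key point is that both sides decay in the slot index at matching rates. Take $m^\ast=K dN$ with $K$ large. Then for each $i$ with $N-i\ge3$,
\[
(2m^\ast/(N-i))^{N-i-1}\ge (2Kd)^{N-i-1},
\]
and this dominates $C e^{O(N-i)}(d-1)^{N-i}$ once $K$ beats the constant in the $e^{O(1)}$. The one leftover factor of $(d-1)$ is absorbed because $2Kd$ raised to the power $N-i-1\ge2$ exceeds $(d-1)^{N-i}$ times the constants. The additive $O(N)$ terms are absorbed likewise. I will also need the slots where the slot data is nonzero but the floor is small. For $N-i=2$, $T(2,m)=m$ against a slot bound of $O(d+N)$, which is fine at $m^\ast$. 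The weight $w=e^{O(N)}$ does not enter the comparison, because the same coefficients $|\lambda_i|$ multiply both sides slotwise. Summing over $i$ gives $B(m^\ast)\ge I_{m^\ast}(f_0)$, hence $\mu_{I,B}(f_0)\le m^\ast$.

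\textbf{Main obstacle.} The delicate step is the handling of the homogenization multiplicity in the $\nu$-scope, i.e. justifying that the tautological contribution can be split off additively inside $|\Lambda(\cdot)|$. The absolute value and signed coefficients could mix the tautological and proper parts. I would first try the triangle inequality: bound $|\Lambda(P)|$ by $|\Lambda(\text{taut})|+|\Lambda(\text{prop})|$. The first term is dominated by the $B$ forced through the diagonal achievers at the same size. Strictly, this forces $B\ge$ that value, so I would take $2B(m)$, or double $K$. For corank $c\ge2$, Step 2 is replaced by Proposition~\ref{prop:monotone}, which gives the same per-root floor $T(N,m-c+1)$ under Hypothesis~\ref{hyp:S}. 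The rest of the argument is unchanged.
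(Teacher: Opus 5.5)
Your architecture is the paper's: $m^\ast=\lceil KdN\rceil$, the floor from Lemma~\ref{lem:floor1} and Proposition~\ref{prop:floor}, the ceiling from Corollary~\ref{cor:slotwise} (with Lemma~\ref{lem:nu-isolated} or Remark~\ref{rem:massey} in the $\nu$-scope), and a slotwise comparison in which the weights $|\lambda_i|$ cancel. As written, however, the comparison does not close, because you mix two ambient dimensions.

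\textbf{The dimension mismatch.} You evaluate the profile on the cone in $\PP^N$, where the slot-$i$ ceiling is $8(d-1)^{N-i}+8(N-i+1)$. But you take the floor from an $N$-variable reading, $T(N-i,m)\ge(2m/(N-i))^{N-i-1}$. The exponents differ by one, and the leftover factor $(d-1)$ is \emph{not} absorbed uniformly in $d$.
\begin{itemize}
\item At $N-i=3$ the floor is exactly $T(3,m^\ast)=m^\ast(m^\ast-1)$, against a ceiling of $8(d-1)^3+32$. This needs $m^\ast\gtrsim d^{3/2}$, which $KdN$ fails to provide once $d\gg K^2N^2$.
\item At $N-i=2$ your own ceiling is $8(d-1)^2+24$, not $O(d+N)$. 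Then $T(2,m^\ast)=KdN$ loses once $d\gtrsim KN$.
\item The degree slot $i=N-1$ of a $\PP^N$-profile has no partner among the slots $0\le i\le N-2$ of an $N$-variable reading.
\end{itemize}
Since the theorem is uniform in $d\ge2$, no absolute $K$ rescues this.

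The repair is to keep both sides in one dimension. The paper runs the comparison in the homogeneous model, with the profile of $f_0$ on $\PP^{N-1}$ and codegree $j=N-1-i$ on \emph{both} sides. It compares $(2Kd)^j$ against $c_\nu\cdot 24j(d-1)^j$ for $j\ge2$, and $KdN$ against $c_\nu(8(d-1)+16)$ for $j=1$. It then treats affine-linear representations by the wholesale shift $N\mapsto N+1$, $d\mapsto d+1$. Equivalently, in your setup: the homogenized matrix $x_0A_0+\sum_ix_iA_i$ is a reading in $N+1$ variables. So Lemma~\ref{lem:floor1} gives $B_i(m)\ge T(N+1-i,m)\ge\bigl(2m/(N+1-i)\bigr)^{N-i}$, which matches the cone's exponent $N-i$. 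Note also that the $\mathbb 1_X$- and $\Eu_X$-blocks of $x_0^{m-d}f_0$ live on $H\cup C$, of degree $d+1$ rather than $d$. This is harmless once the exponents match.

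\textbf{The tautological part.} Your worry here is legitimate. The paper's proof does not confront the mixing inside $|\Lambda(\cdot)|$ either; it simply removes the tautological part by citing Remark~\ref{rem:taut}. But your patch does not work.
\begin{itemize}
\item You may not replace $B$ by $2B$: $\mu_{I,B}$ is defined with the given valid bound.
\item Doubling $K$ does not help. From $B\ge|\Lambda(\text{taut})|$ and $B\ge 2|\Lambda(\text{prop})|$ you cannot conclude $B\ge|\Lambda(\text{taut})|+|\Lambda(\text{prop})|$. The tautological part also grows with $m$ in any case.
\end{itemize}

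What works is to stop routing the tautological part through $B$ and bound it directly.
\begin{itemize}
\item By Lemma~\ref{lem:taut}, it is $(m-d)$ times the data of $\mathbb 1_H-\mathbb 1_{X_H}$.
\item $\mathbb 1_H$ contributes only to the degree slot.
\item The slot entries of $\mathbb 1_{X_H}$ at codegree $j$ are $O\bigl(j(d-1)^{j-1}\bigr)$. This follows from Theorem~\ref{thm:transform} and Proposition~\ref{prop:envelope} applied to the degree-$d$ hypersurface $X_H\subset H$, since generic $k$-planes of $\PP^N$ meet $H$ in generic $(k-1)$-planes.
\end{itemize}
So the tautological contribution is linear in $m$, while the floor at codegree $j\ge2$ grows like $(m/j)^j$. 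Hence at $m^\ast=KdN$, with $K$ enlarged, each $B_i(m^\ast)$ dominates $|\text{taut}_i|+|\text{prop}_i|$ directly. At the degree slot the multiplicity-weighted total is exactly $d+(m-d)=m=T(2,m)$. Summing with the weights $|\lambda_i|$ then gives $B(m^\ast)\ge I_{m^\ast}(f_0)$ with no factor of $2$.
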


\begin{proof}
Write $j=N-1-i$ for the slot-$i$ codegree and put $m^\ast=\lceil KdN\rceil$ with $K$ to be chosen absolute. We show $B(m^\ast)\ge I_{m^\ast}(f_0)$, which gives $\mu\le m^\ast$.

\emph{Floor at $m^\ast$, slot by slot.} By Lemma~\ref{lem:floor1}, $B_i(m^\ast)\ge T(N-i,m^\ast)$. For $j\ge2$ (i.e. $N-i\ge3$): since $m^\ast\ge 2N\ge2(N-i)$, Proposition~\ref{prop:floor} gives $T(N-i,m^\ast)\ge(2m^\ast/(N-i))^{\,j}\ge(2Kd)^{\,j}$. (For slots with $N-i\to\infty$ and $m^\ast/(N-i)\to\infty$, Theorem~\ref{thm:asymp} would improve the constant $2$ to $4e-o(1)$; this is not needed, as Proposition~\ref{prop:floor} is unconditional.) For $j=1$: $T(2,m^\ast)=m^\ast=KdN$ directly.

\emph{Ceiling at degree $d$, slot by slot.} By Corollary~\ref{cor:slotwise}, Lemma~\ref{lem:polarbez}, Lemma~\ref{lem:nu-isolated} and Remark~\ref{rem:massey}, every slot-$i$ component of the profile of $f_0$ --- after removing the tautological part by Lemma~\ref{lem:taut} and Remark~\ref{rem:taut}, which is dominated by $B$ at every $m\ge d$ on its own --- is at most
\[
c_\nu\bigl(8(d-1)^{\,j}+8j+8\bigr)\;\le\;c_\nu\cdot24\,j\,(d-1)^{\,j},
\]
with $c_\nu=e^{O(1)}$ equal to an absolute constant except in the general $\nu$-scope case of Remark~\ref{rem:massey}. (The crude absorption $8x+8j+8\le24jx$ for $x\ge1$, $j\ge1$ suffices.)

\emph{Comparison.} For $j\ge2$: $(2Kd)^{\,j}\ge c_\nu\,24\,j\,(d-1)^{\,j}$ holds for all $j\ge2$, $d\ge2$ once $(2K)^{\,2}\ge48\,c_\nu$ and $2K\ge2\sqrt{c_\nu\cdot24}$, since $(2K)^j/j$ is increasing in $j$ for $2K\ge e$; choose $K$ accordingly. For $j=1$: $KdN\ge c_\nu(8(d-1)+16)$ holds for $K\ge8c_\nu$ since $N\ge3$. Hence \emph{termwise}, for every slot, $B_i(m^\ast)\ge(\text{slot-}i\text{ ceiling})$, and for a profile-matched reading of $I=|\sum_i\lambda_i(\cdot)|$,
\[
B(m^\ast)\;=\;\sum_i|\lambda_i|\,B_i(m^\ast)\;\ge\;\sum_i|\lambda_i|\,(\text{slot-}i\text{ ceiling})\;\ge\;I_{m^\ast}(f_0),
\]
with the weights cancelling on both sides --- the reason the conclusion is uniform in $w$. The affine-linear case costs $N\mapsto N+1$, $d\mapsto d+1$ (Remark~\ref{rem:homog}, Lemma~\ref{lem:taut}) and is absorbed by $K$. For corank $c\ge2$, replace Lemma~\ref{lem:floor1} by Proposition~\ref{prop:monotone}.
\end{proof}

\begin{corollary}[The diagonal; the reach is pinned]\label{cor:diagonal}
Let $d=N\to\infty$ (the power-sum diagonal: the target $\sum_{i=1}^Nx_i^N$ of \cite{She26a}). Every bound on $\dc$ derivable as in Theorem~\ref{thm:nogo} is $O(N^2)=O(d^2)$. Moreover, for invariants of weight $e^{o(N)}$ in the $\mathbb 1_X$- or $\Eu_X$-scope, the per-root comparison sharpens to
\[
\mu_{I,B}(f_0)\;\le\;(1+o(1))\,\frac{N(N-1)}{4e}\;=\;(1+o(1))\,\frac{N^2}{4e},
\]
while the companion lower bound \eqref{eq:positive} achieves $(1-o(1))\,N^2/(4e)$ on $\sum_ix_i^N$. The exact reach of the engine on the diagonal is therefore
\[
\frac{N^2}{4e}\,\bigl(1\pm o(1)\bigr)
\]
in these scopes, and $\Theta(N^2)$ with constant in $[\tfrac1{4e},e^{O(1)}]$ in full generality: the smooth Fermat hypersurface is extremal for the entire method, exactly at the per-root level.
\end{corollary}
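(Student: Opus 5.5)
The plan is to rerun the proof of Theorem~\ref{thm:nogo} with $d=N$, but replace the crude constants in the floor and ceiling by the sharp per-root constants available on the diagonal.

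For the $O(N^2)$ statement nothing new is needed. Theorem~\ref{thm:nogo} already gives $\mu_{I,B}(f_0)\le K e^{O(1)} dN$, and at $d=N$ this is $O(N^2)=O(d^2)$.

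For the sharpened bound, fix $\varepsilon>0$ and put $m^\ast=\lceil(1+\varepsilon)N(N-1)/(4e)\rceil$. We then aim to show $B(m^\ast)\ge I_{m^\ast}(f_0)$ for all large $N$.

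The floor is handled slot by slot through Lemma~\ref{lem:floor1}, which gives $B_i(m^\ast)\ge T(N-i,m^\ast)$. The top slot $i=0$ carries codegree $j=N-1$. There $m^\ast/N\to\infty$, so Theorem~\ref{thm:asymp} gives
\[
T(N,m^\ast)^{1/(N-1)}\ge(4e-o(1))\,m^\ast/N=(1+\varepsilon-o(1))(N-1)\ge(1+\varepsilon/2)(d-1).
\]
For this slot the ceiling of Theorem~\ref{thm:ceiling} is $8(d-1)^{N-1}+8N$. Its $(N-1)$-st root is $(d-1)(1+o(1))$, because the factors $8$ and the additive $8N$ disappear under the root. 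Multiplying by the weight $e^{o(N)}$ still costs only $1+o(1)$ per root. Hence the floor strictly dominates the ceiling for large $N$.

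Lower slots have codegree $j=N-1-i<N-1$ and must be handled uniformly in $j$. For $j$ large enough that $m^\ast/(N-i)\to\infty$ and $N-i\to\infty$, Theorem~\ref{thm:asymp} applied at $N-i$ gives a per-root floor of $(4e-o(1))\,m^\ast/(N-i)\ge(4e-o(1))\,m^\ast/N$. The ceiling at codegree $j$ is $8(d-1)^j+8j+8$ by Corollary~\ref{cor:slotwise}, and the same comparison goes through. For bounded $j$, Proposition~\ref{prop:floor} gives $T\ge(2m^\ast/(N-i))^j$, which is of order $N^j$ in $m^\ast/(N-i)$ and far exceeds $O(N^j)$ ceilings once one uses $m^\ast\asymp N^2$ rather than $N$. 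Intermediate $j$ are treated by the same dichotomy.

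Summing with the profile-matched weights $|\lambda_i|$, exactly as in the proof of Theorem~\ref{thm:nogo}, gives $\mu_{I,B}\le m^\ast$. Since $\varepsilon>0$ was arbitrary, the bound $(1+o(1))N(N-1)/(4e)$ follows. Combining it with \eqref{eq:positive}, which gives the matching lower bound $(1-o(1))N^2/(4e)$, pins the reach at $\tfrac{N^2}{4e}(1\pm o(1))$. The general $\Theta(N^2)$ statement with constant in $[\tfrac1{4e},e^{O(1)}]$ is this combination together with Theorem~\ref{thm:nogo}.

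The main obstacle is the uniformity of the $o(1)$ terms across slots, since the weight mass and the additive $8j+8$ must be absorbed simultaneously for every $j$. Theorem~\ref{thm:asymp} asserts uniformity only as $N\to\infty$ and $m/N\to\infty$ jointly. I would first try to handle mid-range $j$ with the explicit lower bound in its proof, $T\ge\binom{m}{j}\binom{2j-2}{j-1}$, which holds for every $j$. Its error factors are explicit and depend only on $j$ and on $m^\ast/j\ge N/(4e)$, so they can be compared to $(d-1)^j e^{o(N)}$ directly. The smallness of the weight, $e^{o(N)}$, is essential here: with weight $e^{O(N)}$ the constant $1/(4e)$ would degrade to $e^{O(1)}$.
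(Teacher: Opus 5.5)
Your skeleton matches the paper's: the same $m^\ast$, the floor $B_i(m^\ast)\ge T(N-i,m^\ast)$ from Lemma~\ref{lem:floor1} compared slot by slot with the ceiling $24j(d-1)^j$, Proposition~\ref{prop:floor} for low slots, and Theorem~\ref{thm:asymp} for high slots. But the one step that needs an idea is left open. You use Proposition~\ref{prop:floor} only for \emph{bounded} $j$ and Theorem~\ref{thm:asymp} only ``once $N-i\to\infty$''. These two regimes do not visibly cover all slots uniformly, and ``the same dichotomy'' for intermediate $j$ is not an argument.

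The paper's fix is to cut at a constant fraction of $N$. If $N'=N-i\le N/(48e)$ (note $m^\ast\ge 2N\ge 2N'$), then $2m^\ast/N'\ge 96e\,m^\ast/N\ge 24(1+\varepsilon)(d-1)$. So Proposition~\ref{prop:floor} gives $(2m^\ast/N')^j\ge 24^j(d-1)^j\ge 24j(d-1)^j$ for \emph{every} such slot, bounded or not. On the complement, $N'>N/(48e)$ grows linearly in $N$ and $m^\ast/N'\ge m^\ast/N\to\infty$. Hence the error of Theorem~\ref{thm:asymp}, which depends only on $(N',m^\ast/N')$, is uniform there, and so is $(24j)^{1/j}\to 1$. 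No intermediate regime remains. The slot $j=1$, outside Proposition~\ref{prop:floor}, is just $T(2,m^\ast)=m^\ast\ge 8(d-1)+16$.

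Your fallback through the single term of Lemma~\ref{lem:T-pos} can also be made to work, with two caveats. It needs a Stirling-grade estimate $\binom{m}{j}\ge (m-j)^j/j!$; the crude $(m/j)^j$ loses exactly the factor $e$ and fails near the top slot. It also needs the same kind of split, because the combined factor $\bigl(24j\cdot 4(2j-1)e\sqrt j\bigr)^{1/j}$ tends to $1$ only as $j\to\infty$ and must otherwise be paid for by $N/j$ being large.

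Second, the weight plays no role in the proof, and your handling of it would break the low slots if taken literally. In a profile-matched reading $B(m)=\sum_i|\lambda_i|B_i(m)$, so the $|\lambda_i|$ appear on both sides and cancel. All that is needed is $B_i(m^\ast)\ge(\text{slot-}i\text{ ceiling})$, and the paper's proof never uses $w=e^{o(N)}$. Your top-slot paragraph and closing remark instead absorb $w$ into each slot's ceiling, and then the low slots fail. For example, $w=e^{\sqrt N}$ is $e^{o(N)}$, yet at $j=2$ the floor $T(3,m^\ast)=m^\ast(m^\ast-1)$ is only of order $N^4$, while $w\cdot 48(N-1)^2$ is superpolynomial. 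The restriction that actually matters for the sharp constant is the scope. Outside the $\mathbb 1_X$/$\Eu_X$ blocks, the general $\nu$-envelope of Remark~\ref{rem:massey} is only $d^k e^{O(k)}$, which costs an $e^{O(1)}$ per root.
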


\begin{proof}
The first claim is Theorem~\ref{thm:nogo} with $d=N$. For the sharpened form, fix $\varepsilon\in(0,1)$ and put $m^\ast=\lceil(1+\varepsilon)N(d-1)/(4e)\rceil$ with $d=N$; we show that every slot comparison closes at $m^\ast$ once $N$ is large, whence $\mu_{I,B}(f_0)\le m^\ast$, and the claim follows on letting $\varepsilon\downarrow0$. Write $j=N-1-i$ and $N'=N-i=j+1$ as in Theorem~\ref{thm:nogo}. By Corollary~\ref{cor:slotwise}, Lemma~\ref{lem:polarbez} and Lemma~\ref{lem:taut}, the slot-$i$ data of $f_0$ is at most $8(d-1)^j+8j+8\le24j(d-1)^j$, of per-root size $(d-1)(1+o(1))$ as $j\to\infty$; and in a profile-matched reading the weights $|\lambda_i|$ cancel between the two sides, exactly as in the proof of Theorem~\ref{thm:nogo}.

\emph{Slot $j=1$:} $B_i(m^\ast)\ge T(2,m^\ast)=m^\ast\ge8(d-1)+16$ for all large $N$.

\emph{Low slots} ($j\ge2$, $N'\le N/(48e)$): Proposition~\ref{prop:floor} applies, since $m^\ast\ge2N\ge2N'$, and gives
\[
B_i(m^\ast)\;\ge\;\Bigl(\frac{2m^\ast}{N'}\Bigr)^{j}\;\ge\;\Bigl(96e\,\frac{m^\ast}{N}\Bigr)^{j}\;\ge\;\bigl(24(1+\varepsilon)(d-1)\bigr)^{j}\;\ge\;24\,j\,(d-1)^j,
\]
since $(24j)^{1/j}\le24$ for $j\ge1$.

\emph{High slots} ($j\ge2$, $N'>N/(48e)$): here $N'\to\infty$ and $m^\ast/N'\ge m^\ast/N\ge(1+\varepsilon)(d-1)/(4e)\to\infty$, uniformly over these slots, so Theorem~\ref{thm:asymp} applies with its error term controlled by $(N',\,m^\ast/N')$ alone:
\[
B_i(m^\ast)^{1/j}\;\ge\;T(N',m^\ast)^{1/(N'-1)}\;\ge\;\bigl(4e-o(1)\bigr)\frac{m^\ast}{N'}\;\ge\;\bigl(4e-o(1)\bigr)\frac{m^\ast}{N}\;\ge\;(1+\varepsilon)\bigl(1-o(1)\bigr)(d-1),
\]
which dominates the slot's per-root ceiling $(d-1)(1+o(1))$ for all large $N$. Profile-matched aggregation as in the proof of Theorem~\ref{thm:nogo} then gives $B(m^\ast)\ge I_{m^\ast}(f_0)$. The achieved bound is Theorem~1 of \cite{She26a}.
\end{proof}

\begin{remark}[A maximizer cap beyond intersection theory]\label{rem:max}
For the single invariant $\delta_{\mathrm{top}}$ the cap holds for \emph{any} monotone use whatsoever, not only intersection-theoretic readings: any argument of the shape ``$\delta_{\mathrm{top}}(f_0)>\sup\{\delta_{\mathrm{top}}(g):\dc(g)\le m\}\Rightarrow\dc(f_0)>m$'' is bounded by the determinantal complexity of any near-maximizer of $\delta_{\mathrm{top}}$ in degree $d$. The Fermat $\sum_{i\le N}x_i^d$ has $\delta_{\mathrm{top}}=d(d-1)^{n-1}$, within $(1+o(1))$ per root of the ceiling of Theorem~\ref{thm:ceiling}, and $\dc(\sum x_i^d)=O(dN)$: the power sum is computed by an arithmetic formula with $O(dN)$ nodes ($d-1$ multiplications per term and $N-1$ additions), and every polynomial computed by a formula of size $s$ is the determinant of an $(s+2)\times(s+2)$ matrix of affine-linear forms \cite{Val79}; hence the cap $O(dN)$ for this slot is mechanism-independent. The content of Theorem~\ref{thm:nogo} is that passing to the \emph{full} profile --- which is strictly finer, Section~\ref{sec:barriers} --- does not reopen the door.
\end{remark}

\begin{remark}[Scope; the two exits]\label{rem:scope}
Definition~\ref{def:reading} covers every argument that (a) extracts a cycle-class invariant of $V(f)$ in the sense of Definition~\ref{def:invariant} and (b) compares it to a solution count of the kernel-incidence systems bounded by intersection theory. It does not cover: scheme-theoretic invariants of the conormal or dual variety (Hilbert function, Castelnuovo--Mumford regularity of $X^\vee$), for which no bound in terms of $m$ is currently known in either direction; nor mechanisms that do not factor through solution counts of conormal-type systems at all. Section~\ref{sec:barriers} locates the known barriers relative to these exits.
\end{remark}

\section{Relation to known barriers and finer invariants}\label{sec:barriers}

\subsection{The full profile is strictly finer than $\delta_{\mathrm{top}}$}

\begin{proposition}[Witness pair]\label{prop:witness-pair}
Among plane quartics there exist an irreducible curve $C_1$ with exactly three nodes (rational) and an irreducible curve $C_2$ with exactly two ordinary cusps (elliptic); both occur in the classical Pl\"ucker classification. They satisfy
\[
\delta_{\mathrm{top}}(C_1)=12-2\cdot3=6=12-3\cdot2=\delta_{\mathrm{top}}(C_2),
\]
yet their profiles differ: $-\CC(\mathbb 1_{C_1})=[\Con C_1]+\sum_{p\in\{3\ \mathrm{nodes}\}}[\Con p]$ while $-\CC(\mathbb 1_{C_2})=[\Con C_2]+\sum_{p\in\{2\ \mathrm{cusps}\}}[\Con p]$ (slot-$0$ totals $9$ versus $8$), and the vanishing-cycle masses are $\{1,1,1\}$ versus $\{2,2\}$ (slot-$0$ $\varphi$-totals $3$ versus $4$).
\end{proposition}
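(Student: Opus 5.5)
The plan has three stages: realize the two curves concretely, compute $\delta_{\mathrm{top}}$ by the Plücker formula, and then read off the profiles from the Euler-obstruction expansion of $\mathbb 1_C$. Theorem~\ref{thm:transform} serves as an independent check on the profile computation.

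\emph{Existence.} I would realize both curves rather than cite the classification.
\begin{itemize}
\item For $C_1$, take the image of a generic degree-$4$ morphism $\PP^1\to\PP^2$. Its image is an irreducible quartic of geometric genus $0$. The genus formula $g=\binom{3}{2}-\delta$ forces total $\delta$-invariant $3$. For a generic parametrization these are three ordinary nodes; genericity excludes cusps and tacnodes by a dimension count on the space of parametrizations.
\item For $C_2$, take the standard quadratic (Cremona) transform of a suitable cubic, or write down an explicit equation. A concrete choice is $(x^2-yz)^2 = y^3 x$ up to adjustment; I would verify by a short local computation that it has exactly two $A_2$ points and is otherwise smooth. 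Irreducibility plus the genus formula ($3-2=1$) then give an elliptic curve.
\end{itemize}
This verification is the only genuinely hands-on step. I expect it to be the main obstacle, because one must rule out extra singularities and confirm that each singularity is ordinary: tangent cone a double line with $\mu=2$.

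\emph{$\delta_{\mathrm{top}}$.} The dual degree of a plane curve is $d(d-1)-2\#\text{nodes}-3\#\text{cusps}$. Here $\delta_{\mathrm{top}}=\delta_0(\Con C)$ because $C^\vee$ is a curve. This gives $12-6=6$ for both curves.

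\emph{Profiles.} Write $\mathbb 1_C$ in the Euler-obstruction basis. For a curve, $\Eu_C(p)$ equals the multiplicity of $C$ at $p$, which is $2$ for both a node and a cusp. Hence
\[
\mathbb 1_C=\Eu_C-\sum_{p\in\Sing C}\mathbb 1_p .
\]
Applying the normalization $\CC(\Eu_W)=(-1)^{\dim W}[\Con W]$ then gives
\[
-\CC(\mathbb 1_C)=[\Con C]+\sum_p[\Con p].
\]
Since $\Con p\cong\{p\}\times\PP^1$ has $\delta_0=1$, the slot-$0$ totals are $6+3=9$ and $6+2=8$.

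I would cross-check these totals with Theorem~\ref{thm:transform}, using $\Delta_0=\chi_2-2\chi_1+\chi_0$ with $\chi_1=4$ and $\chi_0=0$:
\begin{itemize}
\item For $C_1$: $\chi(C_1)=2-3=-1$, since the normalization $\PP^1$ has two preimages over each node. This gives $\Delta_0=-9$.
\item For $C_2$: $\chi(C_2)=0$, since the normalization is elliptic and cusps are unibranch. This gives $\Delta_0=-8$.
\end{itemize}
Both match the Euler-obstruction computation.

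\emph{Vanishing cycles.} Here $\varphi_f(p)=\chi(F_p)-1=-\mu_p$, with $\mu=1$ at a node and $\mu=2$ at a cusp. So $\CC(\varphi_f)$ is supported on the point conormals with masses $\{1,1,1\}$ versus $\{2,2\}$, giving slot-$0$ totals $3$ versus $4$. This distinguishes the profiles even though $\delta_{\mathrm{top}}$ agrees.
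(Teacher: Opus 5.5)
Your computation of $\delta_{\mathrm{top}}$, of the two profiles and of the vanishing-cycle masses is the paper's argument. It uses the Pl\"ucker class, the fact that $\Eu_C$ equals the multiplicity $2$ at nodes and cusps (so $\mathbb 1_C=\Eu_C-\sum_p\mathbb 1_p$ with unit point masses), and $\varphi_f(p)=-\mu_p$. Your cross-check through Theorem~\ref{thm:transform} ($\Delta_0=-9$ and $-8$ from $\chi(C_1)=-1$, $\chi(C_2)=0$) is correct and a useful independent confirmation.

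The paper settles existence only by the genus count $g=3-\delta-\kappa$ and the classical classification, so your constructive existence step goes beyond it. For $C_1$ it is fine, but the explicit candidate you give for $C_2$ does not work. For $F=(x^2-yz)^2-xy^3$ one has $F_z=-2y(x^2-yz)$. This forces every singular point of the curve to be $[0{:}0{:}1]$ or $[0{:}1{:}0]$, and the latter is smooth ($F_x=-1$ there). At $[0{:}0{:}1]$, put $w=y-x^2$ in the chart $z=1$; the equation becomes $w^2-x^7-3x^5w-3x^3w^2-xw^3$, an $A_6$ point with $\delta=3$ and $\mu=6$. So this is a rational quartic with a single $A_6$ cusp, not an elliptic bicuspidal quartic, and the local verification you planned would fail.

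Any of three fixes is quick.

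\emph{An explicit equation.} Take $y^2z^2=x^3(y+z)$. In the charts $z=1$ and $y=1$ it reads $y^2-x^3(1+y)$ and $z^2-x^3(1+z)$, which are ordinary cusps at $[0{:}0{:}1]$ and $[0{:}1{:}0]$. The partials $-3x^2(y+z)$, $2yz^2-x^3$, $2y^2z-x^3$ have no other common zeros. The curve is irreducible: a reducible quartic would have a point on two components, hence a singular point with at least two branches, while cusps are unibranch. Its genus is $3-2=1$.

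\emph{A precise Cremona construction.} Take a smooth cubic $C$ through two base points $p_1,p_2$. Put the third base point at $p_3=T_{r_1}C\cap T_{r_2}C$, where $T_{r_i}C$ is a tangent line through $p_i$ at a point $r_i\neq p_i$. The image has degree $2\cdot3-1-1=4$. The contracted lines $\overline{p_1p_3}$ and $\overline{p_2p_3}$ each meet $C$ residually in a simple tangency, so they become ordinary cusps. There are no further singularities, since the genus stays $1$.

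\emph{No construction.} Cite the classical existence, as the paper does.
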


\begin{proof}
$\delta_{\mathrm{top}}$ is the Pl\"ucker class $d(d-1)-2\#\mathrm{nodes}-3\#\mathrm{cusps}$ \cite{Pie78,GKZ94}. Both nodes and ordinary cusps of a plane curve have multiplicity $2$, hence local Euler obstruction $2$, hence $\mathbb 1_C=\Eu_C-\sum_p\mathbb 1_p$ and unit point masses in $\CC(\mathbb 1_C)$ in both cases; the point conormal has $\delta_0=1$, giving the totals $6+3$ and $6+2$. The Milnor numbers are $\mu(\text{node})=1$ and $\mu(\text{cusp})=2$, and $\varphi_f(p)=\nu_f(p)-1=-\mu(p)$ for plane curves ($\nu=0$ at a node, $-1$ at a cusp, both verified by the explicit fibres $\{xy=\varepsilon\}\simeq\C^\ast$ and $\{x^2-y^3=\varepsilon\}\simeq$ wedge of two circles). The genus count $g=3-\delta-\kappa$ gives $0$ and $1$ respectively, both realizable.
\end{proof}

So a method reading the full profile holds strictly more information than one reading $\delta_{\mathrm{top}}$ --- it separates inputs that $\delta_{\mathrm{top}}$ cannot --- and Theorem~\ref{thm:nogo} is correspondingly stronger than a cap on $\delta_{\mathrm{top}}$-arguments alone: \emph{fineness was never the binding constraint; magnitude is, and the magnitude is capped slot by slot.}

\subsection{Dual-variety dimension: a coarsening}
The barrier of Landsberg--Manivel--Ressayre \cite{LMR13} is phrased through the dimension of the dual variety: $\dim X^\vee$ reads off exactly the index of the lowest nonvanishing slot of the $\Eu_X$-block ($\delta_i(\Con X)=0$ for $i<\mathrm{codim}\,X^\vee-1$, $\ne0$ at the threshold), so it is a coarsening of the profile, and its known quadratic cap is the shadow of Theorem~\ref{thm:nogo} on that single bit of the profile. The same wall, seen from two sides.

\subsection{Singular-locus codimension: a shadow}
The mechanism of Alper--Bogart--Velasco \cite{ABV17} bounds $\dc$ through the codimension of $\Sing V(f)$; in profile terms, $\mathrm{codim}\,\Sing X$ is the codimension of the largest stratum (other than $X$ itself) carrying mass in \eqref{eq:positivity}. This is again a coarse functional of the profile, hence inside the scope of Theorem~\ref{thm:nogo}, consistent with the linear-in-$N$ ceilings they obtain.

\subsection{Rank methods: incomparable, independently capped}
Shifted partial derivatives and the related flattening ranks are \emph{not} functionals of the conormal profile: they are sensitive to the affine structure of $f$ (e.g.\ they distinguish projectively equivalent forms presented in different coordinates only through dimension counts, and conversely the profile does not determine flattening ranks). They therefore sit genuinely outside Definition~\ref{def:invariant}, and Theorem~\ref{thm:nogo} says nothing about them; the barriers of \cite{EGOW18,ELSW18} cap them at the near-quadratic scale for determinantal questions by independent arguments. The two walls are different walls; that they stand at the same height is, at present, a coincidence the field has not explained.

\subsection{What would evade the theorem}
Either exit of Remark~\ref{rem:scope} remains open. The proof localizes the obstruction precisely: within conormal geometry the only slack was the gap between the B\'ezout estimate and the true class of the incidence, and Theorem~\ref{thm:asymp} closes it --- a factor of $\frac{N-1}{2N-3}\to\frac12$ at the count level, $1+o(1)$ per root. Any future quadratic-to-superquadratic improvement for determinantal complexity must therefore change \emph{what is read} (scheme structure, arithmetic, or non-conormal data), not \emph{how well} the present reading is executed.

\appendix
\section{Worked validations}\label{app:examples}

Every numerical claim in this appendix was re-derived by hand and re-checked by exact integer or rational arithmetic during the preparation of this paper; the items marked (G) were additionally confirmed by Gr\"obner-basis computation in the audit from which the paper is drawn.

\subsection{The conic: two polar points and the B\'ezout junk point (G)}\label{app:conic}

Take $N=3$, $m=2$, $M(x)=\begin{pmatrix}x_1&x_2\\x_2&x_3\end{pmatrix}$, $\det M=x_1x_3-x_2^2$, the smooth conic; $T(3,2)=2\cdot1=2$ is its classical class, while $\Bbez(3,2)=3$. In the chart $x_3=v_2=u_2=1$, $v=(v_1,1)$, $u=(u_1,1)$, the corank-one system is
\[
x_1v_1+x_2=0,\quad x_2v_1+1=0,\quad u_1x_1+x_2=0,\quad u_1x_2+1=0,
\]
which solves to $x=(v_1^{-2},-v_1^{-1},1)$, $u_1=v_1$: the incidence is the rational curve expected. One contraction condition $c_1u_1v_1+c_2(u_1+v_1)+c_3=0$ at $c=(3,5,7)$ becomes $3v_1^2+10v_1+7=0$, $v_1\in\{-1,-\tfrac73\}$, giving exactly the two polar points
\[
\bigl(x;v;u\bigr)=\bigl((1,1,1);(-1,1);(-1,1)\bigr),\qquad
\bigl((9,21,49);(-\tfrac73,1);(-\tfrac73,1)\bigr),
\]
the second on the conic since $9\cdot49=441=21^2$. The B\'ezout count $3$ of the companion estimate corresponds to dropping one $u$-equation (the syzygy-blind system): the extra solution is then
\[
x=(0,0,1),\quad v=(1,0),\quad u=(-\tfrac53,1),
\]
which satisfies $M(x)v=0$, the kept equation $u_1x_1+x_2=0$ and the contraction $3u_1+5=0$, but \emph{not} $u^{\mathsf T}M(x)=0$ (its second entry is $1$). This is the junk that the kernel-bundle class $c_{m-1}(K)$ of Theorem~\ref{thm:class} removes, and the one-sidedness $T\le\Bbez$ of Proposition~\ref{prop:bez} made visible.

\subsection{Nodal cubic}

For an irreducible nodal plane cubic: $\chi$-vector $(\chi_0,\chi_1,\chi_2)=(0,3,1)$, so the transform gives $\Delta=(\Delta_0,\Delta_1)=(1-6,\,-(3-0))=(-5,-3)$. Independently, $-\CC(\mathbb 1_X)=[\Con X]+[\Con p]$ with $\delta(\Con X)=(4,3)$ (Pl\"ucker: $3\cdot2-2=4$) and $\delta(\Con p)=(1,0)$: totals $(5,3)$, matching $|\Delta|$ with the sign $(-1)^{n-1}=-1$.

\subsection{Three general planes in $\PP^3$}

$\chi$-vector $(0,3,3,4)$ ($\chi_2$: a triangle of lines, $\chi=3\cdot2-3$; $\chi_3$: inclusion--exclusion $9-6+1$). Transform: $\Delta=(4-6+3,\,-(3-6+0),\,3-0)=(1,3,3)$. Decomposition: $\mathbb 1_X=\sum\mathbb 1_{P_i}-\sum\mathbb 1_{L_{ij}}+\mathbb 1_q$ gives $\CC(\mathbb 1_X)=\sum[\Con P_i]+\sum[\Con L_{ij}]+[\Con q]$ (signs $(-1)^{\dim}$ and the alternating coefficients cancel), with multidegree vectors $e_2,e_1,e_0$ respectively: totals $3e_2+3e_1+e_0=(1,3,3)$. Exact match, all masses $+1$, as positivity requires.

\subsection{Whitney umbrella in $\PP^3$}

$X=V(F)$, $F=x_0x_1^2-x_2^2x_3$: a cubic surface singular along the line $L=\{x_1=x_2=0\}$, with pinch points at $p_1=[1{:}0{:}0{:}0]$ and $p_2=[0{:}0{:}0{:}1]$ (in the charts $x_0=1$ and $x_3=1$ the equation is the standard umbrella $x_1^2=x_2^2x_3$, resp. $x_0x_1^2=x_2^2$, with pinch at the origin). All invariants below are derived in place; the audit's independent Gr\"obner computation agrees with each.

\emph{Polar degrees.} $\delta_2=\deg X=3$. For $\delta_0$: the Gauss map is $\gamma(x)=[\,x_1^2:2x_0x_1:-2x_2x_3:-x_2^2\,]=:[y_0:y_1:y_2:y_3]$. On $X$ one has $x_0x_1^2=x_2^2x_3$, i.e.\ $x_0y_0=-x_3y_3$, so $x_0/x_3=-y_3/y_0$; substituting into
\[
\frac{y_1^2\,y_3}{y_0\,y_2^2}\;=\;\frac{4x_0^2x_1^2\cdot(-x_2^2)}{x_1^2\cdot 4x_2^2x_3^2}\;=\;-\frac{x_0^2}{x_3^2}\;=\;-\frac{y_3^2}{y_0^2}
\]
gives $y_0y_3\,(y_0y_1^2+y_2^2y_3)=0$ on the image, so $X^\vee\subseteq V(y_0y_1^2+y_2^2y_3)$. The cubic $y_0y_1^2+y_2^2y_3$ is irreducible (it is linear in $y_3$ with coprime coefficient $y_2^2$ and constant term $y_0y_1^2$), the Gauss map is generically finite ($y_0,y_3$ determine $x_1,x_2$ up to sign, then $y_1,y_2$ determine $x_0,x_3$), so the image is an irreducible surface inside an irreducible cubic surface, hence equal to it: $X^\vee=V(y_0y_1^2+y_2^2y_3)$ --- the umbrella is self-dual up to coordinates. Since $\Con X\to X^\vee$ is birational whenever the dual is a hypersurface (Section~\ref{sec:conormal}), $\delta_0=\deg X^\vee=3$. For $\delta_1$: with $q\in\PP^3$ generic, let $\Gamma_q=\overline{\{x\in X_{\mathrm{sm}}:q\in T_xX\}}$ be the polar curve, so that $\delta_1=\#(\Gamma_q\cap\text{generic plane})=\deg\Gamma_q$. As $1$-cycles, $X\cdot V(D_qF)=[\Gamma_q]+2\,[L]$: indeed $F\in(x_1,x_2)^2$ while $D_qF$ has order exactly $1$ along $L$ (its linear part at the point $[1{:}0{:}0{:}t]$, in the transverse coordinates $u=x_1$, $v=x_2$, is $2q_1u-2q_2t\,v$), and the transverse-slice ideal $(u^2-tv^2,\;2q_1u-2q_2tv)$ has colength $2$ at the generic point of $L$ (substituting $u=(q_2t/q_1)v$ gives $v^2$ times a unit for all but one value of $t$). Hence $\deg\Gamma_q=3\cdot2-2\cdot1=4$ and $\delta_1=4$. So $\delta(\Con X)=(3,4,3)$.

\emph{Euler characteristics.} On the chart $x_0=1$ the affine umbrella $\{x_1^2=x_2^2x_3\}$ is the image of $\varphi:\C^2\to\C^3$, $(s,t)\mapsto(st,\,s,\,t^2)$, which is bijective over the complement of the handle $h=\{x_1=x_2=0\}$, two-to-one over the punctured handle, and one-to-one over the origin; additivity of $\chi$ along $\varphi$ gives $1=\chi(\C^2)=\chi(U_{\mathrm{aff}}\setminus h)+2\,\chi(\C^\ast)+1$, so $\chi(U_{\mathrm{aff}}\setminus h)=0$ and $\chi(U_{\mathrm{aff}})=0+\chi(\C)=1$. At infinity, $X\cap\{x_0=0\}=V(x_2^2x_3)\subset\PP^2$ is, as a set, two lines meeting in a point, $\chi=3$. Hence $\chi(X)=1+3=4$. A generic plane section is an irreducible cubic with one node (the point where the plane meets $L$; the singularity is a node because $X$ has transverse type $A_1$ along the generic point of $L$, local equation $u^2-tv^2$), so $\chi_2=1$; and $\chi_1=3$, $\chi_0=0$. The $\chi$-vector is $(0,3,1,4)$.

\emph{Assembly.} The transform gives $\Delta=(4-2+3,\,-(1-6+0),\,3)=(5,5,3)$; subtracting $\delta(\Con X)=(3,4,3)$ leaves the residue $(2,1,0)=2\,[\Con\mathrm{pt}]+[\Con L]$: mass $1$ on the singular line, total mass $2$ on the pinch points. The two pinch masses are equal --- hence $(1,1)$ --- because the linear involution $(x_0,x_1)\leftrightarrow(x_3,x_2)$ sends $F\mapsto-F$, preserves $X$, and exchanges $p_1\leftrightarrow p_2$. L\^e--Teissier confirms: $\Eu_X(\text{pinch})=m_0-m_0(\text{polar curve})=2-1=1$, so $\mathbb 1_X=\Eu_X-\Eu_L+\mathbb 1_{p_1}+\mathbb 1_{p_2}$ on the nose.

\subsection{Non-reduced examples}\label{app:nonreduced}

\textbf{(E4a)} $f=x_0^2x_1$ in $\PP^2$, $d=3$, $X_{\mathrm{red}}=L_0\cup L_1$, $p=L_0\cap L_1$. Local Milnor fibres: $z^2\cdot\mathrm{unit}$ on $L_0\setminus p$ ($\nu=2$), $z\cdot\mathrm{unit}$ on $L_1\setminus p$ ($\nu=1$), $z^2w$ at $p$ ($\nu=0$: the fibre $\{z^2w=\varepsilon\}$ is a $2$-sheeted cover of a punctured disc, $\chi=0$, the case $a=2$, $g=w$ of Lemma~\ref{lem:taut}). So $\nu_f=2\cdot\mathbb 1_{L_0}+\mathbb 1_{L_1}-3\cdot\mathbb 1_p$, $\chi$-vector $(0,3,3)$ --- note $\chi_1=\sum e_i\deg g_i=d$, the multiplicity-weighted degree --- and the transform gives $\Delta(\nu_f)=(-3,-3)$, matching $\CC(\nu_f)=-2[\Con L_0]-[\Con L_1]-3[\Con p]$ slot by slot. The sign $(-1)^{n-1}=-1$ makes all masses positive, as the perversity of shifted nearby cycles requires.

\textbf{(E4b)} $f=g^2$, $g$ a smooth conic, $d=4$. Then $\nu_f\equiv2$ on the conic, $\chi$-vector $(0,4,4)=2\cdot(0,2,2)$, and $\Delta(\nu_f)=(-4,-4)=2\,\Delta(\mathbb 1_{V(g)})$: the multiplicity scales the profile exactly and is bounded by the degree of the \emph{given} equation, as the ceiling uses.

\subsection{The determinantal quintic threefold and small values}

For $(N,m)=(5,5)$: the generic determinantal quintic $X_M\subset\PP^4$ has $\Sigma_2$ of expected codimension $4$, a finite set of ordinary nodes counted by Thom--Porteous as $[h^4]\,(c_2^2-c_1c_3)$ with $c=(1-h)^{-5}$: $c_1=5$, $c_2=15$, $c_3=35$, so $15^2-5\cdot35=50$ nodes. Teissier's formula drops the class by $\mu+\mu^{(1)}=2$ per node: $\delta_{\mathrm{top}}=5\cdot4^3-2\cdot50=320-100=220=T(5,5)$. Small values of $T(N,m)$ (exact arithmetic, $m=2,\dots,7$):
\[
\begin{array}{c|cccccc}
N\backslash m&2&3&4&5&6&7\\\hline
3&2&6&12&20&30&42\\
4&2&12&36&80&150&252\\
5&0&12&68&220&540&1120\\
6&0&6&84&430&1440&3780
\end{array}
\]
with $T(3,m)=m(m-1)$ and $T(4,m)=m(m-1)^2$ visibly the smooth dual degrees. The entry $T(5,2)=0$ is itself a check: $\det$ of a generic $2\times2$ linear matrix in five variables is a rank-$4$ quadric in $\PP^4$, whose dual is not a hypersurface, so $\delta_{\mathrm{top}}=0$.


\subsection*{Disclosure of AI assistance}
The mathematical content of this paper was developed in a human-directed adversarial pipeline using large language models as generator and as independent verifier. The author directed the programme, checked all definitions, statements and proofs, and verified the numerical claims by exact computation, as described in the verification note of Section~\ref{sec:intro}; statements resting on a citation rather than a self-contained argument are flagged inline.

\end{document}